\newtheoremstyle{mythm}{3pt}{3pt}{}{16pt}{\bfseries}{:}{.5em}{}
\theoremstyle{mythm}
\newtheorem{theorem}{Theorem}
\newtheorem{example}{Example}
\newtheorem{definition}{Definition}
\newtheorem{remark}{Remark}
\newtheorem{proposition}{Proposition}
\newtheorem{corollary}{Corollary}
\newtheorem{lemma}{Lemma}
\newtheorem{construction}{Construction}
\newtheorem{condition}{Condition}
\newcommand{\tabincell}[2]{\begin{tabular}{@{}#1@{}}#2\end{tabular}}
\begin{document}
\title{Novel Frameworks for Coded Caching  via Cartesian Product with Reduced Subpacketization
\author{Jinyu~Wang, Minquan~Cheng, Kai~Wan,~\IEEEmembership{Member,~IEEE,}
and~Giuseppe~Caire,~\IEEEmembership{Fellow,~IEEE}
\thanks{J. Wang and M. Cheng are with Guangxi Key Lab of Multi-source Information Mining $\&$ Security, Guangxi Normal University,
Guilin 541004, China  (e-mail: mathwjy@163.com, chengqinshi@hotmail.com). J. Wang is also with College of Mathematics and Statistics, Guangxi Normal University, Guilin 541004, China.}
\thanks{K. Wan and G. Caire are with the Electrical Engineering and Computer Science Department, Technische Universit\"{a}t Berlin,
10587 Berlin, Germany (e-mail: kai.wan@tu-berlin.de, caire@tu-berlin.de).  The work of K.~Wan and G.~Caire was partially funded by the European Research Council under the ERC Advanced Grant N. 789190, CARENET.}
}
}

\date{}
\maketitle

\begin{abstract}
Caching prefetches some library content at users' memories during the off-peak times (i.e., {\it placement phase}), such that the number of transmissions during the peak-traffic times (i.e., {\it delivery phase}) are reduced.
A coded caching strategy was originally proposed  by Maddah-Ali and Niesen (MN)   leading to a multicasting gain compared to the conventional uncoded caching, where each message in the delivery phase is useful to multiple users simultaneously. However, the MN coded caching scheme suffers from the high subpacketization which makes it impractical.
In order to reduce the subpacketization while retain the multicast opportunities in the delivery phase, Yan et al. proposed a combinatorial structure called placement delivery array (PDA) to design   coded caching schemes. In this paper, we propose two novel frameworks for constructing PDA via Cartesian product, which constructs a PDA for $mK_1$ users by the $m$-fold Cartesian product of a PDA for $K_1$ users.
By applying  the proposed frameworks to some  existing  PDAs, three novel caching schemes  are obtained which can significantly reduce the subpacketization of the MN scheme while slightly increasing the needed number of transmissions.
For instance,  for the third scheme which works for any number of users and any memory regime,     while reducing the coded caching gain by one,  the needed subpacketization   is at most $O\left(\sqrt{\frac{K}{q}}2^{-\frac{K}{q}}\right)$ of that of the MN scheme, where $K$ is the number of users, $0<z/q<1$ is the memory ratio of each user, and $q,z$ are coprime.

\end{abstract}

\begin{IEEEkeywords}
Coded caching, placement delivery array,   Cartesian product.
\end{IEEEkeywords}
\section{Introduction}
Caching is an efficient technique to reduce peak-time traffic by prefetching contents near end-users during off-peak times, thereby reducing the transmission delay or equivalently increasing the bandwidth in communication networks. Conventional uncoded caching techniques focus on predicting the user demands for making an appropriate prefetching strategy, thus realizing a ``local caching gain", which scales with the amount of local memory \cite{BGW}. Maddah-Ali and Niesen (MN) showed that in addition to a local caching gain, coded caching can also attain a ``global caching gain", which scales with the global amount of memory in the network \cite{MN}, since each transmission of the MN scheme can serve a number of users simultaneously.

The MN coded caching problem considers a   $(K,M,N)$ caching system, where a server has a library of $N$ equal-sized files and broadcasts to $K$ users through an error-free shared-link. Each user has a cache of size $M$ files. A coded caching scheme consists of two phases:  {\it placement  } and {\it delivery}.
In the placement phase, each file is   divided into $F$ packets and each user's cache is filled without knowledge of the user's future demands. If packets are directly stored in each user's cache without coding, we call it uncoded placement. The quantity $F$ is referred to as the {\em subpacketization}.
In the delivery phase, each user  requests one arbitrary file and the server broadcasts coded packets such that each user can recover its desired file.
The worst-case load over all possible   demands is defined as the {\em transmission load} (or simply {\ load}) $R$, that is, the number of files that must be communicated so that any demands can be satisfied.
The coded caching gain of a coded caching scheme is defined as $\frac{K(1-\frac{M}{N})}{R}$, where $K(1-\frac{M}{N})$ is the load of the conventional uncoded caching scheme.
The objective of the MN coded caching problem is to minimize the load, i.e., maximize the coded caching gain.

The MN coded caching scheme, with a combinatorial design in the placement phase and a linear coding in the delivery phase, achieves a coded caching gain equal to $t+1$, where $t = \frac{K M}{N} $ is an integer.
If $\frac{K M}{N}$ is not an integer, then the lower convex envelope of the memory-load tradeoff can be achieved by memory sharing.
The achieved  load of the MN scheme is optimal under the constraint of uncoded placement and $N\geq K$~\cite{WTP2016,WTP2020}, and generally
 order optimal within a factor of $4$ \cite{GR}.  When a file is requested multiple times, by removing the redundant transmissions in the MN scheme, Yu, Maddah-Ali, and Avestimehr (YMA) designed a scheme which is optimal under the constraint of uncoded placement for $K>N$ \cite{YMA2018} and is order optimal to within a factor of $2$ \cite{YMA2019}. There are some other studies focusing on the load. For example, by using an optimization framework under a specific caching rule, Jin. et al. \cite{JCLC} derived the average load under conditions such as non-uniform file popularity and various demand patterns. When $K\leq N$ and all files have the same popularity, they also showed that the minimum load is exactly that of the MN scheme.

However, a main practical issue of the MN scheme is its   subpacketization  which   increases exponentially with the number of users.
A number of works in the literature were devoted to reduce the subpacketization of the MN scheme while increasing the load as the tradeoff.
The first scheme with lower supacketization compared with the MN scheme was proposed by Shanmugam et al. by dividing the users into groups and let the users in the same group have the same cache \cite{SJTLD}. Yan et al. \cite{YCTC} proposed a combinatorial structure called placement delivery array (PDA), which covers the MN scheme as a special case, referred to as the MN PDA.
Subsequently, Shangguan et al. \cite{SZG} showed that many previously existing coded caching schemes can be represented by PDA.
Based on
the concept of PDA, various caching schemes with lower subpacketzation than the MN scheme were proposed in~\cite{CJYT,CJWY,YCTC,CWZW}, some of which are listed in
  Table \ref{knownPDA}.
  Other combinatorial  construction structures   to reduce the subpacketization in the literature include  the linear block codes \cite{TR}, the special $(6,3)$-free hypergraphs \cite{SZG}, the $(r,t)$ Ruzsa-Szem\'{e}redi graphs \cite{STD}, the strong edge coloring of bipartite graphs \cite{YTCC},  the projective space \cite{K}, and other combinatorial designs such as \cite{ASK}.

\begin{table}
  \centering
  \caption{Some existing PDA schemes.  \label{knownPDA}}
  \begin{tabular}{|c|c|c|c|c|c|}
\hline
Scheme                          &\tabincell{c}{Number of\\ users $K$} & \tabincell{c}{Memory \\ratio $\frac{M}{N}$}& \tabincell{c}{Coded caching \\gain $g$}& Load $R$ & Subpacketization $F$ \\
\hline
\tabincell{c}{MN scheme in\cite{MN},\\ any $t=1,\ldots,k-1$}    & $k$  & $\frac{t}{k}$ & $t+1$ & $\frac{k-t}{t+1}$ & ${k\choose t}$\\
 \hline

 \tabincell{c}{Grouping method in \cite{SJTLD},\\any $n,k,t\in\mathbb{Z}^{+}$, $t<k$} &
 $nk$ &  $\frac{t}{k}$  & $t+1$ & $\frac{n(k-t)}{t+1}$ & ${k\choose t}$\\
\hline

 \multirow{2}{*}{\tabincell{c}{Partition PDA scheme\\in \cite{YCTC}, any $n,k\in\mathbb{Z}^{+}$}} &
 \multirow{2}{*}{$(n+1)k$} &
 $\frac{1}{k}$  & $n+1$ & $k-1$ & $k^n$\\ \cline{3-6}
 & & $\frac{k-1}{k}$& $(n+1)(k-1)$ & $\frac{1}{k-1}$ & $(k-1)k^n$ \\
\hline

 \multirow{2}{*}{\tabincell{c}{Scheme in \cite{SZG}, any \\$n,k,b\in\mathbb{Z}^{+}$, $b\leq n$}} &
 ${n\choose b}k^b$ &  $1-\left(\frac{k-1}{k}\right)^b$  & ${n\choose b}$ & $(k-1)^b$ & $k^n$\\ \cline{3-6}
 &&  $1-\frac{1}{k^b}$ & $(k-1)^b{n\choose b}$ & $\frac{1}{(k-1)^b}$& $(k-1)^bk^n$ \\

 \hline

\multirow{2}{*}{\tabincell{c}{Scheme in \cite{CJYT}, \\any $n,k,t,b\in\mathbb{Z}^{+}$,\\ $t<k$, $b\leq n$}} &
 $(n+1)k$ &  $\frac{t}{k}$  & $(n+1)\lfloor\frac{k-1}{k-t}\rfloor$ & $\frac{k-t}{\lfloor\frac{k-1}{k-t}\rfloor}$ & $\lfloor\frac{k-1}{k-t}\rfloor k^n$\\ [5pt] \cline{2-6}
 & ${n\choose b}k^b$ & $1-\left(\frac{k-t}{k}\right)^b$ & $\lfloor\frac{k-1}{k-t}\rfloor^b{n\choose b}$ & $\left(\frac{k-t}{\lfloor\frac{k-1}{k-t}\rfloor}\right)^b$& $\lfloor\frac{k-1}{k-t}\rfloor^bk^n$\\ [5pt]

 \hline

\multirow{2}{*}{\tabincell{c}{Scheme in \cite{TR},\\ any $n,l\in \mathbb{Z}^+$,\\ some $x,k\in \mathbb{Z}^+$ with\\ $n<l$ and $(n+1)|lx$}} & \multirow{2}{*}{$lk$} & $\frac{1}{k}$ & $n+1$ & $\frac{l}{n+1}(k-1)$ & $k^nx$\\ [11pt]\cline{3-6}
&& \tabincell{c}{$1-\frac{n+1}{lk}$\\ $\qquad$}  & $l(k-1)$ & $\frac{n+1}{l(k-1)}$& $(k-1)k^n\frac{lx}{n+1}$\\ [11pt]
\hline


\hline
   \end{tabular}

\end{table}

\subsection{Contributions}
In this paper, we propose two frameworks for constructing PDA via Cartesian product.
As applications of the frameworks, three new coded caching schemes are obtained and their performance analysis is subsequently provided.  Specifically, the following results are obtained.
\begin{itemize}
\item {\bf Novel frameworks for constructing PDA:} Given a PDA $\mathbf{P}$ with number of users $K_1$, coded caching gain $g$ and subpacketization $F_1$, if it satisfies some special conditions, i.e., Condition \ref{proper1} with parameter $\lambda$ given in Section \ref{sub:ConstructionPDATH1}, for any positive integer $m$, a new PDA with number of users $mK_1$ can be obtained by Construction \ref{constr1} in Section \ref{sub:ConstructionPDATH1} via Cartesian product, whose  memory ratio is the same as $\mathbf{P}$, coded caching gain is  $mg$ and subpacketization is $\lambda (\frac{F_1}{\lambda})^m$.
    Otherwise, if it satisfies some weaker conditions, i.e., Condition \ref{proper2} in Section \ref{sub:Applications of TH1}, which all previously existing PDAs satisfy,  for any positive integer $m$, a new PDA with number of users $mK_1$  can be obtained by Construction \ref{constructpmk} in Section \ref{sub:Applications of TH1} via Cartesian product, whose memory ratio is the same as $\mathbf{P}$, coded caching gain is $m(g-1)$ and subpacketization is $(g-1)F_1^m$.


 Consequently, if we want to construct a class of PDAs with good performance so then obtain a good coded caching scheme, we only need to find a good base PDA. Moreover, if the base PDA $\mathbf{P}$ has the minimum load, such as the MN PDA, a class of PDAs with approximate minimum load can be obtained.

\item {\bf Applications of the novel frameworks:} Three new coded caching schemes are obtained by using the above frameworks. The first scheme, referred to as Scheme A,  has smaller load and much smaller subpacketization than the schemes in \cite{SJTLD,CJYT} under some parameters. The second scheme,
    referred to as Scheme B, has much smaller subpacketization than the grouping method in \cite{SJTLD} when their loads are almost the same; it also has smaller load than the scheme in \cite{CJYT} when their subpacketizations are of the same level.
    Most of all, for any number of users $K$ and for any memory ratio $\frac{M}{N}=\frac{z}{q}$, where $q$ and $z$ are coprime, the coded caching gain of the third scheme (referred to as Scheme C) is only one less than that of the $K$-user MN scheme, while the subpacketization of Scheme C is at most $O\left(\sqrt{\frac{K}{q}}2^{-\frac{K}{q}}\right)$ of that of the MN scheme.

\end{itemize}

The rest of this paper is organized as follows. In Section \ref{system}, the   $(K,M,N)$ caching system and the definition of PDA   are introduced. In Section \ref{sec_constr1}, two frameworks for constructing PDA via Cartesian product are proposed. In Section \ref{sec:applications},   three novel caching  schemes  with low subpakcketization are obtained by using the proposed   frameworks.
 Finally,   Section \ref{conclusion} concludes the paper while some proofs are provided in the Appendices.

{\bf Notations:} In this paper, the following notations will be used unless otherwise stated.
\begin{itemize}
\item 
$[a:b]:=\left\{ a,a+1,\ldots,b\right\}$.  
\item $|\cdot|$ denotes the cardinality of a set.
\item $mod(a,q)$ denotes the least non-negative residue of $a$ modulo $q$.
\item
$
<a>_q:=\begin{cases}
       mod(a,q)\ \ \ \text{if}\ \ mod(a,q)\neq 0,\\
       \ \ \ q \ \ \ \ \ \ \ \ \ \  \text{if}\ \ mod(a,q)=0.\\
       \end{cases}
$
\item $[a:b]_q:=\{<a>_q,<a+1>_q,\ldots,<b>_q\}$.
\item Let $B=\{b_1,b_2,\ldots,b_{n}\}$ be a  set with $b_1<b_2<\ldots<b_n$, for any $i\in[1:n]$, $B[i]$ denotes the $i$-th smallest element of $B$, i.e.,$B[i]=b_i$.
\item Let $T=\{(t_{1,1},t_{1,2}),(t_{2,1},t_{2,2}),\ldots,(t_{n,1},t_{n,2})\}$ be a set of $2$-length vectors with $t_{1,2}<t_{2,2}<\ldots<t_{n,2}$, for any $i\in[1:n]$, $T[i]$ denotes the element of $T$ whose second coordinate is the $i$-th smallest in the second coordinate of all elements, i.e., $T[i]=(t_{i,1},t_{i,2})$.
\item Let ${\bf a}$ be a vector with length $n$, for any $i\in[1:n]$, ${\bf a}[i]$ denotes the $i$-th coordinate of ${\bf a}$.
For any subset $\mathcal{T}\subseteq [1:n]$, ${\bf a}[\mathcal{T}]$ denotes a vector with length $|\mathcal{T}|$ obtained by taking only the coordinates with subscript $i\in \mathcal{T}$.
\item For any positive integers $n$ and $t$ with $t< n$, let ${[1:n]\choose t}=\{\mathcal{T}\ |\   \mathcal{T}\subseteq [1:n], |\mathcal{T}|=t\}$, i.e., ${[1:n]\choose t}$ is the collection of all $t$-sized subsets of $[1:n]$.
\item For any two vectors ${\bf x}$ and ${\bf y}$ with the same length, $d({\bf x},{\bf y})$ is the number of coordinates in which ${\bf x}$ and ${\bf y}$ differ.
\item For any array  $\mathbf{P}$ with dimension $m\times n$, $\mathbf{P}(i,j)$ represents the element located in the $i$-th row and the $j$-th column of $\mathbf{P}$.
\item Let $\mathbf{P}$  be an array composed of a specific symbol $``*"$  and $S$ integers, for any integer $a$, $\mathbf{P}+a$ denotes a new $F\times K$ array
which is obtained by adding each element in $\mathbf{P}$  by $a$, where by contention $*+a=*$.
\end{itemize}

\section{System Model}
\label{system}

In this paper, we focus on a   $(K,M,N)$ caching system, illustrated in  Fig. \ref{fig-origin-system}. A server containing $N$ equal sized files is  connected to $K$ users through an error-free shared-link. Each user has a cache memory of size $M$ files where $0\leq M \leq N$.
Denote the $N$ files and $K$ users by $\mathcal{W}=\{W_1,W_2,\ldots,W_{N}\}$ and $\mathcal{K}=\{1,2,\ldots,K\}$ respectively. For such a system, an $F$-division coded caching scheme consists of two phases \cite{MN}:
\begin{itemize}
\item {\bf Placement phase:} Each file $W_n$ is divided equally into $F$ packets,
i.e., $W_{n}=\{W_{n,j}|j=1,2,\ldots,F\}$. Then some packets (or coded packets) are cached by each user from the server. If packets are cached directly without coding, we call it uncoded placement.
  Let $\mathcal{Z}_k$ denote the contents cached by user $k$. The placement phase is done without knowledge of the demands in the delivery phase.

\item {\bf Delivery phase:} Each user requests one arbitrary file from the server. Assume that the demand vector is $\mathbf{d}=(d_1,d_2,\cdots,d_{K})$, which represents that user $k$ requests $W_{d_k}$, where $d_k\in[1:N]$ and $k\in [1:K]$. Then the server broadcasts coded messages of total size $R_{{\bf d}} F$ packets to the users, such that each user can recover its desired file.
\end{itemize}
\begin{figure}
\centering
\includegraphics[width=3in]{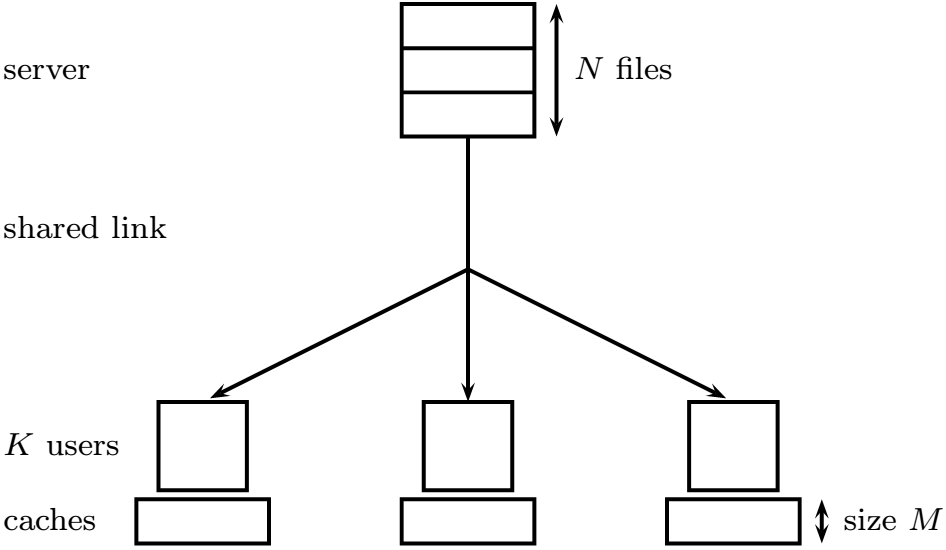}
\vskip 0.2cm
\caption{A $(K,M,N)$ caching system with $K=3,M=1,N=3$}\label{fig-origin-system}
\end{figure}
The  load of a coded caching scheme is defined as $R:=\max_{{\bf d} \in [1:N]^K} R_{{\bf d}}$,
  that is, the maximal normalized number of transmissions  over all possible  demands.

Placement delivery array proposed by Yan et al. \cite{YCTC} is a combinatorial structure, which designs coded caching schemes with uncoded placement and clique-covering delivery phase, i.e., each multicast  message transmitted in the delivery phase is a linear combination of some file packets and useful to a subset of users, each of which caches all  except one packets in this multicast message.
\begin{definition}(\cite{YCTC})
\label{def-PDA}
For  positive integers $K,F, Z$ and $S$, an $F\times K$ array  $\mathbf{P}$ composed of a specific symbol $``*"$  and $S$ integers in $[1:S]$, is called a $(K,F,Z,S)$ placement delivery array (PDA) if it satisfies the following conditions:
\begin{enumerate}
  \item [{\bf C1:}] The symbol $``*"$ appears $Z$ times in each column;
  \item [{\bf C2:}] Each integer in $[1:S]$ occurs at least once in the array;
  \item [{\bf C3:}] For any two distinct entries $\mathbf{P}(j_1,k_1)$ and $\mathbf{P}(j_2,k_2)$, if    $\mathbf{P}(j_1,k_1)=\mathbf{P}(j_2,k_2)=s\in[1:S]$, then $\mathbf{P}(j_1,k_2)=\mathbf{P}(j_2,k_1)=*$, i.e., the corresponding $2\times 2$  subarray formed by rows $j_1,j_2$ and columns $k_1,k_2$ must be of the following form
    \begin{eqnarray*}
    \left(\begin{array}{cc}
      s & *\\
      * & s
    \end{array}\right)~\textrm{or}~
    \left(\begin{array}{cc}
      * & s\\
      s & *
    \end{array}\right).
  \end{eqnarray*}

\end{enumerate}
\hfill $\square$
\end{definition}
A $(K,F,Z,S)$ PDA $\mathbf{P}$ is said to be a $g$-regular $(K,F,Z,S)$ PDA, $g$-$(K,F,Z,S)$ PDA or $g$-PDA for short, if each integer in $[1:S]$  appears exactly $g$ times in $\mathbf{P}$.

Based on a $(K,F,Z,S)$ PDA, an $F$-division coded caching scheme for the $(K,M,N)$ caching system, where $M/N=Z/F$,   can be obtained by using Algorithm \ref{alg:PDA}.
\begin{algorithm}[htb]
\caption{Coded caching scheme based on PDA in \cite{YCTC}}\label{alg:PDA}
\begin{algorithmic}[1]
\Procedure {Placement}{$\mathbf{P}$, $\mathcal{W}$}
\State Split each file $W_n\in\mathcal{W}$ into $F$ packets, i.e., $W_{n}=\{W_{n,j}\ |\ j\in[1:F]\}$.
\For{$k\in [1:K]$}
\State $\mathcal{Z}_k\leftarrow\{W_{n,j}\ |\ \mathbf{P}(j,k)=*, j\in[1:F], n\in [1:N]\}$
\EndFor
\EndProcedure
\Procedure{Delivery}{$\mathbf{P}, \mathcal{W},{\bf d}$}
\For{$s=1,2,\cdots,S$}
\State  Server sends $\bigoplus_{\mathbf{P}(j,k)=s,j\in[1:F],k\in[1:K]}W_{d_{k},j}$.
\EndFor
\EndProcedure
\end{algorithmic}
\end{algorithm}

\begin{example}
\label{ex1}
It is easy to verify that the following array is a $(4,4,2,4)$ PDA,
\begin{eqnarray}
\label{pda2}
\mathbf{P}_1=\left(\begin{array}{cccc}
*&*&3&1\\
2&*&*&4\\
1&3&*&*\\
*&2&4&*
\end{array}\right).
\end{eqnarray}

Based on the above PDA $\mathbf{P}_1$,  a $4$-division $(4,2,4)$ coded caching scheme can be obtained by using Algorithm \ref{alg:PDA}.
\begin{itemize}
   \item \textbf{Placement Phase}: From Line 2 of  Algorithm \ref{alg:PDA}, each file $W_n$ is split into $4$ packets, i.e., $W_n=\{W_{n,1},W_{n,2},W_{n,3},W_{n,4}\}$, $n\in [1:4]$. Then by Lines 3-5, the contents cached by each user are
\begin{eqnarray*}
       \mathcal{Z}_1=\left\{W_{n,1},W_{n,4}\ |\ n\in[1:4]\right\},
       \mathcal{Z}_2=\left\{W_{n,1},W_{n,2}\ |\ n\in[1:4]\right\},\\
       \mathcal{Z}_3=\left\{W_{n,2},W_{n,3}\ |\ n\in[1:4]\right\},
       \mathcal{Z}_4=\left\{W_{n,3},W_{n,4}\ |\ n\in[1:4]\right\}.
\end{eqnarray*}
   \item \textbf{Delivery Phase}: Assume that the request vector is $\mathbf{d}=(1,2,3,4)$. By Lines 8-10, the messages sent by the server are listed in Table \ref{table1}. Then each user can recover its requested file. For example, user $1$ requests the file $W_1=\{W_{1,1},W_{1,2},W_{1,3},W_{1,4}\}$ and has cached $W_{1,1}$ and $W_{1,4}$. At time slot $1$, it can receive $W_{1,3}\oplus W_{4,1}$, then it can recover $W_{1,3}$ since it has cached $W_{4,1}$. At time slot $2$, it can receive $W_{1,2}\oplus W_{2,4}$, then it can recover $W_{1,2}$ since it has cached  $W_{2,4}$. The load is $R=\frac{4}{4}=1$.
  \begin{table}[!htp]
  \centering
  \caption{Delivery steps in Example \ref{ex1} }\label{table1}
  \small{
  \begin{tabular}{|c|c|}
   \hline
   Time Slot& Transmitted Signnal\\ \hline
   $1$&$W_{1,3}\oplus W_{4,1}$\\ \hline
   $2$&$W_{1,2}\oplus W_{2,4}$\\ \hline
   $3$& $W_{2,3}\oplus W_{3,1}$\\ \hline
   $4$& $W_{3,4}\oplus W_{4,2}$\\ \hline
  \end{tabular}}
\end{table}
\end{itemize}
\hfill $\square$
\end{example}

A $(K,F,Z,S)$ PDA $\mathbf{P}$ is an $F\times K$ array composed of a specific symbol $``*"$ and $S$ integers, where columns represent the user indexes and rows represent the packet indexes. If $\mathbf{P}(j,k)=*$, then user $k$ caches the $j$-th packet of all files. The condition C1 of Definition \ref{def-PDA} implies that all the users have the same memory size and the memory ratio is $\frac{M}{N}=\frac{Z}{F}$. If $\mathbf{P}(j,k)=s$ is an integer, it means that the $j$-th packet of all files is not stored by user $k$. Then the server broadcasts a multicast message (i.e. the XOR of all the requested packets indicated by $s$) to the users at time slot $s$. The condition C3 of Definition \ref{def-PDA} guarantees that each user can recover its requested packet, since it has cached all the other packets in the multicast message except its requested one. The occurrence number of   integer $s$ in $\mathbf{P}$, denoted by $g_s$, is the coded caching gain at time slot $s$, since the coded packet broadcasted at time slot $s$ is useful for $g_s$ users. The condition C2 of Definition \ref{def-PDA} implies that the number of multicast messages transmitted by the server is exactly $S$, so the load is $R=\frac{S}{F}$.

\begin{lemma}(\cite{YCTC})
\label{th-Fundamental}If there exits a $(K,F,Z,S)$ PDA, there always exists an $F$-division $(K,M,N)$ coded caching scheme with the memory ratio $\frac{M}{N}=\frac{Z}{F}$ and   load $R=\frac{S}{F}$.
\hfill $\square$
\end{lemma}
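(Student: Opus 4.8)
The plan is to instantiate Algorithm \ref{alg:PDA} on the given $(K,F,Z,S)$ PDA $\mathbf{P}$ and then check three things in turn: that the placement step uses exactly the claimed amount of cache, that the delivery step sends exactly $S/F$ file-units, and that every user can decode its requested file for an arbitrary demand vector. First I would treat the placement. By Line~4 of Algorithm \ref{alg:PDA}, user $k$ stores the packet $W_{n,j}$ of every file exactly when $\mathbf{P}(j,k)=*$. Condition C1 of Definition \ref{def-PDA} says $*$ appears $Z$ times in column $k$, so user $k$ caches $Z$ packets of each of the $N$ files, i.e.\ $ZN$ of the $FN$ packets in the library. Hence its cache size is $M=ZN/F$ files and the memory ratio is $M/N=Z/F$, as required; and the placement is clearly oblivious to the demand vector $\mathbf{d}$.

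Next I would count the transmissions. By Condition C2 every integer of $[1:S]$ occurs in $\mathbf{P}$, so the loop in Lines~8--10 produces exactly $S$ coded messages, each a single XOR of packets and thus of size $1/F$ file. Therefore the normalized number of transmissions is $S/F$ for every $\mathbf{d}$, which gives load $R=S/F$. The substantive step is decodability. Fix $\mathbf{d}$ and a user $k$; it must recover $W_{d_k,j}$ for all $j\in[1:F]$. If $\mathbf{P}(j,k)=*$ then $W_{d_k,j}\in\mathcal{Z}_k$ already. If $\mathbf{P}(j,k)=s\in[1:S]$, look at the message broadcast at time slot $s$, namely $\bigoplus_{\mathbf{P}(j',k')=s}W_{d_{k'},j'}$; it contains $W_{d_k,j}$. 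For every other summand $W_{d_{k'},j'}$, we have $\mathbf{P}(j',k')=s=\mathbf{P}(j,k)$ with $(j',k')\neq(j,k)$, so Condition C3 forces $j'\neq j$, $k'\neq k$, and $\mathbf{P}(j',k)=*$; hence $W_{d_{k'},j'}\in\mathcal{Z}_k$ and user $k$ can cancel it. Removing all such terms leaves $W_{d_k,j}$. Since this works for every $j$, every $k$, and every $\mathbf{d}$, the $F$-division scheme is valid with the stated memory ratio and worst-case load, and Lemma~\ref{th-Fundamental} follows.

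The only delicate point -- the anticipated obstacle -- is making the decodability argument airtight, and it rests entirely on what one can extract from Condition C3. I would state explicitly up front the two consequences used: (a) a given integer $s$ cannot appear twice in the same row or in the same column of $\mathbf{P}$ (otherwise the prescribed $2\times 2$ pattern is impossible), which guarantees that at time slot $s$ the term $W_{d_k,j}$ user $k$ needs appears exactly once and no already-unknown packet of $W_{d_k}$ appears there; and (b) whenever $s$ occurs at two positions in distinct rows and columns, the two ``cross'' positions carry $*$, which is precisely what lets user $k$ find every interfering summand in its own cache. With (a)--(b) in hand the rest is bookkeeping, so no lengthy calculation is needed.
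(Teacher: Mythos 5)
Your proof is correct and follows exactly the route the paper takes: the lemma is quoted from \cite{YCTC}, and the paper's own justification (the paragraph following Example~\ref{ex1}) is precisely your argument — C1 fixes the memory ratio $Z/F$, C2 counts the $S$ multicast messages giving load $S/F$, and C3 guarantees each user can cancel every interfering summand at slot $s$ because the cross entries are stars. Your explicit extraction of consequences (a) and (b) from C3 is a slightly more careful write-up of the same decodability step, but there is no substantive difference in approach.
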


\section{Frameworks for Constructing PDA via Cartesian Product}
\label{sec_constr1}
In this section, we propose two novel frameworks to construct a  PDA for $m K_1$ users based on the $m$-fold Cartesian product of some existing PDA for $K_1$ users. The grouping method in~\cite{SJTLD}   directly replicates any existing PDA for $K_1$ users by $m$ times, which results in a new PDA for
  $m K_1$ users  whose coded caching gain is exactly the same as that of the original PDA. In the contrast, by taking the $m$-fold Cartesian product, the resulting PDA in our construction has a coded caching gain is either equal to  $m g$  (if the original PDA satisfies Condition \ref{proper1}) or equal to $m(g-1)$ (if the original PDA satisfies Condition~\ref{proper2}), where $g$ is the coded caching gain of the original PDA.

  The rest of this section is organized as follows. We first list the main theorems of these two novel construction frameworks, and then provide the descriptions of the two construction frameworks in Section~\ref{sub:ConstructionPDATH1} and~\ref{sub:second framework}, respectively.



The first PDA construction  framework is given in the following theorem.
\begin{theorem}
\label{theopm}
For any positive integers $K_1,F_1,Z_1,S_1$ with $Z_1<F_1$, if there exists a $(K_1,F_1,Z_1,$ $S_1)$ PDA satisfying Condition \ref{proper1} with parameter $\lambda$ which will be given in Section~\ref{sub:ConstructionPDATH1}, then for any positive integer $m$, there always exists an $(mK_1,\lambda (\frac{F_1}{\lambda})^m, Z_1(\frac{F_1}{\lambda})^{m-1}, S_1(\frac{F_1}{\lambda})^{m-1})$ PDA, which leads to a $(K, M, N)$ coded caching scheme with number of users $K=mK_1$, memory ratio $\frac{M}{N}=\frac{Z_1}{F_1}$, subpacketization $F=\lambda (\frac{F_1}{\lambda})^m$ and load $R=\frac{S_1}{F_1}$.
\hfill $\square$
\end{theorem}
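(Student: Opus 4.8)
The plan is to explicitly describe the Cartesian-product construction (call it Construction~\ref{constr1}) and then verify the three PDA axioms C1--C3 for the resulting array, together with counting the number of distinct integers used. Concretely, I would start from a $(K_1,F_1,Z_1,S_1)$ PDA $\mathbf{P}$ satisfying Condition~\ref{proper1} with parameter $\lambda$, and I expect that Condition~\ref{proper1} asserts that the $F_1$ rows of $\mathbf{P}$ can be partitioned into $\lambda$ classes such that within each class the non-star entries behave ``consistently'' (e.g.\ each integer appears in exactly one row of each relevant class, or the star-pattern restricted to a class is constant column-wise). The $m$-fold product PDA $\mathbf{Q}$ will have rows indexed by $\lambda$-tuples refined by $(\frac{F_1}{\lambda})^{m-1}$ further choices---matching the claimed $F=\lambda(\frac{F_1}{\lambda})^m$---and columns indexed by $[1:K_1]\times[1:m]$, matching $K=mK_1$. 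An entry of $\mathbf{Q}$ in the block corresponding to the $i$-th copy will be $*$ iff the corresponding entry of $\mathbf{P}$ in that copy's ``local'' row is $*$, and otherwise will be a label built from the integer value in copy $i$ together with the row-identifiers of the other $m-1$ copies, so that distinct multicast opportunities across copies get distinct labels.

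The verification then proceeds axiom by axiom. For \textbf{C1}: fixing a column $(k,i)$, a row of $\mathbf{Q}$ gives a star there iff the $i$-th local coordinate hits a star of column $k$ in $\mathbf{P}$; since stars appear $Z_1$ times among $F_1$ rows of $\mathbf{P}$, and the remaining $m-1$ coordinates range freely over a set of size $(\frac{F_1}{\lambda})^{m-1}$ (this is exactly where Condition~\ref{proper1} is needed, to make the ``free'' index set have the right uniform size $\frac{F_1}{\lambda}$ per coordinate after the $\lambda$-partition is fixed in a compatible way), we get $Z_1\cdot(\frac{F_1}{\lambda})^{m-1}$ stars per column, matching $Z=Z_1(\frac{F_1}{\lambda})^{m-1}$. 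For \textbf{C2}: each label of $\mathbf{Q}$ arises from some integer $s\in[1:S_1]$ in some copy together with a choice of the other $m-1$ row-identifiers, giving $S_1\cdot(\frac{F_1}{\lambda})^{m-1}$ labels, each realized because the underlying entry of $\mathbf{P}$ is realized; this matches $S=S_1(\frac{F_1}{\lambda})^{m-1}$. For \textbf{C3}, the heart of the argument: if two distinct cells of $\mathbf{Q}$ carry the same label, then they agree on which copy $i$ is ``active,'' agree on the value $s\in[1:S_1]$ there, and agree on the other $m-1$ row-identifiers; so the two cells differ either only in the $i$-th local row (then columns also differ, and the $2\times 2$ pattern is inherited from C3 of $\mathbf{P}$ applied to copy $i$) or only in their columns within copy $i$ (same situation) --- in all cases the complementary two cells of the $2\times 2$ subarray have, in copy $i$, the configuration forced by C3 of $\mathbf{P}$, hence are stars, while in the other copies the local rows coincide so those cells inherit whatever $\mathbf{P}$ has there; Condition~\ref{proper1} must guarantee these ``other-copy'' entries are stars too (this is the consistency property). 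Finally the coded-caching scheme with the stated parameters follows immediately from Lemma~\ref{th-Fundamental}, reading off $\frac{M}{N}=\frac{Z}{F}=\frac{Z_1}{F_1}$ and $R=\frac{S}{F}=\frac{S_1}{F_1}$.

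\textbf{Main obstacle.} The delicate point is the C3 verification in the case where the two equal-labeled cells live in different copies' row-coordinates: one must ensure that in every copy $j\ne i$, the two cells of the complementary $2\times 2$ subarray are stars. This is precisely what Condition~\ref{proper1}'s parameter $\lambda$ is engineered to deliver --- presumably by forcing, within the $\lambda$-row-partition, that whenever copy $i$ produces a collision the other copies' fixed local rows land on star positions of the relevant columns. Making the bookkeeping of labels precise (so that ``same label'' really does pin down the active copy and all the other row-identifiers, with no accidental coincidences) and then invoking Condition~\ref{proper1} exactly once at the right place is the crux; the rest is routine counting. I would therefore organize the write-up so that Condition~\ref{proper1} and Construction~\ref{constr1} are stated first with enough structure that the C1, C2, C3 checks each become a short paragraph.
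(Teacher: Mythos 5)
Your overall architecture (rows indexed by $m$-tuples, columns by copy $\times$ original column, stars inherited coordinate-wise, labels built from the active integer plus information about the other copies, then check C1--C3) matches the paper's, and your C1 count is fine. But there is a genuine gap at the crux, and it shows up as an internal inconsistency between your C2 and C3 steps. In C3 you assert that two cells carrying the same label ``agree on which copy $i$ is active,'' i.e.\ the label encodes the active copy. If that were so, the number of distinct labels would be $m\,S_1(\frac{F_1}{\lambda})^{m-1}$ --- a factor of $m$ more than you count in C2 --- the load would be $m\frac{S_1}{F_1}$ rather than $\frac{S_1}{F_1}$, and the construction would degenerate to a variant of the grouping method. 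The whole point of the theorem is that each label is shared by all $m$ copies (that is how the gain becomes $mg$ while $S$ stays at $S_1(\frac{F_1}{\lambda})^{m-1}$), so the label must \emph{not} determine the active copy, and the main case of C3 is precisely two equal-labeled cells in columns $(\delta,b)$ and $(\delta',b')$ with $\delta\neq\delta'$. Your write-up never handles this case; you only treat collisions inside one copy and defer the rest to an unspecified ``consistency property.''

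The missing idea is the star-row assignment in Condition~\ref{proper1}, which your reconstruction of the condition omits. The condition requires a map $\phi$ assigning to each integer $s\in[1:S_1]$ a row $\phi(s)\in[1:\frac{F_1}{\lambda}]$ that is a star row for $s$ (every column of $\mathbf{P}_1$ containing $s$ has a $*$ in row $\phi(s)$), with all fibers $B_j=\phi^{-1}(j)$ of equal size. The label of a non-star cell $({\bf f},(\delta,b))$ with $\mathbf{P}_1(f_\delta,b)=B_l[\mu]$ is then the symmetric $m$-vector ${\bf e}$ with $e_\delta=B_l[\mu]$ and $e_h=B_{<f_h>_{F_1/\lambda}}[\mu]$ for $h\neq\delta$: every coordinate is an integer of $[1:S_1]$ of the same rank $\mu$ in its fiber, so any coordinate can play the role of the active one and the same ${\bf e}$ indeed occurs in all $m$ copies. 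With this labeling the cross-copy case of C3 is resolved exactly by the star-row property: if $\mathbf{P}_m({\bf f},(\delta,b))=\mathbf{P}_m({\bf f}',(\delta',b'))={\bf e}$ with $\delta\neq\delta'$, then $e_\delta=\mathbf{P}_1(f_\delta,b)=B_{<f'_\delta>_{F_1/\lambda}}[\mu']$, i.e.\ row $<f'_\delta>_{F_1/\lambda}$ is the assigned star row for the integer $\mathbf{P}_1(f_\delta,b)$, hence $\mathbf{P}_1(f'_\delta,b)=*$ and therefore $\mathbf{P}_m({\bf f}',(\delta,b))=*$ (and symmetrically for the other corner). Without specifying this labeling and invoking the star-row mechanism, your C2 count and your C3 verification cannot both hold, so the proposal as written does not prove the theorem.
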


Condition \ref{proper1} has some strong constraints, which parts of existing PDAs do not satisfy. Hence,  we propose the following construction framework, where the original PDA should satisfy a weaker version of the constraints, i.e., Condition \ref{proper2} in Section~\ref{sub:second framework}, which all previously existing PDAs satisfy.
\begin{theorem}
\label{theopkm}
For any positive integers $K_1$, $F_1$, $Z_1$ and $S_1$ with $Z_1<F_1$, $g=\frac{K_1(F_1-Z_1)}{S_1}\in\mathbb{Z}$ and $g\geq 2$, if there exists  a $g$-$(K_1,F_1,Z_1,S_1)$ PDA satisfying Condition \ref{proper2} which will be given in Section~\ref{sub:second framework}, then for any positive integer $m\geq 2$, there always exists an $m(g-1)$-$(mK_1,(g-1)F_1^m, (g-1)Z_1F_1^{m-1},gS_1F_1^{m-1})$ PDA, which leads to a $(K,M,N)$ coded caching scheme with number of users $K=mK_1$, memory ratio $\frac{M}{N}=\frac{Z_1}{F_1}$, subpacketization $F=(g-1)F_1^m$ and load $R=\frac{g}{g-1}\frac{S_1}{F_1}$.
\hfill $\square$
\end{theorem}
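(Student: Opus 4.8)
The plan is to produce the claimed PDA in two stages, the second of which is a direct appeal to Theorem~\ref{theopm} with $\lambda=g-1$; I expect this to be essentially what Construction~\ref{constructpmk} does, realized as a ``leave‑one‑out'' row splitting followed by the $m$‑fold Cartesian product of Construction~\ref{constr1}.

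\emph{Stage 1: a leave‑one‑out row splitting.} From the $g$‑$(K_1,F_1,Z_1,S_1)$ PDA $\mathbf{P}$ I would build an auxiliary array $\mathbf{Q}$ with rows indexed by $[1:F_1]\times[1:g-1]$ and columns $[1:K_1]$. Replace a star $\mathbf{P}(j,k)=*$ by stars in all cells $((j,\ell),k)$, $\ell\in[1:g-1]$, which gives exactly $(g-1)Z_1$ stars per column (C1). For an integer $\mathbf{P}(j,k)=s$, recall (from C3 of $\mathbf{P}$ together with $g$‑regularity) that $s$ occurs in $g$ cells sitting in $g$ distinct rows $\rho_1,\dots,\rho_g$ and $g$ distinct columns $\gamma_1,\dots,\gamma_g$; if $(j,k)$ is the $a$‑th such occurrence (ordered, say, by row index), set $\mathbf{Q}((j,\ell),k)=t_{s,\beta_a(\ell)}$, where $\beta_a\colon[1:g-1]\to[1:g]\setminus\{a\}$ is a bijection (the precise $\beta_a$'s are fixed in Stage~2). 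Then the new symbols $t_{s,b}$, $s\in[1:S_1]$, $b\in[1:g]$, number $gS_1$ (C2), and $t_{s,b}$ occupies exactly one cell per occurrence index $a\ne b$, hence appears $g-1$ times, so $\mathbf{Q}$ is $(g-1)$‑regular. C3 for $\mathbf{Q}$ is immediate: two cells bearing $t_{s,b}$ come from distinct occurrences $a_1\ne a_2$ of $s$, so they lie in different rows $(\rho_{a_1},\cdot)\ne(\rho_{a_2},\cdot)$ and columns $\gamma_{a_1}\ne\gamma_{a_2}$, and each crossing cell sits over the position $(\rho_{a_1},\gamma_{a_2})$ or $(\rho_{a_2},\gamma_{a_1})$ of $\mathbf{P}$, which is $*$ by C3 of $\mathbf{P}$. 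Thus $\mathbf{Q}$ is a $(g-1)$‑$(K_1,(g-1)F_1,(g-1)Z_1,gS_1)$ PDA; note $(g-1)Z_1<(g-1)F_1$ and $K_1\big((g-1)F_1-(g-1)Z_1\big)/(gS_1)=g-1$ by the $g$‑regularity of $\mathbf{P}$, so all parameters are consistent.

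\emph{Stage 2: Cartesian product and bookkeeping.} I would then verify that $\mathbf{Q}$ satisfies Condition~\ref{proper1} with $\lambda=g-1$, using the partition of the rows of $\mathbf{Q}$ into the $g-1$ classes $\{(j,\ell)\colon j\in[1:F_1]\}$, $\ell\in[1:g-1]$ (each a copy of the row set of $\mathbf{P}$), and exploiting the structure that the weaker Condition~\ref{proper2} imposes on $\mathbf{P}$; the bijections $\beta_a$ are chosen, coherently over all $s$ and all columns, so that the star/integer alignment across the $g-1$ classes demanded by Condition~\ref{proper1} holds. Applying Theorem~\ref{theopm} to $\mathbf{Q}$ with $\lambda=g-1$ then yields an $m(g-1)$‑regular PDA with $mK_1$ users, subpacketization $\lambda\big(\tfrac{(g-1)F_1}{\lambda}\big)^m=(g-1)F_1^m$, parameter $Z=(g-1)Z_1\big(\tfrac{(g-1)F_1}{\lambda}\big)^{m-1}=(g-1)Z_1F_1^{m-1}$ and $S=gS_1\big(\tfrac{(g-1)F_1}{\lambda}\big)^{m-1}=gS_1F_1^{m-1}$, hence memory ratio $\tfrac{(g-1)Z_1}{(g-1)F_1}=\tfrac{Z_1}{F_1}$ and load $\tfrac{gS_1F_1^{m-1}}{(g-1)F_1^{m}}=\tfrac{g}{g-1}\tfrac{S_1}{F_1}$; the coded caching scheme follows from Lemma~\ref{th-Fundamental}.

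\emph{Where the difficulty lies.} Stage~1 is a short finite verification and the parameter bookkeeping is mechanical, so the crux is the claim in Stage~2 that $\mathbf{Q}$ meets the ``strong'' Condition~\ref{proper1} needed for the gain to multiply by $m$ under the Cartesian product, given only that $\mathbf{P}$ meets the ``weak'' Condition~\ref{proper2}. This is exactly where one must pin down the $\beta_a$'s so that, simultaneously for every integer of $\mathbf{P}$ and every column, the $g-1$ lifted copies behave uniformly; I expect this compatibility argument to be the main obstacle. A fully direct alternative would instead build the $mK_1$‑user array in one shot on row set $[1:F_1]^m\times[1:g-1]$ and columns $[1:m]\times[1:K_1]$, putting $*$ in a group‑$i$ column when the $i$‑th row‑coordinate gives a $*$ of $\mathbf{P}$ and otherwise a new symbol encoding $s$, one leave‑out index per group, and $\ell$; verifying C3 between two columns in different groups — where one must force the ``other'' row‑coordinates onto $*$‑rows of $\mathbf{P}$ by locating them at the leftover occurrence of $s$ — is the direct counterpart of the same obstacle, and would again consume the hypotheses on $\mathbf{P}$.
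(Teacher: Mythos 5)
Your two-stage plan is exactly the paper's route: Stage~1 is the paper's transformation \eqref{eqPDAtoDPDA} of $\mathbf{P}$ into a $(g-1)$-$(K_1,(g-1)F_1,(g-1)Z_1,gS_1)$ PDA (your $\mathbf{Q}$ is their $\mathbf{P}_1$, with your bijections $\beta_a$ realized there as the rotations $\Phi_{i-1}$), and Stage~2 is Proposition~\ref{proP1k} followed by Corollary~\ref{cothm1}. Your Stage~1 verification of C1--C3 and $(g-1)$-regularity is sound (for C3 you should add the one-line case in which two cells carrying $t_{s,b}$ come from the \emph{same} occurrence of $s$ in different replicas, which is excluded because each $\beta_a$ is injective), and the parameter bookkeeping at the end is correct.

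The genuine gap is that the central step --- showing $\mathbf{Q}$ satisfies Condition~\ref{proper1} with $\lambda=g-1$ --- is only announced, not carried out, and your diagnosis of where the difficulty sits is off-target. There is no delicate compatibility problem in choosing the $\beta_a$'s: \emph{any} family of bijections $\beta_a\colon[1:g-1]\to[1:g]\setminus\{a\}$ works, because the two facts one actually needs are independent of that choice. First, the symbol $t_{s,b}$ occupies exactly the columns $\gamma_a$ with $a\neq b$, so setting $\phi(t_{s,b})=\rho_b$ (the row of the \emph{left-out} occurrence, taken in the first block of $F_1$ rows) gives a star row immediately from C3 of $\mathbf{P}$, since $\mathbf{P}(\rho_b,\gamma_a)=*$ for all $a\neq b$. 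Second, the symbols assigned to row $j$ under this $\phi$ are in bijection with the non-star entries of row $j$ of $\mathbf{P}$, so $|B_j|$ equals $K_1$ minus the number of stars in row $j$; Condition~\ref{proper2} is consumed precisely here to make this count constant, and summing over rows gives $|B_j|=gS_1/F_1=\lambda\cdot gS_1/\big((g-1)F_1\big)$ as required. Until you supply these two observations (the content of the paper's Proposition~\ref{proP1k}), Theorem~\ref{theopm} cannot be invoked and the proof is incomplete.
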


\subsection{Construction of PDA in Theorem \ref{theopm}}
\label{sub:ConstructionPDATH1}

The Cartesian product of two sets $A$ and $B$ is defined as $A\times B=\{(x,y)|x\in A,y\in B\}$. Similarly, we define the Cartesian product of two arrays as follows.
\begin{definition}
\label{defcar}
Let $\mathbf{P}_1= \left(\begin{array}{c} {\bf a}_1 \\ {\bf a}_2 \\ \vdots \\    {\bf a}_{F_1}    \end{array}\right)$ be an $F_1\times K_1$ array,
 and $\mathbf{P}_2= \left(\begin{array}{c} {\bf b}_1 \\ {\bf b}_2 \\ \vdots \\    {\bf b}_{F_2}    \end{array}\right)
 $ be an $F_2\times K_2$ array. The Cartesian product of $\mathbf{P}_1$ and $\mathbf{P}_2$   is an $F_1F_2\times (K_1+K_2)$ array,  defined as
\begin{equation*}
\mathbf{P}_1\times \mathbf{P}_2=\left(\begin{array}{cc}
{\bf a}_1 & {\bf b}_1\\
\vdots         &  \vdots\\
{\bf a}_{F_1} & {\bf b}_1\\
\vdots &\vdots \\
{\bf a}_{1} & {\bf b}_{F_2} \\
\vdots         &  \vdots\\
{\bf a}_{F_1} & {\bf b}_{F_2}\\
\end{array}\right).
\end{equation*}
In particular, for any positive integer $m$, the $m$-fold Cartesian product of $\mathbf{P}_1$ is an $F_1^m\times mK_1$ array, defined as
\begin{equation*}
\mathbf{P}_1^{m}=\underbrace{\mathbf{P}_1\times \mathbf{P}_1\times\ldots\times\mathbf{P}_1}_{m}.
\end{equation*}
\hfill $\square$
\end{definition}

Next we will use an example to illustrate the main idea of the construction.
\begin{example}
\label{exm1}
Let us consider the PDA $\mathbf{P_1}$ in \eqref{pda2} again.
Based on $\mathbf{P}_1$, we will construct a $16\times 8$ PDA $\mathbf{P}_2$, as illustrated in Fig \ref{figexm1}.
The construction contains two steps. In the first step,   by taking the Cartesian product of $\mathbf{P}_1$ and $\mathbf{P}_1$, a $16\times 8$ array $\mathbf{P}_1\times \mathbf{P}_1$ can be obtained.
It can be seen that $\mathbf{P}_1\times \mathbf{P}_1$ does not satisfy Condition C3 of the definition of PDA (see Definition~\ref{def-PDA}); for example,   integer $1$ appears twice in the first row of $\mathbf{P}_1\times \mathbf{P}_1$, which contradicts Condition C3. So in the second step, we adjust the integers in $\mathbf{P}_1\times \mathbf{P}_1$  as follows, such that the resulting array satisfies Condition C3:
\begin{itemize}
\item In the first four columns of $\mathbf{P}_1\times \mathbf{P}_1$, there are exactly four replicas of $\mathbf{P}_1$. We replace the second, third, and last replicas by $\mathbf{P}_1+4$, $\mathbf{P}_1+8$, and $\mathbf{P}_1+12$ respectively.\footnote{\label{foot:recall P+4}  Recall that for any integer $a$, $\mathbf{P}_1+a$ denotes an array $(\mathbf{P}_1(j,k)+a)$, where $*+a=*$.}
\item For the last four columns, we replace $(2 \ 2 \ 2 \ 2)^{\top}$, $(3 \ 3 \ 3 \ 3)^{\top}$, $(4 \ 4 \ 4\ 4)^{\top}$, and $(1 \ 1 \ 1 \ 1)^{\top}$ by $( 2 \ 3 \ 4 \ 1 )^{\top}$, $( 6 \ 7 \ 8 \ 5 )^{\top}$, $( 10 \ 11 \ 12 \ 9 )^{\top}$, and $(14 \ 15 \ 16 \ 13 )^{\top}$, respectively.
\end{itemize}

\begin{figure}
  \centering
  \includegraphics[width=6in]{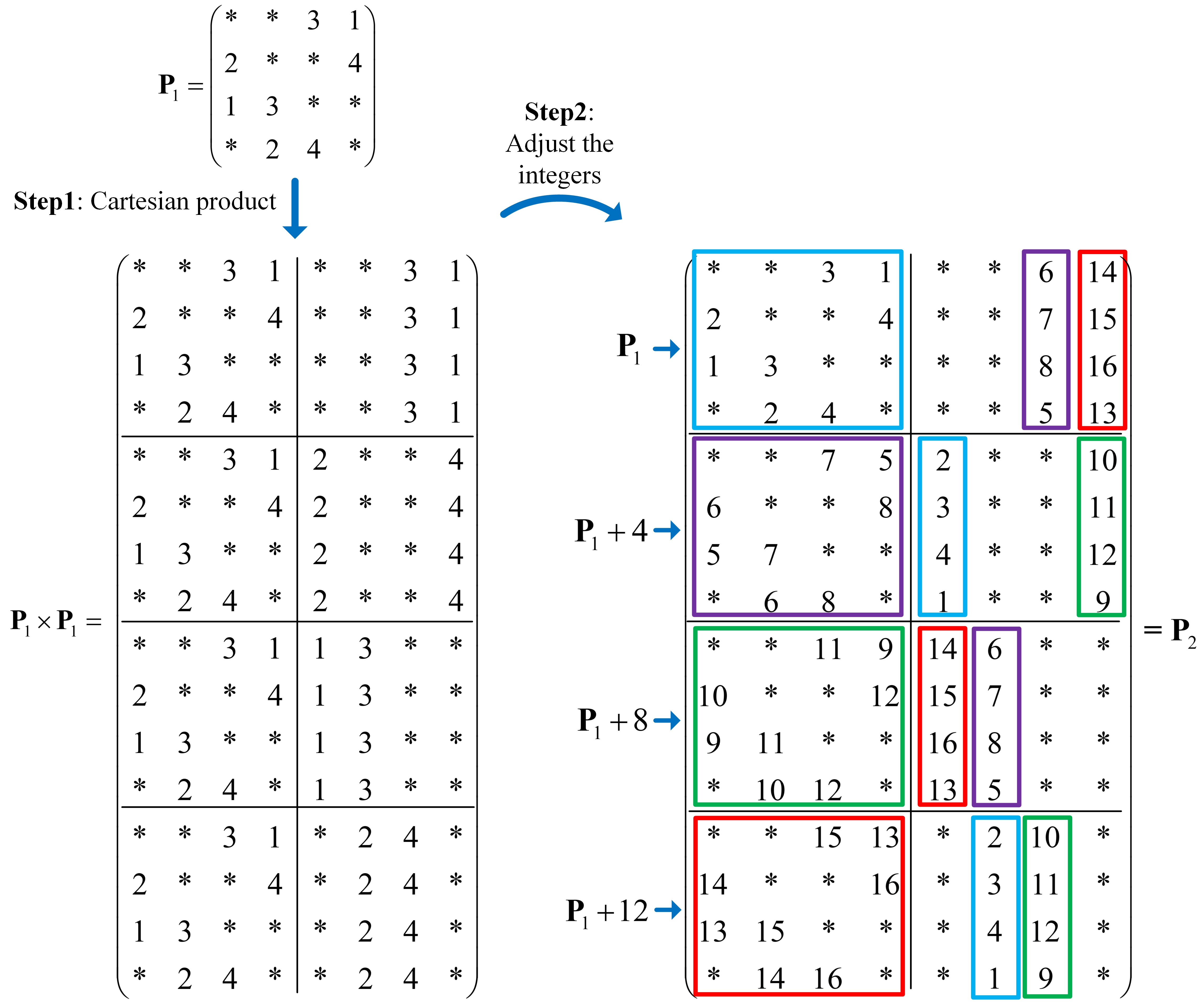}\\
  \caption{The process of generating $\mathbf{P}_2$ from $\mathbf{P}_1$.}\label{figexm1}
\end{figure}
The resulting array is denoted by $ \mathbf{P}_2$, which is an $(8,16,8,16)$ PDA and leads to an $8$-user coded caching scheme with the same memory ratio and load as the $4$-user coded caching scheme from $\mathbf{P}_1$.
\hfill $\square$
\end{example}

Inspired from Example \ref{exm1}, the intuition of our general construction for Theorem~\ref{theopm} could be explained as follows.
  Given a base PDA $\mathbf{P}_1$, for any positive integer $m$, we first take the $m$-fold Cartesian product of $\mathbf{P}_1$ (i.e., $\mathbf{P}_1^{m}$) and then adjust  the non-star entries in $\mathbf{P}_1^{m}$, such that the resulting array (denoted by $\mathbf{P}_m$) satisfies Condition C3 of PDA.
Thus we can obtain an $mK_1$-user coded caching scheme from the PDA $\mathbf{P}_m$, which has the   same memory ratio and load as
the $K_1$-user coded caching scheme from   $\mathbf{P}_1$.
	If the coded caching gain of the base PDA $\mathbf{P}_1$ is $g$, then the coded caching gain of $\mathbf{P}_m$ is  $mg$.

However, not all PDAs can serve as a base PDA in the above construction. In order to characterize the conditions  which a base PDA  should satisfy for our construction, we   introduce the following definitions.
\begin{definition}
For a $(K_1,F_1,Z_1,S_1)$ PDA $\mathbf{P}_1$ and an integer $s$, we say that the $i$-th row of $\mathbf{P}_1$ is a star row for $s$ if each column  of  $\mathbf{P}_1$ containing $s$  has  $*$ in  the  $i$-th row.
i.e., for any $\mathbf{P}_1(j,k)=s$ with $j\in[1:F_1]$ and $k\in[1:K_1]$, we have $\mathbf{P}_1(i,k)=*$. 
\hfill $\square$
\end{definition}

For example, let us focus on the PDA $\mathbf{P}_1$ in \eqref{pda2}. Integer $1$ appears twice in $\mathbf{P}_1$, i.e., $\mathbf{P}_1(3,1)=\mathbf{P}_1(1,4)=1$. Since $\mathbf{P}_1(4,1)=\mathbf{P}_1(4,4)=*$, the forth row of  $\mathbf{P}_1$ is a star row for   integer $1$. Similarly,
the $i$-th row of $\mathbf{P}_1$  is a star row for integer $i+1$, where $i\in \{1,2,3\}$.  So  each row of $\mathbf{P}_1$ is a star row for exactly one integer in $[1:4]$.

There may exist more than one star rows for an integer, in this case we need to assign exactly one star row  for this integer among its possible star rows; thus there may be some degree of freedom in defining a one-to-one correspondence between this integer and the corresponding star rows.
For a  $(K_1,F_1,Z_1,S_1)$ PDA $\mathbf{P}_1$, if we can assign exactly one star row for each integer in $[1:S_1]$ such that  each row of $\mathbf{P}_1$ is assigned to the same number of integers (i.e., to $S_1/F_1$ integers), then $\mathbf{P}_1$ can serve as a base PDA for our construction.

More generally, if $\mathbf{P}_1$ is vertically composed of $\lambda$ arrays (denoted by $\mathbf{P}_{1}^{(1)},\mathbf{P}_{1}^{(2)},\ldots,\mathbf{P}_{1}^{(\lambda)}$) each with dimension $\frac{F_1}{\lambda}\times K_1$ and the stars are placed in the same way in each $\mathbf{P}_{1}^{(i)}$ where $i\in[1:\lambda]$, and if we can assign a star row among the first $\frac{F_1}{\lambda}$ rows for each integer in $[1:S_1]$ such that each of the first $\frac{F_1}{\lambda}$ rows  is assigned to the same number of integers (i.e., to $\lambda S_1/F_1$ integers), then $\mathbf{P}_1$ can serve as a base PDA for our construction. It is worth noting that if $\lambda>1$, we will take the $m$-fold Cartesian product of each $\frac{F_1}{\lambda}\times K_1$ array $\mathbf{P}_{1}^{(i)}$ where $i\in[1:\lambda]$ instead of the $m$-fold Cartesian product of the entire $\mathbf{P}_1$.
Mathematically, any $(K_1,F_1,Z_1,S_1)$ PDA $\mathbf{P}_1$ satisfying the following constraint can serve as a  base PDA.
\begin{condition}
\label{proper1}
\begin{itemize}
\item [1)]There exists a positive integer $\lambda$ by which $F_1$ and $Z_1$ are dividable, such that $\mathbf{P}_1(j,k)=*$ if and only if $\mathbf{P}_1(<j>_{\frac{F_1}{\lambda}},k)=*$ for any $j\in[1:F_1]$ and $k\in[1:K_1]$.
\item[2)]There exists a mapping $\phi$ from $[1:S_1]$ to $[1:\frac{F_1}{\lambda}]$ such that:
\begin{itemize}
\item for each integer $s\in[1:S_1]$, the   $\phi(s)$-th row of $\mathbf{P}_1$ is a star row for  $s$;
\item for each   $j\in [1:\frac{F_1}{\lambda}]$, by defining $B_j =\{s|\phi(s)=j, s\in[1:S_1]\}$ in ascending order, it holds that  $|B_j|=\frac{\lambda S_1}{F_1}$.
\end{itemize}
\end{itemize}
\hfill $\square$
\end{condition}

Let us return to  PDA $\mathbf{P}_1$ in \eqref{pda2}, we will show that it satisfies Condition
\ref{proper1} with $\lambda=1$. $\mathbf{P}_1$ satisfies the first item of Condition \ref{proper1} obviously. Since there is only one star row for each integer, we have $\phi(1)=4,\phi(2)=1,\phi(3)=2,\phi(4)=3$; thus  $B_1=\{2\},B_2=\{3\},B_3=\{4\},B_4=\{1\}$. Consequently, $|B_j|=1$ for each $j\in[1:4]$. $\mathbf{P}_1$ satisfies the second item of Condition \ref{proper1}. So the PDA $\mathbf{P}_1$ in \eqref{pda2} satisfies Condition \ref{proper1}, then it can serve as a base PDA for our construction.
Let us go back to Example \ref{exm1} and explain Condition~\ref{proper1} intuitively. Focus on integer $1$ in  $\mathbf{P}_2$.  Since $\phi(1)=4$,  we have  $\mathbf{P}_2(4,1)=\mathbf{P}_2(4,4)=*$.
Similarly, by our construction, we have  $\mathbf{P}_2(8,1)=\mathbf{P}_2(8,4)=\mathbf{P}_2(16,1)=\mathbf{P}_2(16,4)=*$.  In addition, since integer $1$ is the smallest element of $B_4$ and the size of each $B_j$ is the same, the smallest integer for which the assigned star row is the first row must exist, i.e., $B_1[1]=2$, then we have $\mathbf{P}_2(1,1)=\mathbf{P}_2(1,2)=*$. Hence, by our construction, we have $\mathbf{P}_2(1,5)=\mathbf{P}_2(1,6)=\mathbf{P}_2(3,5)=\mathbf{P}_2(3,6)=*$. Hence, Condition~\ref{proper1} guarantees that $\mathbf{P}_2$ satisfies Condition C3  of Definition \ref{def-PDA}.

We are now ready to describe the construction for Theorem~\ref{theopm}. Recall that $B[i]$ denotes the $i$-th smallest element of the set $B$.
\begin{construction}
\label{constr1}
Given a $(K_1,F_1,Z_1,S_1)$ PDA $\mathbf{P}_1$ satisfying Condition \ref{proper1} with parameter $\lambda$, for any positive integer $m\geq 2$, let the row index set
\begin{equation}
\label{rowindex}
\mathcal{F}=\underset{i\in[1:\lambda]}{\bigcup}\left[(i-1)\frac{F_1}{\lambda}+1:i\frac{F_1}{\lambda}\right]^m
\end{equation}
 and the column index set $\mathcal{K}=[1:m]\times [1:K_1]$, a $\lambda (\frac{F_1}{\lambda})^{m}\times mK_1$ array
\begin{align*}
\mathbf{P}_{m}=\big(\mathbf{P}_{m}({\bf f},(\delta,b))| {\bf f}=(f_1,f_2,\ldots,f_{m})\in\mathcal{F}, (\delta,b)\in\mathcal{K} \big)
\end{align*}
 is defined as
\begin {align}
\label{constrPm}
\mathbf{P}_{m}({\bf f},(\delta,b))=\begin{cases}
 *    & \text{if}\ \mathbf{P}_1(f_{\delta},b)=*,\ \  \\
{\bf e} &\text{if} \ \mathbf{P}_1(f_{\delta},b)=B_l[\mu],\\
\end{cases}
\end{align}
where ${\bf e}=(e_1,e_2,\ldots,e_{m})$ such that
\begin {equation}
\label{constre}
e_h=\begin{cases}
\mathbf{P}_1(f_{\delta},b) &\text{if} \ \ h=\delta,\\
B_{<f_h>_{\frac{F_1}{\lambda}}}[\mu] & \text{otherwise}. \\
\end{cases}
\end{equation}
\hfill $\square$
\end{construction}


\begin{example}
For the PDA $\mathbf{P}_1$ in \eqref{pda2}, we have $\lambda=1$, $F_1=4$ and
$$B_1=\{2\},B_2=\{3\},B_3=\{4\},B_4=\{1\}.$$
When $m=2$, the array generated by Construction \ref{constr1} based on $\mathbf{P}_1$ in \eqref{pda2} is given in~\eqref{eqsubarray}.
\begin{figure}
\begin{align}
\label{eqsubarray}
\bordermatrix{%
         &(1,1) & (1,2) & (1,3) & (1,4) & (2,1) & (2,2) & (2,3) & (2,4)  \cr
(1,1)    &   *  &   *   & (3,2) & (1,2) &  *    &  *    & (2,3) & (2,1)  \cr
(2,1)    &(2,2) &   *   &   *   & (4,2) &  *    &  *    & (3,3) & (3,1)  \cr
(3,1)    &(1,2) & (3,2) &   *   &   *   &  *    &  *    & (4,3) & (4,1)  \cr
(4,1)    &  *   & (2,2) & (4,2) &   *   &  *    &  *    & (1,3) & (1,1)  \cr
(1,2)    &   *  &   *   & (3,3) & (1,3) & (2,2) &  *    &   *   & (2,4)  \cr
(2,2)    &(2,3) &   *   &   *   & (4,3) & (3,2) &  *    &   *   & (3,4)  \cr
(3,2)    &(1,3) & (3,3) &   *   &   *   & (4,2) &  *    &   *   & (4,4)  \cr
(4,2)    &  *   & (2,3) & (4,3) &   *   & (1,2) &  *    &   *   & (1,4)  \cr
(1,3)    &   *  &   *   & (3,4) & (1,4) & (2,1) & (2,3) &   *   &   *    \cr
(2,3)    &(2,4) &   *   &   *   & (4,4) & (3,1) & (3,3) &   *   &   *    \cr
(3,3)    &(1,4) & (3,4) &   *   &   *   & (4,1) & (4,3) &   *   &   *    \cr
(4,3)    &  *   & (2,4) & (4,4) &   *   & (1,1) & (1,3) &   *   &   *    \cr
(1,4)    &  *   &   *   & (3,1) & (1,1) &   *   & (2,2) & (2,4) &   *    \cr
(2,4)    &(2,1) &   *   &   *   & (4,1) &   *   & (3,2) & (3,4) &   *    \cr
(3,4)    &(1,1) & (3,1) &   *   &   *   &   *   & (4,2) & (4,4) &   *    \cr
(4,4)    &  *   & (2,1) & (4,1) &   *   &   *   & (1,2) & (1,4) &   * }
\end{align}
\end{figure}
When ${\bf f}=(1,1)$ and $(\delta,b)=(1,1)$, we have $\mathbf{P}_2({\bf f},(\delta,b))=*$ from \eqref{constrPm}, since $\mathbf{P}_1(f_{\delta},b)=\mathbf{P}_1(f_1,b)=\mathbf{P}_1(1,1)=*$. When ${\bf f}=(2,1)$ and $(\delta,b)=(1,1)$, since $\mathbf{P}_1(f_{\delta},b)=\mathbf{P}_1(f_1,b)=\mathbf{P}_1(2,1)=2=B_1[1]$, we have  $\mathbf{P}_2({\bf f},(\delta,b))=(\mathbf{P}_1(f_1,b),B_{<f_2>_{F_1}}[1])=(2,B_{1}[1])=(2,2)$ from \eqref{constrPm} and \eqref{constre}. The other entries in \eqref{eqsubarray} can be obtained from \eqref{constrPm} and \eqref{constre} similarly. By replacing the non-star entries in \eqref{eqsubarray} by integers according to the mapping $\psi$ illustrated in Table \ref{mpsi}, i.e., replacing ${\bf e}$ by $\psi({\bf e})$, the resulting array is exactly the array $\mathbf{P}_2$ in Fig. \ref{figexm1}.

\begin{table}
  \centering
  \caption{The mapping $\psi$}\label{mpsi}
  \begin{tabular}{|c|c|c|c|c|c|c|c|c|}
   \hline
   ${\bf e}$&$(1,2)$&$(2,2)$&$(3,2)$&$(4,2)$&$(1,3)$&$(2,3)$&$(3,3)$&$(4,3)$\\ \hline
   $\psi({\bf e})$&$1$&$2$&$3$&$4$&$5$&$6$&$7$&$8$ \\ \hline
   ${\bf e}$&$(1,4)$&$(2,4)$&$(3,4)$&$(4,4)$&$(1,1)$&$(2,1)$&$(3,1)$&$(4,1)$ \\ \hline
   $\psi({\bf e})$ &$9$&$10$&$11$&$12$&$13$&$14$&$15$&$16$ \\ \hline
  \end{tabular}
\end{table}
\hfill $\square$
\end{example}

By Construction \ref{constr1}, we have the following lemma, whose proof could be found in Appendix~\ref{prtheopm}.
\begin{lemma}
\label{lem:Theoerem 1 PDA}
The array $\mathbf{P}_{m}$ generated by Construction \ref{constr1} is an $(mK_1,\lambda \left(\frac{F_1}{\lambda})^m, Z_1(\frac{F_1}{\lambda})^{m-1},\right.$ $\left.S_1(\frac{F_1}{\lambda})^{m-1}\right)$ PDA.
\hfill $\square$
\end{lemma}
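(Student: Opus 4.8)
The plan is to verify the three PDA axioms for $\mathbf{P}_m$ in turn, together with the claimed dimensions, exploiting the product structure of Construction~\ref{constr1} and the two parts of Condition~\ref{proper1}. First I would settle the easy bookkeeping: the row index set $\mathcal{F}$ in \eqref{rowindex} has size $\lambda(\tfrac{F_1}{\lambda})^m$ and $\mathcal{K}=[1:m]\times[1:K_1]$ has size $mK_1$, so $\mathbf{P}_m$ is $\lambda(\tfrac{F_1}{\lambda})^m\times mK_1$. For axiom \textbf{C1}, fix a column $(\delta,b)$; by \eqref{constrPm} the entry $\mathbf{P}_m(\mathbf{f},(\delta,b))$ is $*$ exactly when $\mathbf{P}_1(f_\delta,b)=*$, i.e.\ it depends only on the coordinate $f_\delta$ of $\mathbf{f}$. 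Part~1 of Condition~\ref{proper1} says the star pattern of $\mathbf{P}_1$ is periodic with period $\tfrac{F_1}{\lambda}$, so within each of the $\lambda$ blocks of $\mathcal{F}$ there are exactly $Z_1$ values of $f_\delta$ giving a $*$; multiplying by the $(\tfrac{F_1}{\lambda})^{m-1}$ free choices for the remaining coordinates of $\mathbf{f}$ inside that block, and noting $\lambda\cdot Z_1\cdot(\tfrac{F_1}{\lambda})^{m-1}/\lambda = Z_1(\tfrac{F_1}{\lambda})^{m-1}$ per block — more carefully, each of the $\lambda$ blocks contributes $Z_1(\tfrac{F_1}{\lambda})^{m-1}$ stars and they all coincide in count — gives $Z_1(\tfrac{F_1}{\lambda})^{m-1}$ stars per column. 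For the integer count $S_1(\tfrac{F_1}{\lambda})^{m-1}$ (axiom \textbf{C2}): the non-star entries are vectors $\mathbf{e}$, and I would show the map $\psi$ (as in Table~\ref{mpsi}) is a bijection onto $[1:S_1(\tfrac{F_1}{\lambda})^{m-1}]$ by a counting argument — a vector $\mathbf{e}=(e_1,\dots,e_m)$ arises iff exactly one coordinate, say $e_\delta$, lies in $[1:S_1]$ and all others satisfy $e_h\in B_{<f_h>_{F_1/\lambda}}[\mu]$ for a common index $\mu\in[1:\lambda S_1/F_1]$; counting the admissible $\mathbf{e}$ and checking each is realized gives the claimed $S_1(\tfrac{F_1}{\lambda})^{m-1}$ distinct integers, each occurring (for the regularity remark) $mg$ times.

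The heart of the proof is axiom \textbf{C3}. Suppose $\mathbf{P}_m(\mathbf{f},(\delta,b))=\mathbf{P}_m(\mathbf{f}',(\delta',b'))=\mathbf{e}$ with $(\mathbf{f},(\delta,b))\neq(\mathbf{f}',(\delta',b'))$. I would split on whether $\delta=\delta'$. Unwinding \eqref{constre}: in the $\delta$-th coordinate $\mathbf{e}$ carries the "genuine" $\mathbf{P}_1$-value $\mathbf{P}_1(f_\delta,b)=B_l[\mu]$ where $l=\phi(\mathbf{P}_1(f_\delta,b))$, while in every other coordinate $h$ it carries $B_{<f_h>_{F_1/\lambda}}[\mu]$ with the \emph{same} rank $\mu$; crucially the partition $B_1,\dots,B_{F_1/\lambda}$ is into disjoint sets of equal size $\lambda S_1/F_1$, so from the value $e_h$ one can read off both which $B_j$ it lies in (hence $<f_h>_{F_1/\lambda}$, hence, via periodicity, the star pattern of column $h$'s entry in row $\mathbf{f}'$) and its common rank $\mu$. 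If $\delta=\delta'$: then $\mathbf{P}_1(f_\delta,b)=e_\delta=\mathbf{P}_1(f'_{\delta'},b')=\mathbf{P}_1(f'_\delta,b')$, so applying \textbf{C3} to the base PDA $\mathbf{P}_1$ forces $\mathbf{P}_1(f_\delta,b')=\mathbf{P}_1(f'_\delta,b)=*$ when $(f_\delta,b)\neq(f'_\delta,b')$; also for $h\neq\delta$ the equality of the $h$-th coordinates $B_{<f_h>}[\mu]=B_{<f'_h>}[\mu]$ forces $<f_h>_{F_1/\lambda}=<f'_h>_{F_1/\lambda}$, i.e.\ (as $\mathbf{f},\mathbf{f}'$ lie in the same block) $f_h=f'_h$ — reducing everything to the base case and yielding the required $*$'s in the two cross positions via \eqref{constrPm}.

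The genuinely delicate case is $\delta\neq\delta'$ — this is the step I expect to be the main obstacle. Here I would argue: the $\delta$-th coordinate of $\mathbf{e}$ is the genuine value for $(\mathbf{f},(\delta,b))$ but, in the other representation, it equals $B_{<f'_\delta>}[\mu']$; matching ranks and blocks gives $\mu=\mu'$ and $\phi(\mathbf{P}_1(f_\delta,b))=<f'_\delta>_{F_1/\lambda}$, so the row $f'_\delta$ is (up to periodicity) a \emph{star row} for the integer $\mathbf{P}_1(f_\delta,b)$ in $\mathbf{P}_1$ — by the definition of star row, $\mathbf{P}_1(<f'_\delta>_{F_1/\lambda},b)=*$, hence $\mathbf{P}_1(f'_\delta,b)=*$ by part~1, hence by \eqref{constrPm} the cross entry $\mathbf{P}_m(\mathbf{f}',(\delta,b))=*$. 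The symmetric argument, using that the $\delta'$-th coordinate of $\mathbf{e}$ is the genuine value for $(\mathbf{f}',(\delta',b'))$ and $\phi(\mathbf{P}_1(f'_{\delta'},b'))=<f_{\delta'}>_{F_1/\lambda}$, shows $\mathbf{P}_1(f_{\delta'},b')=*$ and therefore $\mathbf{P}_m(\mathbf{f},(\delta',b'))=*$. That establishes the $2\times2$ pattern of \textbf{C3}. The remaining care is to handle the sub-case $\delta=\delta'$ but $b=b'$ and $\mathbf{f}\neq\mathbf{f}'$ (then some $f_h\neq f'_h$ with $h\neq\delta$, but the $h$-th coordinates of $\mathbf{e}$ still agree, forcing $<f_h>=<f'_h>$ and periodicity plus distinctness of the whole tuples yields a contradiction, so no such collision occurs), and to double-check that when the two entries are forced equal they really come from distinct (row, column) pairs. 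Collecting C1–C3 and the dimension count completes the proof of Lemma~\ref{lem:Theoerem 1 PDA}.
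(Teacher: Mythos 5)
Your plan follows essentially the same route as the paper's proof in Appendix~A: C1 via the block-periodic star pattern of Condition~\ref{proper1}(1), C2 by showing every vector reappears in each column group and then counting by one coordinate, and C3 by splitting on $\delta=\delta'$ (reduce to C3 of $\mathbf{P}_1$, with the degenerate sub-case ruled out because equal residues within one block force $\mathbf{f}=\mathbf{f}'$) versus $\delta\neq\delta'$ (read off from the matched coordinates that $<f'_\delta>_{F_1/\lambda}$ is the assigned star row of $\mathbf{P}_1(f_\delta,b)$, then invoke periodicity). The only blemishes are bookkeeping: each of the $\lambda$ blocks contributes $\frac{Z_1}{\lambda}(\frac{F_1}{\lambda})^{m-1}$ stars (not $Z_1(\frac{F_1}{\lambda})^{m-1}$ each), and in C2 the phrase ``exactly one coordinate lies in $[1:S_1]$'' is vacuous since every coordinate of $\mathbf{e}$ does --- the correct count fixes the first coordinate to be the genuine one, as the paper does.
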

Hence, from Lemma~\ref{lem:Theoerem 1 PDA} we can prove Theorem \ref{theopm}.

If there exists a $g$-PDA satisfying Condition \ref{proper1} with parameter $\lambda$, the following result can be obtained from Theorem \ref{theopm}, whose proof could be found in Appendix~\ref{prcothm1}.
\begin{corollary}
\label{cothm1}
Given a $g$-$(K_1,F_1,Z_1,S_1)$ PDA  satisfying Condition \ref{proper1} with parameter $\lambda$ where $g=\frac{K_1(F_1-Z_1)}{S_1}$ is a positive integer,  there always exists an $mg$-$\left(mK_1,\lambda (\frac{F_1}{\lambda})^m, Z_1(\frac{F_1}{\lambda})^{m-1},\right.$ $\left.S_1(\frac{F_1}{\lambda})^{m-1}\right)$ PDA, for any positive integer $m$.
\hfill $\square$
\end{corollary}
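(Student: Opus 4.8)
The plan is to reduce the corollary to the single extra fact that the array $\mathbf{P}_m$ produced by Construction \ref{constr1} is $mg$-regular. Indeed, Lemma \ref{lem:Theoerem 1 PDA} already gives that $\mathbf{P}_m$ is an $(mK_1,\lambda(\frac{F_1}{\lambda})^m,Z_1(\frac{F_1}{\lambda})^{m-1},S_1(\frac{F_1}{\lambda})^{m-1})$ PDA, and Theorem \ref{theopm} supplies the associated coded caching scheme, so all that is missing is the regularity. Since the relabeling $\psi$ that turns the vector-valued array into an integer array is a bijection onto $[1:S_1(\frac{F_1}{\lambda})^{m-1}]$, it suffices to show that every non-star vector ${\bf e}=(e_1,\ldots,e_m)$ appearing in $\mathbf{P}_m$ (before relabeling) appears exactly $mg$ times.

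The first step is to read off from \eqref{constre} the structure of the non-star vectors. If $\mathbf{P}_m({\bf f},(\delta,b))={\bf e}\neq *$, then $\mathbf{P}_1(f_\delta,b)=e_\delta=B_{\phi(e_\delta)}[\mu]$ for some rank $\mu$, and for $h\neq\delta$ one has $e_h=B_{<f_h>_{F_1/\lambda}}[\mu]$; since the sets $B_1,\ldots,B_{F_1/\lambda}$ partition $[1:S_1]$, this forces $\phi(e_h)=<f_h>_{F_1/\lambda}$. Hence every non-star vector of $\mathbf{P}_m$ has a common index $\mu$ with $e_h=B_{\phi(e_h)}[\mu]$ for all $h\in[1:m]$, i.e., every coordinate has the same rank $\mu$ inside the $B$-set to which it belongs.

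The second and main step is to count, for a fixed such vector ${\bf e}$ and a fixed coordinate $\delta\in[1:m]$, the pairs $({\bf f},(\delta,b))$ with $\mathbf{P}_m({\bf f},(\delta,b))={\bf e}$. Writing $j_h:=\phi(e_h)$, equations \eqref{constrPm}--\eqref{constre} force $\mathbf{P}_1(f_\delta,b)=e_\delta$ and $<f_h>_{F_1/\lambda}=j_h$ for all $h\neq\delta$; combined with ${\bf f}\in\mathcal{F}$, which by \eqref{rowindex} means that all coordinates of ${\bf f}$ lie in one block $[(i-1)\frac{F_1}{\lambda}+1:i\frac{F_1}{\lambda}]$, this pins down ${\bf f}$ and $b$ uniquely from an occurrence position of the integer $e_\delta$ in $\mathbf{P}_1$: the block is $i=\lceil f_\delta/(F_1/\lambda)\rceil$ and then $f_h=(i-1)\frac{F_1}{\lambda}+j_h$ for $h\neq\delta$. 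Conversely, each occurrence of $e_\delta$ in $\mathbf{P}_1$ yields via this recipe a valid ${\bf f}\in\mathcal{F}$ for which \eqref{constrPm}--\eqref{constre} really output ${\bf e}$, using that the common rank $\mu$ of ${\bf e}$ equals the rank of $e_\delta$ in $B_{\phi(e_\delta)}$ and that $<f_h>_{F_1/\lambda}=j_h$ by construction. Distinct occurrences and distinct choices of $\delta$ give distinct pairs, so the multiplicity of ${\bf e}$ in $\mathbf{P}_m$ is $\sum_{\delta=1}^{m}(\text{number of occurrences of }e_\delta\text{ in }\mathbf{P}_1)=\sum_{\delta=1}^{m}g=mg$, where the $g$-regularity of $\mathbf{P}_1$ is used once for each $\delta$. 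This gives that $\mathbf{P}_m$ is $mg$-regular, and with Theorem \ref{theopm} the corollary follows.

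The part demanding the most care is the bookkeeping in the second step: verifying that each occurrence of $e_\delta$ reconstructs a vector ${\bf f}$ that genuinely lies in $\mathcal{F}$, that \eqref{constrPm}--\eqref{constre} then return exactly ${\bf e}$ and not just a vector with the correct $\delta$-th coordinate, and that no over- or under-counting arises when summing over $\delta$ and over occurrences in the various blocks. A softer route for the final arithmetic is available once regularity is known: $\mathbf{P}_m$ has $mK_1\big(\lambda(\frac{F_1}{\lambda})^m-Z_1(\frac{F_1}{\lambda})^{m-1}\big)=mK_1(\frac{F_1}{\lambda})^{m-1}(F_1-Z_1)$ non-star entries distributed over $S_1(\frac{F_1}{\lambda})^{m-1}$ integers, so equal multiplicities must all equal $mK_1(F_1-Z_1)/S_1=mg$; but the counting argument above is still needed to establish that the multiplicities are in fact equal.
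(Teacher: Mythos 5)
Your proposal is correct and follows essentially the same route as the paper's Appendix~B proof: fix a coordinate $\delta$ and put the occurrences of ${\bf e}$ in the columns $(\delta,\cdot)$ in correspondence with the $g$ occurrences of $e_\delta$ in $\mathbf{P}_1$, then sum over $\delta$ to get $mg$. The only difference is one of care rather than substance — you make the correspondence an explicit bijection (so each block contributes exactly, not merely at least, $g$ occurrences), a point the paper leaves implicit.
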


In general, we need the exhaustive search to check whether a   PDA $\mathbf{P}_{1}$ satisfies Condition~\ref{proper1} or not. However, for some specific PDAs, such as the PDAs in \cite{YTCC} and \cite{CWZW}, we can prove that they satisfy Condition \ref{proper1} with $\lambda=1$ by their constructions.
\begin{proposition}
\label{prostrong}
For any positive integers $H$, $a$, $b$, $r$ satisfying $\max\{a,b\}<H$, $r<\min\{a,b\}$ and $a+b\leq H+r$, the ${H-a-b+2r\choose r}$-$\left({H\choose a},{H\choose b},{H\choose b}-{a\choose r}{H-a\choose b-r},{H\choose a+b-2r}{a+b-2r\choose a-r}\right)$ PDA in \cite{YTCC} satisfies Condition \ref{proper1} with $\lambda=1$ when $a=b+r$.
\hfill $\square$
\end{proposition}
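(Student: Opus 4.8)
The plan is to first write down the PDA of \cite{YTCC} explicitly and then construct the map $\phi$ demanded by Condition~\ref{proper1}. Recall that this PDA has its rows indexed by the $b$-subsets of $[1:H]$ and its columns by the $a$-subsets, the entry in row $B$ and column $A$ being the symbol $*$ when $|A\cap B|\neq r$ and otherwise an integer label determined by the ordered pair of disjoint sets $\big(A\setminus B,\,B\setminus A\big)$ (of sizes $a-r$ and $b-r$). First I would double-check that this presentation has the announced parameters: each column has $\binom{H}{b}-\binom{a}{r}\binom{H-a}{b-r}=Z_1$ stars; the labels are indexed by the $S_1=\binom{H}{a+b-2r}\binom{a+b-2r}{a-r}$ such ordered pairs; and the label $(X,Y)$ occupies exactly the positions $(Y\cup R,\,X\cup R)$ with $R$ an $r$-subset of $[1:H]\setminus(X\cup Y)$, so it appears $\binom{H-a-b+2r}{r}=g$ times. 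Since we take $\lambda=1$, the first item of Condition~\ref{proper1} holds trivially.

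For the second item, specialize to $a=b+r$. Then the first coordinate $X:=A\setminus B$ of each label has size $a-r=b$, hence is itself a valid row index; so I would define $\phi$ by sending the label $s=(X,Y)$ to the row indexed by $X$. To see that row $X$ is a star row for $s$, note that the columns containing $s$ are exactly the sets $A=X\cup R$ for $r$-subsets $R$ of $[1:H]\setminus(X\cup Y)$, and for each such $A$ we have $A\cap X=X$, of size $b\neq r$ (as $r<b$), so the entry in row $X$, column $A$ is $*$. For the size condition, the labels mapped to a given row $X$ are precisely the pairs $(X,Y)$ with $Y$ a $(b-r)$-subset of $[1:H]\setminus X$, so $|B_X|=\binom{H-b}{b-r}$, independent of $X$; and a short computation with binomial coefficients gives
\[
\frac{S_1}{F_1}=\frac{\binom{H}{2b-r}\binom{2b-r}{b}}{\binom{H}{b}}=\binom{H-b}{b-r},
\]
both sides counting the choice of a $b$-set followed by $b-r$ further elements outside it. Hence $|B_j|=S_1/F_1$ for every row $j$, and Condition~\ref{proper1} is verified with $\lambda=1$, which yields the claim by the definition of Condition~\ref{proper1}.

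There is no genuinely hard step here: the essential observation is simply that when $a=b+r$ the ``$A$-part'' $A\setminus B$ of a colour label has exactly the cardinality of a row index, so it can be taken as the canonical star row, after which everything reduces to counting. The only point that needs care is to pin down, from the construction in \cite{YTCC}, the exact set of columns occupied by each integer entry, since the star-row verification rests entirely on that description; once that is in hand, the remainder is routine bookkeeping with binomial identities.
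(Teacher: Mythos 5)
Your proof is correct and follows essentially the same route as the paper: you assign to each integer label the row indexed by $A\setminus B$ (which has size $a-r=b$ exactly when $a=b+r$), verify it is a star row because $|(A\setminus B)\cap A|=b\neq r$, and count the fiber size $\binom{H-b}{b-r}=S_1/F_1$. Your re-encoding of the label as the ordered pair $(A\setminus B,\,B\setminus A)$ is equivalent to the paper's $((A\cup B)\setminus(A\cap B),\,A\setminus B)$, so the argument is the same in substance.
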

\begin{proposition}
\label{proPDAOA}
For any positive integers $m,q,t$ with $t<m$ and $q\geq 2$, the ${m\choose t}$-$\left({m\choose t}q^t,q^{m-1},\right. \\ \left.q^{m-1}-(q-1)^tq^{m-t-1}, (q-1)^tq^{m-1}\right)$ PDA in \cite{CWZW} satisfies Condition \ref{proper1} with $\lambda=1$ when $t\geq 2$.
\hfill $\square$
\end{proposition}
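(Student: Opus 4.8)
The plan is to read off, directly from the construction of the PDA in~\cite{CWZW}, the parameter $\lambda=1$ and an explicit mapping $\phi$ satisfying the second item of Condition~\ref{proper1}. First I would recall the construction: its $q^{m-1}$ rows are indexed by a set $\mathcal{F}$ of $[1:q]$-valued tuples coming from an orthogonal-array / affine structure, its ${m\choose t}q^{t}$ columns by pairs $(\mathcal{T},\mathbf{b})$ with $\mathcal{T}\in{[1:m]\choose t}$ and $\mathbf{b}\in[1:q]^{t}$, and the $(\mathbf{f},(\mathcal{T},\mathbf{b}))$ entry is $*$ exactly when $\mathbf{f}$ agrees with $\mathbf{b}$ on the coordinates indexed by $\mathcal{T}$, and otherwise is an integer whose label encodes $\mathbf{f}$ on the complement of $\mathcal{T}$ together with the disagreement pattern on $\mathcal{T}$. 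With $\lambda=1$ the first item of Condition~\ref{proper1} is immediate: $1$ divides $F_1$ and $Z_1$, and $<j>_{F_1}=j$ for $j\in[1:F_1]$, so the required equivalence is vacuous. Everything therefore reduces to the second item.

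Next I would prove a structural lemma identifying, for each integer $s\in[1:S_1]$, the ${m\choose t}$ columns in which $s$ occurs and the set $R(s)$ of rows that are star rows for $s$. Combining Condition~C3 of the PDA with the explicit entry rule, $R(s)$ turns out to be the set of rows $\mathbf{f}$ whose restriction to a distinguished coordinate block (read off from the label of $s$) equals a prescribed tuple. The point I expect to need here is that $t\geq 2$ forces $R(s)\neq\emptyset$, and indeed forces $|R(s)|$ to equal a fixed value independent of $s$; this is precisely where the hypothesis $t\geq 2$ enters, since for $t=1$ the distinguished block degenerates and this is exactly the regime the proposition excludes.

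Then I would define $\phi$ by a canonical, symmetry-compatible tie-break: among the rows in $R(s)$, take the one that is lexicographically least \emph{after} subtracting, coordinatewise in $[1:q]$, an offset determined by the label of $s$. Bullet~(i), that the $\phi(s)$-th row is a star row for $s$, then follows from the lemma. For bullet~(ii) I must show $|B_j|=\tfrac{\lambda S_1}{F_1}=\tfrac{(q-1)^{t}q^{m-1}}{q^{m-1}}=(q-1)^{t}$ for every $j$. I would use the transitive group $G$ of symmetries of this PDA given by the translations $\mathbf{f}\mapsto\mathbf{f}+\mathbf{u}$, $(\mathcal{T},\mathbf{b})\mapsto(\mathcal{T},\mathbf{b}+\mathbf{u}|_{\mathcal{T}})$, which act simply transitively on $\mathcal{F}$ and relabel integers consistently; since both the entry rule and the offset-based tie-break are $G$-equivariant, $\phi$ is $G$-equivariant, hence $\sigma(B_j)=B_{\sigma(j)}$ for all $\sigma\in G$, so $|B_j|$ is independent of $j$. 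As $\sum_{j\in[1:F_1]}|B_j|=S_1$ over the $F_1=q^{m-1}$ rows, each $|B_j|$ equals $(q-1)^{t}$, which is item~2, so $\mathbf{P}_1$ satisfies Condition~\ref{proper1} with $\lambda=1$.

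The main obstacle is this balancing claim $|B_j|=(q-1)^{t}$. The equivariance argument reduces it to transitivity of $G$ on the rows, but that only works if the rule defining $\phi$ is itself equivariant, which is why the lexicographic offset is tied to the label of $s$ rather than fixed once and for all; and it is the $t\geq 2$ regime that guarantees each integer retains enough star rows for such an evenly spread, equivariant choice to exist. An alternative to the group-theoretic route is a direct double count of the pairs $(s,\text{column carrying }s)$ stratified by the value $\phi(s)$, but it relies on the same uniformity of $|R(s)|$ across $s$ supplied by the structural lemma.
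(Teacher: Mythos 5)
Your overall architecture matches the paper's: pick a canonical star row for each integer via an explicit rule depending only on the vector label, then argue the assignment is balanced. The paper's choice of $\phi$ is the projection of ${\bf e}$ onto $\mathcal{F}$ obtained by overwriting the last coordinate with $<\sum_{i=1}^{m-1}e_i>_q$; since this row differs from ${\bf e}$ in at most one coordinate, $d(\phi(({\bf e},n_{\bf e}))[\mathcal{T}],{\bf b})\leq 1<t$ for every column carrying the integer, which is exactly where $t\geq 2$ enters -- you identified that role correctly. Your translation-equivariance idea for the balancing step is in principle a genuinely different (and slicker) route than the paper's, which instead computes $|B_{\bf f}|=(q-1)^t$ by explicitly determining, via an induction on $t$ (Proposition \ref{prooccurrence}), how many times each vector ${\bf e}$ occurs per column, distinguishing ${\bf e}\in\mathcal{F}$ from ${\bf e}\notin\mathcal{F}$.

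However, there are concrete errors in your plan. First, the star rule is misdescribed: $\mathbf{P}({\bf f},(\mathcal{T},{\bf b}))=*$ iff $d({\bf f}[\mathcal{T}],{\bf b})<t$, i.e., ${\bf f}$ agrees with ${\bf b}$ on \emph{at least one} coordinate of $\mathcal{T}$, not on all of them. Consequently your structural lemma is false as stated: the set $R(s)$ of star rows for an integer with label ${\bf e}$ is $\{{\bf f}\in\mathcal{F}\ |\ d({\bf f},{\bf e})\leq t-1\}$, a Hamming ball intersected with the parity-check set $\mathcal{F}$, not the set of rows matching a prescribed tuple on a coordinate block; and $|R(s)|$ is \emph{not} independent of $s$ (e.g., for $t=2$ it equals $1$ when ${\bf e}\in\mathcal{F}$ and $m$ when ${\bf e}\notin\mathcal{F}$), so the double-counting alternative you sketch collapses. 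Second, you never engage with the fact that the integers of this PDA are pairs $({\bf e},n_{\bf e})$, where $n_{\bf e}$ is an occurrence order and the number of distinct integers sharing a given ${\bf e}$ \emph{varies} with whether ${\bf e}\in\mathcal{F}$; this multiplicity is precisely what must be summed to get $|B_j|=(q-1)^t$ and is the entire content of the paper's inductive occurrence-count proposition. Your equivariance argument can be repaired -- the paper's own $\phi$ depends only on ${\bf e}$ and commutes with the translations preserving $\mathcal{F}$, so constancy of $|B_j|$ plus $\sum_j|B_j|=S_1$ would indeed yield $(q-1)^t$ -- but as written the proof rests on an incorrect lemma and omits the layer of the construction where the real counting difficulty lives.
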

The detailed proofs of Proposition \ref{prostrong} and \ref{proPDAOA}  could be found in Appendices
\ref{prprostrong} and \ref{prproPDAOA}, respectively.

\subsection{Construction of PDA in Theorem \ref{theopkm}}
\label{sub:second framework}
Given a PDA $\mathbf{P}$, if it satisfies Condition \ref{proper1}, for any positive integer $m$, Construction \ref{constr1} can be used to generate a PDA, which leads to an $mK_1$-user coded caching scheme with the same memory ratio and load as the $K_1$-user coded caching scheme from $\mathbf{P}$. However, Condition \ref{proper1} has some strong constraints, which some existing PDAs do not satisfy. In the following, based on a $g$-PDA $\mathbf{P}$ ($g\geq 2$) satisfying some weaker  constraints than Condition \ref{proper1}, which all previously existing PDAs satisfy, we propose a novel framework to construct a new PDA $\mathbf{P}_m$ by two steps. In the first step, we transform $\mathbf{P}$ into a $(g-1)$-PDA $\mathbf{P}_1$ satisfying Condition \ref{proper1} with $\lambda=g-1$. In the second step, based on the resulting PDA $\mathbf{P}_1$, we use Construction \ref{constr1} to generate an $m(g-1)$-PDA $\mathbf{P}_m$,  which leads to an $mK_1$-user coded caching scheme with the same memory ratio as the $K_1$-user coded caching scheme from $\mathbf{P}$.  The achieved load of the scheme from $\mathbf{P}_m$ is $\frac{g}{g-1}$ times of that of the scheme from $\mathbf{P}$.

Next we will use an example to illustrate the detailed steps.

\begin{example}
\label{ex2}
The following array  $\mathbf{P}$ is a $4$-$(5,10,6,5)$ PDA, which is exactly the $5$-user MN PDA with memory ratio $\frac{M}{N}=\frac{3}{5}$ and load $R=\frac{1}{2}$,
\begin{eqnarray}
\label{pda3}
\mathbf{P}=\left(\begin{array}{ccccc}
*&*&*&1&2\\
*&*&1&*&3\\
*&*&2&3&*\\
*&1&*&*&4\\
*&2&*&4&*\\
*&3&4&*&*\\
1&*&*&*&5\\
2&*&*&5&*\\
3&*&5&*&*\\
4&5&*&*&*\\
\end{array}\right).
\end{eqnarray}
Each row of $\mathbf{P}$ has three stars. For integer $1$, there is no star row, so $\mathbf{P}$ does not satisfy Condition~\ref{proper1}. Our construction contains the following two steps, as illustrated in Fig~\ref{transformP}.
\begin{figure}
  \centering
  \includegraphics[width=4.5in]{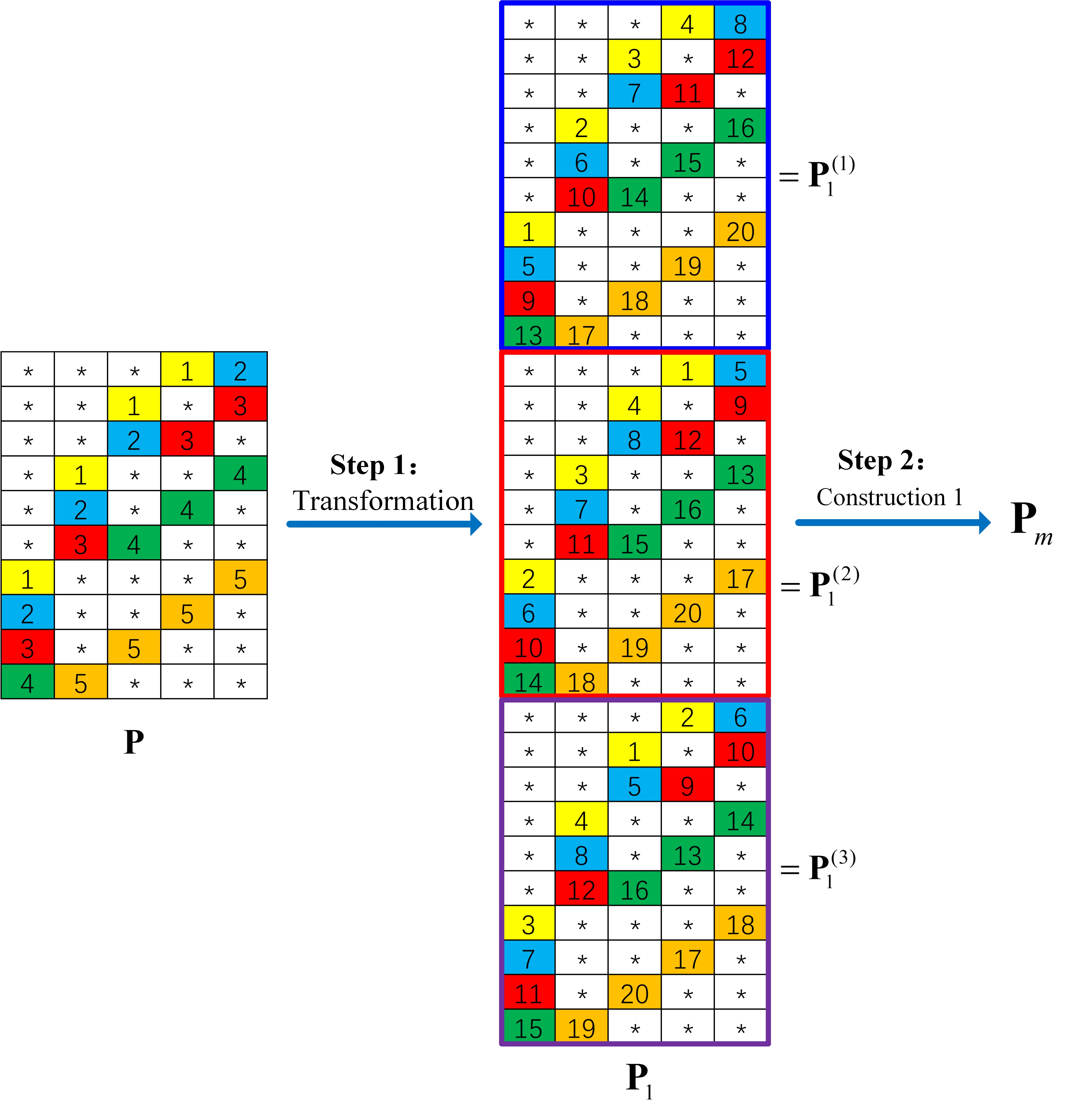}\\
  \caption{The process of generating $\mathbf{P}_m$ from $\mathbf{P}$ }\label{transformP}
\end{figure}

{\bf Step 1.} $\mathbf{P}$ is transformed into a $(g-1)$-PDA $\mathbf{P}_1$ with dimension $30\times5$ as follows. It can be seen that each integer $s$ appears  $4$ times in $\mathbf{P}$. We copy $\mathbf{P}$ three times. For integer $s \in [1:5]$,  we generate $4$ consecutive integers $(4(s-1)+1, 4(s-1)+2, 4(s-1)+3, 4s)$, and replace the  $4$ integer $s$'s in each replica by
a different rotation of  $(4(s-1)+1, 4(s-1)+2, 4(s-1)+3, 4s)$. The three updated replicas are denoted by $\mathbf{P}_1^{(1)}$, $\mathbf{P}_1^{(2)}$ and $\mathbf{P}_1^{(3)}$.
For example, we replace the  $4$ integer $1$'s in the first replica by $(1,2,3,4)$, from left to right, to generate $\mathbf{P}_1^{(1)}$; we replace the  $4$ integer $1$'s  in  the second replica by $(2,3,4,1)$ to generate $\mathbf{P}_1^{(2)}$; we replace the  $4$ integer $1$'s  in the third replica by $(3,4,1,2)$ to generate $\mathbf{P}_1^{(3)}$. By merging $\mathbf{P}_1^{(1)}$, $\mathbf{P}_1^{(2)}$ and $\mathbf{P}_1^{(3)}$ vertically into one array, the resulting array is denoted by $\mathbf{P}_1$.

 It is easy to verify that $\mathbf{P}_1$ is a $3$-$(5,30,18,20)$ PDA. Moreover, we will show that $\mathbf{P}_1$ satisfies Condition \ref{proper1} with $\lambda=3$. $\mathbf{P}_1$ satisfies the first item of Condition \ref{proper1} obviously.
 Let us focus on integer $1$ in $\mathbf{P}_1$. From Fig. \ref{transformP}, we have
  $$\mathbf{P}_1^{(1)}(7,1)=\mathbf{P}_1^{(2)}(1,4)=\mathbf{P}_1^{(3)}(2,3)=1.$$
 Since $1$ appears $4$ times in $\mathbf{P}$, there is exactly one position    (assumed to be $(p,q) \in [1:10] \times [1:5]$) where $\mathbf{P}(p,q)=1$ and none of  $\mathbf{P}^{(1)}_1(p,q),  \mathbf{P}^{(2)}_1(p,q),  \mathbf{P}^{(3)}_1(p,q)$ is equal to $1$. Obviously, $(p,q)=(4,2)$ in this example. We set the $p=4$-th row of  $\mathbf{P}_1$ as the star row for integer $1$.
 Similarly, we can assign exactly one star row among the first ten rows for each integer in $\mathbf{P}_1$, as shown in Table \ref{tablers}. Recall that for any $j\in[1:10]$, the set of integers for which the assigned star row is the $j$-th row, is denoted by $B_j$, which is illustrated in  Table \ref{tablebi}.
 From Table \ref{tablebi}, we have $|B_j|=2$ for each $j\in[1:10]$. $\mathbf{P}_1$ satisfies the second item of Condition \ref{proper1}. So $\mathbf{P}_1$ satisfies Condition \ref{proper1} with $\lambda=3$, then it can serve as a base PDA for Construction \ref{constr1}.
  \begin{table}
  \centering
  \caption{The mapping $\phi$ for assigning exactly one star row for each integer in $\mathbf{P}_1$  }\label{tablers}
  \small{
  \begin{tabular}{|c|c|c|c|c|c|c|c|c|c|c|c|c|c|c|c|c|c|c|c|c|}
   \hline
   $s$  &$1$&$2$&$3$&$4$&$5$&$6$&$7$&$8$&$9$&$10$&$11$&$12$&$13$&$14$&$15$&$16$&$17$&$18$&$19$&$20$\\ \hline
   The star row $\phi(s)$&$4$&$2$&$1$&$7$&$5$&$3$&$1$&$8$&$6$&$3$&$2 $&$9 $&$6 $&$5 $&$4 $&$10 $&$9 $&$ 8$&$7 $&$ 10$\\ \hline
  \end{tabular}}
\end{table}
 \begin{table}
  \centering
  \caption{The sets $B_1,B_2,\ldots,B_{10}$ for $\mathbf{P}_1$  }\label{tablebi}
  \small{
  \begin{tabular}{|c|c|c|c|c|c|c|c|c|c|}
   \hline
   $B_1  $  &$B_2$     &$B_3$    &$B_4$     &$B_5$     &$B_6$     &$B_7$     &$B_8$     &$B_9$     &$B_{10}$\\ \hline
   $\{3,7\}$&$\{2,11\}$&$\{6,10\}$&$\{1,15\}$&$\{5,14\}$&$\{9,13\}$&$\{4,19\}$&$\{8,18\}$&$\{12,17\}$&$\{16,20\}$\\ \hline
  \end{tabular}}
\end{table}

{\bf Step 2.} For any positive integer $m$, based on $\mathbf{P}_1$, Construction \ref{constr1} is used to generate a $3m$-$(5m,3\cdot10^m,18\cdot10^{m-1},2\cdot10^m)$ PDA $\mathbf{P}_m$, which leads to a $(K,M,N)$ coded caching scheme with number of users $K=5m$, memory ratio $\frac{M}{N}=\frac{3}{5}$, coded caching gain $g=3m$ and subpacketization $F=3\cdot10^m$.


It is worth noting that when $K=5m$ and $\frac{M}{N}=\frac{3}{5}$, the coded caching gain and subpacketization of the MN scheme are $g_{MN}=3m+1$ and $F_{MN}={5m\choose 3m}$, respectively. From Stirling's Formula $n!\approx \sqrt{2\pi n}\left(\frac{n}{e}\right)^n(n\rightarrow\infty)$, we have $F_{MN}=\frac{(5m)!}{(3m)!(2m)!}\approx \frac{\sqrt{5}}{\sqrt{12\pi m}}\left(\frac{5^5}{3^3\cdot2^2}\right)^m\approx\frac{\sqrt{5}}{\sqrt{12\pi m}}29^m$ when $m\rightarrow \infty$. Hence, for the same number of users and memory ratio, the coded caching gain of the proposed scheme is only one less than that of the MN scheme,  while we reduce the subpacketization  from $\frac{\sqrt{5}}{\sqrt{12\pi m}}29^m$ to $3\cdot10^m$.
\hfill $\square$
\end{example}

We are now ready to generalize our construction in Example~\ref{ex2}, which contains two steps.
\begin{itemize}
\item  {\bf Step 1}.  For any $g$-$(K_1,F_1,Z_1,S_1)$ PDA $\mathbf{P}$, we transform $\mathbf{P}$ into a $(g-1)$-$(K_1,(g-1)F_1,(g-1)Z_1,gS_1)$ PDA, denoted by $\mathbf{P}_1$. The construction of  $\mathbf{P}_1$ can be intuitively explained as follows. We copy $\mathbf{P}$ vertically $g-1$ times. For integer $s \in [1:S_1]$,  we replace the $g$ integer $s$'s in each replica by $g$ consecutive integers $(g(s-1)+1, g(s-1)+2,\ldots, gs)$ in a different rotation order, from left to right.
    Mathematically,   the $(g-1)F_1\times K_1$ array $\mathbf{P}_1$ is defined as
    \begin{equation}
    \label{eqPDAtoDPDA}
    \mathbf{P}_1(j,k)=\begin{cases}
    *, \ \ &\text{if} \ \ \mathbf{P}(<j>_{F_1},k)=*,\\
    g(s-1)+v, \ \ &\text{if} \ \ \mathbf{P}(<j>_{F_1},k)=s,T_s[\eta]=(<j>_{F_1},k),\\
    &i=\lceil\frac{j}{F_1}\rceil,v=\Phi_{i-1}((1,2,\ldots,g))[\eta],
    \end{cases}
    \end{equation}
    where
    \begin{equation}
    \label{Ts}
    \begin{array}{ccc}
    T_s&=&\{(j,k)|\mathbf{P}(j,k)=s,j\in[1:F_1],k\in[1:K_1]\} \\
    &=&\{(j_{s,1},k_{s,1}),(j_{s,2},k_{s,2}),\ldots, (j_{s,g},k_{s,g})\}\ \ \ \ \ \ \
    \end{array}
    \end{equation}
    with $k_{s,1}<k_{s,2}<\cdots<k_{s,g}$. $\Phi_i$ is a rotation function, defined as
   \begin{equation}
   \label{phi}
   \Phi_i((a_1,a_2,\ldots,a_{g}))=(a_{i+1},\ldots,a_{g},a_1,\ldots,a_{i})
   \end{equation}
   for any $g$-length vector $(a_1,a_2,\ldots,a_{g})$.  Recall that $T_s[\eta]$ denotes the element of $T_s$ whose second coordinate is the $\eta$-th smallest in the second coordinate of all elements, i.e., $T_s[\eta]=(j_{s,\eta},k_{s,\eta})$; ${\bf a}[\eta]$ denotes the $\eta$-th coordinate of the vector ${\bf a}$.

   Let us go back to Example \ref{ex2}, we have $F_1=10$ and $g=4$. Let us consider $\mathbf{P}_1(1,4), \mathbf{P}_1(11,4)$ and $\mathbf{P}_1(21,4)$ in Fig. \ref{transformP}. Since $<1>_{10}=<11>_{10}=<21>_{10}=1$, $\mathbf{P}(1,4)=s=1$ and $T_1=\{(j,k)|\mathbf{P}(j,k)=1,j\in[1:10],k\in[1:5]\}=\{(7,1),(4,2),(2,3),(1,4)\}$, we have $(1,4)=T_1[4]$. So from \eqref{eqPDAtoDPDA} we have $\mathbf{P}_1(1,4)=g(s-1)+v=4$, since $i=\lceil\frac{1}{10}\rceil=1$ and $v=\phi_{i-1}((1,2,3,4))[4]=\phi_{0}((1,2,3,4))[4]=(1,2,3,4)[4]=4$; $\mathbf{P}_1(11,4)=g(s-1)+v=1$, since $i=\lceil\frac{11}{10}\rceil=2$ and  $v=\phi_{i-1}((1,2,3,4))[4]=\phi_1((1,2,3,4))[4]=(2,3,4,1)[4]=1$; $\mathbf{P}_1(21,4)=g(s-1)+v=2$, since $i=\lceil\frac{21}{10}\rceil=3$ and $v=\phi_{i-1}((1,2,3,4))[4]=\phi_2((1,2,3,4))[4]=(3,4,1,2)[4]=2$.

We then introduce the following condition, which is a sufficient condition on   $\mathbf{P}$  for our construction.
\begin{condition}
\label{proper2}
Each row has the same number of stars.
\hfill $\square$
\end{condition}
With the above condition, we have the following proposition, which will be proved in   Appendix \ref{prproP1k}.
\begin{proposition}
\label{proP1k}
If $\mathbf{P}$ is a $g$-$(K_1,F_1,Z_1,S_1)$ PDA satisfying Condition \ref{proper2}, the array $\mathbf{P}_1$ defined in \eqref{eqPDAtoDPDA} is a $(g-1)$-$(K_1,(g-1)F_1,(g-1)Z_1,gS_1)$ PDA satisfying Condition \ref{proper1} with $\lambda=g-1$.
\hfill $\square$
\end{proposition}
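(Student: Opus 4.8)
The plan is to verify directly from the definition \eqref{eqPDAtoDPDA} that $\mathbf{P}_1$ is a $(g-1)$-regular $(K_1,(g-1)F_1,(g-1)Z_1,gS_1)$ PDA, and then to exhibit the parameter $\lambda=g-1$ together with an explicit witnessing map $\phi$ for Condition \ref{proper1}. The one computation I would isolate at the outset is the closed form of the rotation \eqref{phi}: for all $k\in[0:g-1]$ and $\eta\in[1:g]$ one has $\Phi_k((1,2,\ldots,g))[\eta]=\langle\eta+k\rangle_g$. With this, the $\eta$-th occurrence of an integer $s$ of $\mathbf{P}$ (ordered by column as in \eqref{Ts}) is relabelled in the $i$-th vertical copy of $\mathbf{P}$ by $g(s-1)+\langle\eta+i-1\rangle_g$. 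Since $\eta\mapsto\langle\eta+i-1\rangle_g$ is a bijection of $[1:g]$ and distinct $s$ land in disjoint length-$g$ blocks, all labels lie in $[1:gS_1]$ and, for each fixed $i$, each label $g(s-1)+v$ with $v\in[1:g]$ occurs exactly once within the $i$-th copy. This yields at once that $\mathbf{P}_1$ has size $(g-1)F_1\times K_1$, has $(g-1)Z_1$ stars in each column (so C1 holds), contains every integer of $[1:gS_1]$, and in fact contains each of them exactly $g-1$ times (one per copy), i.e.\ $\mathbf{P}_1$ is $(g-1)$-regular.

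The substantive step is Condition C3 of Definition \ref{def-PDA}. Suppose $\mathbf{P}_1(j_1,k_1)=\mathbf{P}_1(j_2,k_2)=g(s-1)+v$ at two distinct positions; put $i_t=\lceil j_t/F_1\rceil$, $p_t=\langle j_t\rangle_{F_1}$, and let $\eta_t$ be defined by $T_s[\eta_t]=(p_t,k_t)$. The closed form gives $\langle\eta_1+i_1-1\rangle_g=\langle\eta_2+i_2-1\rangle_g$, hence $\eta_1+i_1\equiv\eta_2+i_2\pmod g$. If $i_1=i_2$ then $\eta_1=\eta_2$, forcing the same position, which is excluded; so $i_1\neq i_2$, and since $|i_1-i_2|\le g-2$ we get $i_1-i_2\not\equiv0\pmod g$, whence $\eta_1\neq\eta_2$. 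Thus $(p_1,k_1)$ and $(p_2,k_2)$ are two distinct occurrences of $s$ in $\mathbf{P}$; because no integer repeats in a row or in a column of a PDA, $p_1\neq p_2$ and $k_1\neq k_2$, and C3 for $\mathbf{P}$ gives $\mathbf{P}(p_1,k_2)=\mathbf{P}(p_2,k_1)=*$. By \eqref{eqPDAtoDPDA} stars are inherited row-wise, so $\mathbf{P}_1(j_1,k_2)=\mathbf{P}_1(j_2,k_1)=*$, which is C3.

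Finally I would check Condition \ref{proper1} with $\lambda=g-1$. Item~1 is immediate: $g-1$ divides $(g-1)F_1$ and $(g-1)Z_1$, and by \eqref{eqPDAtoDPDA} the star pattern of row $j$ of $\mathbf{P}_1$ equals that of row $\langle j\rangle_{F_1}$ of $\mathbf{P}$, so $\mathbf{P}_1(j,k)=*$ iff $\mathbf{P}_1(\langle j\rangle_{F_1},k)=*$ (note $\tfrac{(g-1)F_1}{\lambda}=F_1$). For Item~2 I first locate the columns carrying a label $g(s-1)+v$: as $i$ runs over $[1:g-1]$ its occurrence index $\eta=\langle v-i+1\rangle_g$ sweeps all of $[1:g]$ except $\eta^{*}:=\langle v+1\rangle_g$, so $g(s-1)+v$ occupies exactly the columns $\{k_{s,\eta}:\eta\in[1:g]\setminus\{\eta^{*}\}\}$. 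I then set $\phi\big(g(s-1)+v\big):=j_{s,\eta^{*}}\in[1:F_1]$, the row of the $\eta^{*}$-th occurrence of $s$ in $\mathbf{P}$; since $\mathbf{P}(j_{s,\eta^{*}},k_{s,\eta^{*}})=\mathbf{P}(j_{s,\eta},k_{s,\eta})=s$, C3 for $\mathbf{P}$ forces $\mathbf{P}(j_{s,\eta^{*}},k_{s,\eta})=*$ for every $\eta\neq\eta^{*}$, so that row is a star row of $\mathbf{P}_1$ for $g(s-1)+v$. For the cardinality of $B_j=\{s'\in[1:gS_1]:\phi(s')=j\}$, this is exactly where Condition \ref{proper2} is needed: each non-star entry of row $j$ of $\mathbf{P}$ is an integer $s$ appearing there as its $\eta$-th occurrence ($j=j_{s,\eta}$) and contributes to $B_j$ the unique label $g(s-1)+v$ with $\langle v+1\rangle_g=\eta$; distinct non-star entries of row $j$ carry distinct integers, hence distinct labels, and since every row of $\mathbf{P}$ has the same number of stars it has $K_1-\tfrac{Z_1K_1}{F_1}=\tfrac{gS_1}{F_1}$ non-star entries (using $K_1(F_1-Z_1)=gS_1$), giving $|B_j|=\tfrac{gS_1}{F_1}=\tfrac{\lambda\,(gS_1)}{(g-1)F_1}$, the required uniform size. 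The parts I expect to be most delicate are the C3 argument — correctly extracting the congruence $\eta_1+i_1\equiv\eta_2+i_2\pmod g$ and discarding the degenerate same-copy/same-row/same-column configurations — and the counting identity for $|B_j|$, which is precisely the point at which hypothesis Condition \ref{proper2} cannot be avoided.
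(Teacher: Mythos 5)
Your proof is correct and follows essentially the same route as the paper's: direct verification of C1--C3 using the closed form $\Phi_{i-1}((1,\ldots,g))[\eta]=\langle\eta+i-1\rangle_g$, the star-row assignment $\phi\bigl(g(s-1)+v\bigr)=j_{s,\langle v+1\rangle_g}$, and Condition~\ref{proper2} to equalize the $|B_j|$ (your per-copy bijection argument for $(g-1)$-regularity is in fact a bit cleaner than the paper's list-then-average step). The only half-line worth adding is at the very end: your entry-to-label map a priori yields only $|B_j|\ge gS_1/F_1$, and equality needs either the remark that every $s'=g(s-1)+v\in B_j$ is the image of the non-star entry $(j,k_{s,\langle v+1\rangle_g})$ (so the map is onto), or, as the paper does, the global count $\sum_{j=1}^{F_1}|B_j|=gS_1$.
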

\item {\bf Step 2.}    Since the PDA $\mathbf{P}_1$ defined in \eqref{eqPDAtoDPDA} can serve as a base PDA for Construction \ref{constr1}, for any positive integer $m$, based on $\mathbf{P}_1$,  Construction \ref{constr1} is used to generate a new PDA $\mathbf{P}_m$.
\end{itemize}

To summarize,   the construction for Theorem \ref{theopkm} is given as follows.
\begin{construction}
\label{constructpmk}
Given a $g$-$(K_1,F_1,Z_1,S_1)$ PDA $\mathbf{P}$ satisfying Condition \ref{proper2} where $g=\frac{K_1(F_1-Z_1)}{S_1}\in\mathbb{Z}$ and $g\geq 2$, for any positive integer $m\geq 2$, we construct $\mathbf{P}_m$ by Construction \ref{constr1}, based on $\mathbf{P}_1$ defined in \eqref {eqPDAtoDPDA}.
\hfill $\square$
\end{construction}

From Proposition \ref{proP1k} and Corollary \ref{cothm1}, we have the following lemma, from which Theorem \ref{theopkm} is proved.
\begin{lemma}
\label{lem:Theoerem 2 PDA}
The array $\mathbf{P}_{m}$ generated by Construction \ref{constructpmk} is an $m(g-1)$-$(mK_1,(g-1)F_1^m, (g-1)Z_1F_1^{m-1},gS_1F_1^{m-1})$ PDA.
\hfill $\square$
\end{lemma}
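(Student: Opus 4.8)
The plan is to prove Lemma \ref{lem:Theoerem 2 PDA} by chaining together the two results that have already been established for the two steps of Construction \ref{constructpmk}. First I would recall that Construction \ref{constructpmk} proceeds in two stages: given the $g$-$(K_1,F_1,Z_1,S_1)$ PDA $\mathbf{P}$ satisfying Condition \ref{proper2}, Step 1 produces the array $\mathbf{P}_1$ defined in \eqref{eqPDAtoDPDA}, and Step 2 feeds $\mathbf{P}_1$ into Construction \ref{constr1} to produce $\mathbf{P}_m$. By Proposition \ref{proP1k}, the intermediate array $\mathbf{P}_1$ is a $(g-1)$-$(K_1,(g-1)F_1,(g-1)Z_1,gS_1)$ PDA that satisfies Condition \ref{proper1} with parameter $\lambda=g-1$. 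This is exactly the input hypothesis required to invoke Corollary \ref{cothm1} (the regular-PDA version of Theorem \ref{theopm}).

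Next I would substitute the parameters of $\mathbf{P}_1$ into Corollary \ref{cothm1}. Writing $K_1'=K_1$, $F_1'=(g-1)F_1$, $Z_1'=(g-1)Z_1$, $S_1'=gS_1$, $\lambda'=g-1$, and regularity $g'=g-1$, Corollary \ref{cothm1} guarantees an $m g'$-$\bigl(mK_1',\ \lambda'(\tfrac{F_1'}{\lambda'})^m,\ Z_1'(\tfrac{F_1'}{\lambda'})^{m-1},\ S_1'(\tfrac{F_1'}{\lambda'})^{m-1}\bigr)$ PDA for every positive integer $m$. The one arithmetic check to carry out is that $\tfrac{F_1'}{\lambda'}=\tfrac{(g-1)F_1}{g-1}=F_1$, so the subpacketization is $\lambda'(F_1)^m=(g-1)F_1^m$, the star count per column is $Z_1'F_1^{m-1}=(g-1)Z_1F_1^{m-1}$, the number of distinct integers is $S_1'F_1^{m-1}=gS_1F_1^{m-1}$, the number of users is $mK_1$, and the regularity is $m g'=m(g-1)$. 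These match the claimed parameters of $\mathbf{P}_m$ exactly, which completes the proof. One should also verify that the regularity parameter $g'=g-1$ of $\mathbf{P}_1$ is consistent with the formula $g'=\frac{K_1'(F_1'-Z_1')}{S_1'}=\frac{K_1(g-1)(F_1-Z_1)}{gS_1}=\frac{g-1}{g}\cdot\frac{K_1(F_1-Z_1)}{S_1}=\frac{g-1}{g}\cdot g=g-1$, so the integrality hypothesis of Corollary \ref{cothm1} is met.

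The only mild subtlety — and the place I expect a careful reader to want detail — is the bookkeeping on the parameter $\lambda$ when Construction \ref{constr1} is applied with $\lambda>1$: one must make sure that the row index set $\mathcal{F}$ in \eqref{rowindex} is built from the $m$-fold Cartesian product of each of the $\lambda=g-1$ blocks $\mathbf{P}_1^{(1)},\ldots,\mathbf{P}_1^{(g-1)}$ of $\mathbf{P}_1$ rather than of the whole array, and that Condition \ref{proper1} part 1) (which Proposition \ref{proP1k} certifies) is precisely what makes this well-defined. But since Construction \ref{constr1} and Corollary \ref{cothm1} are already stated to handle general $\lambda$, this is not an obstacle here; it is handled upstream. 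Hence the lemma follows immediately by composing Proposition \ref{proP1k} with Corollary \ref{cothm1}, and Theorem \ref{theopkm} in turn follows from Lemma \ref{lem:Theoerem 2 PDA} via Lemma \ref{th-Fundamental} (translating the PDA parameters into the memory ratio $\frac{Z_1}{F_1}$ and load $\frac{gS_1F_1^{m-1}}{(g-1)F_1^m}=\frac{g}{g-1}\cdot\frac{S_1}{F_1}$).
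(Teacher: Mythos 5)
Your proposal is correct and is exactly the paper's argument: the paper derives Lemma \ref{lem:Theoerem 2 PDA} directly by composing Proposition \ref{proP1k} (which certifies that $\mathbf{P}_1$ is a $(g-1)$-$(K_1,(g-1)F_1,(g-1)Z_1,gS_1)$ PDA satisfying Condition \ref{proper1} with $\lambda=g-1$) with Corollary \ref{cothm1}, and your parameter substitution and arithmetic checks match the claimed parameters. Nothing further is needed.
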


\section{Applications of the Novel PDA Construction Frameworks}
\label{sec:applications}
\subsection{Applications of Theorem \ref{theopm}}
\label{sub:Applications of TH1}
Two novel coded caching schemes are  obtained from Theorem \ref{theopm}.
First, when $a=b+r$, the PDA in \cite{YTCC} is an ${H-2b+r\choose r}$-$\left({H\choose b+r},{H\choose b},{H\choose b}-\right.$ $\left.{b+r\choose r}{H-b-r\choose b-r}, {H\choose 2b-r}{2b-r\choose b}\right)$ PDA satisfying Condition \ref{proper1} with $\lambda=1$ from  Proposition \ref{prostrong}. So the following result can be obtained from Corollary \ref{cothm1}.
\begin{corollary}
\label{strongbs}
For any positive integers $H$, $b$, $r$ and $m$ with $r<b\leq\frac{H}{2}$, there always exists an $m{H-2b+r\choose r}$-$\left(m{H\choose b+r},{H\choose b}^m,\right.$ $\left. \left({H\choose b}-{b+r\choose r}{H-b-r\choose b-r}\right){H\choose b}^{m-1}, {H-b\choose b-r}{H\choose b}^{m}\right)$ PDA, which leads to a $(K,M,N)$ coded caching scheme with number of users $K=m{H\choose b+r}$, memory ratio $\frac{M}{N}=1-\frac{{b+r\choose r}{H-b-r\choose b-r}}{{H\choose b}}$, subpacketization $F={H\choose b}^m$ and load $R={H-b\choose b-r}$.
\hfill $\square$
\end{corollary}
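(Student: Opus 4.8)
The plan is to specialize Proposition~\ref{prostrong} to the case $a=b+r$ and then feed the resulting PDA into Corollary~\ref{cothm1}. First I would substitute $a=b+r$ into the parameter list of the \cite{YTCC} PDA appearing in Proposition~\ref{prostrong}: the coded caching gain becomes ${H-a-b+2r\choose r}={H-2b+r\choose r}$, the number of users is $K_1={H\choose b+r}$, the subpacketization is $F_1={H\choose b}$, the number of stars per column is $Z_1={H\choose b}-{b+r\choose r}{H-b-r\choose b-r}$, and the number of distinct integers is $S_1={H\choose a+b-2r}{a+b-2r\choose a-r}={H\choose 2b-r}{2b-r\choose b}$. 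I would then check that the hypotheses of Proposition~\ref{prostrong}, namely $\max\{a,b\}<H$, $r<\min\{a,b\}$ and $a+b\le H+r$, reduce under $a=b+r$ to exactly the hypothesis $r<b\le\frac H2$ assumed in the corollary: since $a=b+r>b$ we have $\min\{a,b\}=b$ and $\max\{a,b\}=b+r$, the condition $a+b\le H+r$ becomes $2b\le H$, and $b+r<H$ is then implied by $b\le H/2$ together with $r<b$. Hence Proposition~\ref{prostrong} guarantees that this is a ${H-2b+r\choose r}$-regular $(K_1,F_1,Z_1,S_1)$ PDA satisfying Condition~\ref{proper1} with $\lambda=1$; note $g={H-2b+r\choose r}$ is a genuine positive integer because $H-2b+r\ge r\ge 1$.

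Next I would invoke Corollary~\ref{cothm1} with $\lambda=1$: from a $g$-$(K_1,F_1,Z_1,S_1)$ PDA satisfying Condition~\ref{proper1} with $\lambda=1$, where $g=\frac{K_1(F_1-Z_1)}{S_1}$ (which holds automatically for a $g$-regular PDA by counting the $K_1(F_1-Z_1)$ non-star entries in two ways), one obtains for every positive integer $m$ an $mg$-$(mK_1,F_1^m,Z_1F_1^{m-1},S_1F_1^{m-1})$ PDA. Substituting the values above gives an $m{H-2b+r\choose r}$-PDA with $m{H\choose b+r}$ users, subpacketization ${H\choose b}^m$, $Z=\left({H\choose b}-{b+r\choose r}{H-b-r\choose b-r}\right){H\choose b}^{m-1}$, and $S={H\choose 2b-r}{2b-r\choose b}{H\choose b}^{m-1}$. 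The only nontrivial bookkeeping step is the binomial identity
\begin{equation*}
{H\choose 2b-r}{2b-r\choose b}=\frac{H!}{b!\,(b-r)!\,(H-2b+r)!}={H\choose b}{H-b\choose b-r},
\end{equation*}
which lets me rewrite $S=S_1F_1^{m-1}$ in the claimed form ${H-b\choose b-r}{H\choose b}^m$.

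Finally I would apply Lemma~\ref{th-Fundamental} to the PDA just constructed to read off the coded caching scheme: the number of users is $K=mK_1=m{H\choose b+r}$, the memory ratio is $\frac MN=\frac ZF=\frac{Z_1}{F_1}=1-\frac{{b+r\choose r}{H-b-r\choose b-r}}{{H\choose b}}$, the subpacketization is $F={H\choose b}^m$, and the load is $R=\frac SF=\frac{S_1}{F_1}={H-b\choose b-r}$, which matches the statement. I expect no genuine difficulty here: the argument is a direct chain of substitutions, and the only places requiring care are verifying that the inequality hypotheses of Proposition~\ref{prostrong} collapse to $r<b\le H/2$ and confirming the binomial identity above; everything else is the routine cancellation of the common factor ${H\choose b}^{m-1}$ shared by $Z$, $S$ and $F$.
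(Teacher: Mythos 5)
Your proposal matches the paper's own argument exactly: the paper derives Corollary~\ref{strongbs} by specializing the PDA of \cite{YTCC} to $a=b+r$, invoking Proposition~\ref{prostrong} to get Condition~\ref{proper1} with $\lambda=1$, and then applying Corollary~\ref{cothm1} together with Lemma~\ref{th-Fundamental}. Your additional bookkeeping (reducing the hypotheses to $r<b\le H/2$ and the identity ${H\choose 2b-r}{2b-r\choose b}={H\choose b}{H-b\choose b-r}$) is correct and simply makes explicit what the paper leaves implicit.
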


Second, for any positive integer $g$, we construct a $g$-$(\lceil\frac{g^2}{2}\rceil+g,\lceil\frac{g^2}{2}\rceil+g,\lceil\frac{g^2}{2}\rceil,\lceil\frac{g^2}{2}\rceil+g)$ PDA satisfying Condition \ref{proper1} with $\lambda=1$, by Construction \ref{contru2} in Appendix \ref{prmultibs}. So the following result can be obtained from Corollary \ref{cothm1}.

\begin{theorem}
\label{multibs}
For any positive integers $g$ and $m$, there always exists an $mg$-$(m(\lceil\frac{g^2}{2}\rceil+g),(\lceil\frac{g^2}{2}\rceil+g)^m, \lceil\frac{g^2}{2}\rceil(\lceil\frac{g^2}{2}\rceil+g)^{m-1},(\lceil\frac{g^2}{2}\rceil+g)^m)$ PDA, which leads to a $(K,M,N)$ coded caching scheme with number of users $K=m(\lceil\frac{g^2}{2}\rceil+g)$, memory ratio $\frac{M}{N}=\frac{\lceil\frac{g^2}{2}\rceil}{\lceil\frac{g^2}{2}\rceil+g}$, subpacketization $F=(\lceil\frac{g^2}{2}\rceil+g)^m$ and load $R=1$.
\hfill $\square$
\end{theorem}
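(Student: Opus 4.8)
The plan is to derive Theorem~\ref{multibs} directly from Corollary~\ref{cothm1} once a suitable \emph{base} PDA is in hand, so the real content is the construction, for every positive integer $g$, of a $g$-regular $(n,n,n-g,n)$ PDA with $n=\lceil g^2/2\rceil+g$ that satisfies Condition~\ref{proper1} with $\lambda=1$. First I would describe this base PDA (this is Construction~\ref{contru2} of Appendix~\ref{prmultibs}): it is an $n\times n$ array in which each of the $n$ integers occupies exactly $g$ cells, no two in a common row or column, and the $g\times g$ subarray spanned by those $g$ rows and $g$ columns carries the integer on a permutation pattern and the symbol $*$ in every other position -- this ``block'' shape is exactly what condition~C3 forces for a regular PDA. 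The array is built by laying down these $n$ blocks so that their integer-cells are pairwise disjoint and every column (hence, by a counting argument, every row) receives precisely $g$ integer-cells and $n-g$ stars; the side length $n=\lceil g^2/2\rceil+g$ is what is needed so that the stars forced by different blocks can be accommodated without violating~C3, and the case $g=2$ recovers the PDA $\mathbf{P}_1$ in~\eqref{pda2}. Checking conditions~C1--C3 for the explicit placement rule of Construction~\ref{contru2} is then a routine verification.

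The second step is to show this base PDA satisfies Condition~\ref{proper1} with $\lambda=1$. Since here $F_1=S_1=n$, the requirement $|B_j|=\lambda S_1/F_1$ becomes $|B_j|=1$, so I must exhibit a \emph{bijection} $\phi\colon[1:n]\to[1:n]$ such that for every integer $s$ the $\phi(s)$-th row is a star row for $s$, i.e.\ every column that contains $s$ has a $*$ in row $\phi(s)$. In the block picture, integer $s$ occupies a row set $R_s$ and a column set $C_s$ of size $g$, and any row that meets each column of $C_s$ in a star is a star row for $s$; the packing used in Construction~\ref{contru2} is arranged so that one can select one such row for each $s$ with all $n$ choices distinct. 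I would write down the closed-form $\phi$ read off from the construction and verify directly that it is a bijection with the star-row property. I expect this to be the main obstacle, because it is precisely the place where the particular combinatorial arrangement of Construction~\ref{contru2} (rather than merely ``some $g$-regular square PDA'') is essential: positioning the blocks so that a full system of distinct star rows exists is the crux of the whole argument.

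Finally, I would invoke Corollary~\ref{cothm1} with $\lambda=1$, $K_1=F_1=S_1=\lceil g^2/2\rceil+g$, $Z_1=\lceil g^2/2\rceil$ and $g=\frac{K_1(F_1-Z_1)}{S_1}=g$. This yields, for every positive integer $m$, an $mg$-regular $\big(m(\lceil g^2/2\rceil+g),(\lceil g^2/2\rceil+g)^m,\lceil g^2/2\rceil(\lceil g^2/2\rceil+g)^{m-1},(\lceil g^2/2\rceil+g)^m\big)$ PDA. Applying Lemma~\ref{th-Fundamental} to this PDA gives a $(K,M,N)$ coded caching scheme with $K=m(\lceil g^2/2\rceil+g)$ users, memory ratio $\frac{M}{N}=\frac{Z}{F}=\frac{\lceil g^2/2\rceil}{\lceil g^2/2\rceil+g}$, subpacketization $F=(\lceil g^2/2\rceil+g)^m$, coded caching gain $mg$, and load $R=\frac{S}{F}=1$, which is exactly the claim. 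Every step after the construction of the base PDA and the verification of Condition~\ref{proper1} is pure parameter bookkeeping.
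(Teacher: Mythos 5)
Your overall architecture is exactly the paper's: build a $g$-regular $(n,n,n-g,n)$ base PDA with $n=\lceil g^2/2\rceil+g$ satisfying Condition~\ref{proper1} with $\lambda=1$, feed it to Corollary~\ref{cothm1}, and finish with Lemma~\ref{th-Fundamental}; the final parameter bookkeeping in your last paragraph is correct. The problem is that the two steps carrying all of the mathematical content are asserted rather than proved. You correctly observe that C3 forces each integer's $g$ occurrences to lie in distinct rows and columns with stars at all the cross positions, and you correctly identify that the crux is packing $n$ such ``blocks'' into an $n\times n$ array so that (i) every column gets exactly $n-g$ stars and (ii) one can choose a \emph{distinct} star row for each of the $n$ integers. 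But you never exhibit such a packing: phrases like ``the side length $n=\lceil g^2/2\rceil+g$ is what is needed so that the stars forced by different blocks can be accommodated'' and ``the packing used in Construction~\ref{contru2} is arranged so that one can select one such row for each $s$ with all $n$ choices distinct'' are exactly the claims that require proof, and you defer both (``I would write down the closed-form $\phi$\dots I expect this to be the main obstacle''). A blind proof must supply the explicit rule and verify it.

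For comparison, the paper's Construction~\ref{contru2} is a circulant-type array on $\mathbb{Z}_q$ with $q=n$: column $k$ has stars in rows $[k-z+1:k]_q$ with $z=\lceil g^2/2\rceil$ (a cyclic band of width $z$), and the $g$ non-star entries $j=<k+h>_q$, $h\in[1:g]$, are filled by $<j-(h-1)(g+2)>_q$ for $h\le\lceil g/2\rceil$ and by $<k-(g-h)(g+2)>_q$ otherwise; verifying C3 is a nontrivial case analysis on $(h_1,h_2)$ showing that the relevant differences such as $h_1(g+2)-h_2(g+1)$ always land inside the star band --- this is precisely where the value $\lceil g^2/2\rceil$ is used. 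The star-row assignment is then the explicit bijection $\phi(s)=<s-1>_q$, again verified by bounding $(g+1)(h-1)$ and $(g-h)(g+2)+1$ by $\lceil g^2/2\rceil$. None of this is ``routine verification'' in the sense of following from the block picture alone; without it the existence of the base PDA, and hence the theorem, is not established.
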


Note that when $g=2$, the $2$-$(4,4,2,4)$ PDA generated by Construction \ref{contru2} is exactly the PDA in \eqref{pda2}; thus
the illustration of the   $8$-user coded caching scheme in  Theorem~\ref{multibs} could be found in Example~\ref{exm1}.



\subsection{Applications of Theorem \ref{theopkm}}
For any positive integers $q,z$ with $z<q$, the authors in \cite{YCTC} showed that the MN PDA is a $(z+1)$-$\left(q,{q\choose z},{q-1\choose z-1},{q\choose z+1}\right)$ PDA where the number of users $K=q$ and memory ratio $\frac{M}{N}=\frac{z}{q}$.  Since the MN PDA satisfies Condition \ref{proper2}, we can obtain the following coded caching scheme from Theorem \ref{theopkm}.
\begin{theorem}
\label{MNbs}
For any positive integers $q,z,m$ with $z<q$ and $m\geq 2$, there exists an $mz$-$\left(mq, z{q\choose z}^m, z{q-1\choose z-1}{q\choose z}^{m-1}\right.,$ $\left. (z+1){q\choose z+1}{q\choose z}^{m-1}\right)$ PDA, which leads to a $(K,M,N)$ coded caching scheme with the number of users $K=mq$, memory ratio $\frac{M}{N}=\frac{z}{q}$, subpacketization $F=z{q\choose z}^m$ and load $R=\frac{q-z}{z}$.
\hfill $\square$
\end{theorem}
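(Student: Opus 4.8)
The plan is to derive Theorem \ref{MNbs} as an immediate consequence of Theorem \ref{theopkm} applied to the Maddah-Ali--Niesen PDA. Recall from \cite{YCTC} that for $z<q$ the MN PDA is a $(z+1)$-$\left(q,{q\choose z},{q-1\choose z-1},{q\choose z+1}\right)$ PDA. Thus the first step is to set $K_1=q$, $F_1={q\choose z}$, $Z_1={q-1\choose z-1}$, $S_1={q\choose z+1}$ and $g=z+1$, and to check that these satisfy the hypotheses of Theorem \ref{theopkm}: indeed $Z_1<F_1$ since $z<q$, the quantity $g=\frac{K_1(F_1-Z_1)}{S_1}=z+1$ is a positive integer, and $g\geq 2$ because $z\geq 1$; the requirement $m\geq 2$ is already part of the hypothesis.

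The one structural ingredient to verify is that the MN PDA satisfies Condition \ref{proper2}, i.e., that every row contains the same number of stars. For this I would invoke the explicit description of the MN PDA: its rows are indexed by the $z$-subsets $T\subseteq[1:q]$ and its columns by the elements $k\in[1:q]$, with the $(T,k)$ entry equal to $*$ precisely when $k\in T$ (and equal to the integer labelling $T\cup\{k\}$ otherwise). Hence row $T$ has exactly $|T|=z$ stars, independently of $T$, so Condition \ref{proper2} holds. This is the only place where a property of the base PDA is invoked; everything downstream is handled inside Theorem \ref{theopkm} (which internally uses Step 1 / Proposition \ref{proP1k} to pass through a $(g-1)$-PDA satisfying Condition \ref{proper1} with $\lambda=g-1$, and then Construction \ref{constr1}).

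Finally, I would substitute these parameters into the conclusion of Theorem \ref{theopkm}. It produces an $m(g-1)$-$\left(mK_1,(g-1)F_1^m,(g-1)Z_1F_1^{m-1},gS_1F_1^{m-1}\right)$ PDA, which is exactly the claimed $mz$-$\left(mq,z{q\choose z}^m,z{q-1\choose z-1}{q\choose z}^{m-1},(z+1){q\choose z+1}{q\choose z}^{m-1}\right)$ PDA. The associated $(K,M,N)$ scheme (via Lemma \ref{th-Fundamental}) then has $K=mK_1=mq$ users, memory ratio $\frac{M}{N}=\frac{Z_1}{F_1}=\frac{{q-1\choose z-1}}{{q\choose z}}=\frac{z}{q}$, subpacketization $F=(g-1)F_1^m=z{q\choose z}^m$, and load $R=\frac{g}{g-1}\cdot\frac{S_1}{F_1}=\frac{z+1}{z}\cdot\frac{{q\choose z+1}}{{q\choose z}}=\frac{z+1}{z}\cdot\frac{q-z}{z+1}=\frac{q-z}{z}$. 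Since every step reduces to routine binomial-coefficient simplifications, namely $\frac{{q-1\choose z-1}}{{q\choose z}}=\frac{z}{q}$ and $\frac{{q\choose z+1}}{{q\choose z}}=\frac{q-z}{z+1}$, there is no genuine obstacle here: the ``hard part'' is merely the observation that the MN PDA is regular in the placement of its stars, which as noted above is immediate from its combinatorial description.
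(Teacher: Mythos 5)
Your proposal is correct and follows exactly the paper's own route: the paper likewise obtains Theorem \ref{MNbs} by applying Theorem \ref{theopkm} to the $(z+1)$-$\left(q,{q\choose z},{q-1\choose z-1},{q\choose z+1}\right)$ MN PDA after noting it satisfies Condition \ref{proper2}, and the parameter substitutions match. Your explicit verification that each row of the MN PDA has exactly $z$ stars is a small detail the paper only asserts, so nothing is missing.
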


When $q=5$ and $z=3$, the PDA in Theorem \ref{MNbs} is exactly the $3m$-$(5m,3\cdot10^m,18\cdot10^{m-1},2\cdot10^m)$ PDA $\mathbf{P}_m$ in Example \ref{ex2}.

It is worth noting that all previously existing PDAs are $g$-PDAs satisfying Condition \ref{proper2}. For any positive integer $m$, based on each of these PDAs, an $mK_1$-user coded caching scheme can be obtained from Theorem \ref{theopkm}, which has the same memory ratio  as the base PDA
and  the  load equal to $\frac{g}{g-1}$ times the load of the base PDA.
For the sake of simplicity, we do not  list them here.

\subsection{Performance Analysis}
\label{performance}
To avoid the heavy notations,
the novel schemes in Corollary \ref{strongbs}, Theorem \ref{multibs},   \ref{MNbs} are referred to as Schemes A, B, C respectively. In the following, we compare them with the existing PDA schemes listed in Table \ref{knownPDA}. Since the scheme in \cite{CJYT} is a generalization of the schemes in \cite{YCTC,SZG}, Schemes A and B are only compared with the schemes in \cite{SJTLD,CJYT,TR}.
Under the constraint of uncoded placement and $N\geq K$, the achieved  load of the MN scheme, denoted by $R_{MN}$, is optimal. For a scheme, if the ratio of the achieved load and the optimal load $R_{MN}$ tends to $1$ when the number of users tends to infinity, the scheme is called {\em asymptotically optimal}. The partition PDA scheme in \cite{YCTC} is asymptotically optimal.
So Scheme C is compared with the MN scheme in \cite{MN} and the partition PDA scheme in \cite{YCTC}.

\begin{itemize}
\item {\bf Scheme A in Corollary \ref{strongbs}.}
Since it is difficult to provide an analytic comparison, some numerical results are provided in Table \ref{tablecom}. It can be seen from Table \ref{tablecom}  that under some parameters, Scheme A has a lower load and a much lower subpacketization than the schemes in \cite{SJTLD,CJYT} for the same number of users and memory ratio. Compared to the scheme in \cite{TR}, Scheme A has a significant reduction on the subpacketization.
\begin{table}
  \centering
  \caption{Numerical comparison  of Scheme A in Corollary \ref{strongbs} with the schemes in \cite{CJYT,SJTLD,TR}  }\label{tablecom}
  \small{
  \begin{tabular}{|c|c|c|c|c|}
   \hline
     $K$ &  $\frac{M}{N}$ &Scheme & $R$ &  $F$ \\ \hline

   \multirow{4}*{$56m$} & \multirow{4}*{$\frac{13}{28}$} &\tabincell{c}{Scheme A in Corollary \ref{strongbs}\\  $H=8,b=3,r=2$} & $5$ &$56^m$ \\ \cline{3-5}

   &  &\tabincell{c}{Grouping method in \cite{SJTLD} and memory sharing \\$\frac{M}{N}=\frac{3}{7}:n=8,k=7m, t=3m $\\ $\frac{M}{N}=\frac{4}{7}:n=8,k=7m, t=4m$ } & $\approx 9.5$ &$O(\frac{1}{\sqrt{m}}119^m)$\\ \cline{3-5}

   &  &\tabincell{c}{Scheme in \cite{CJYT}\\ $k=28,t=13,n=2m-1$}& $15$ &$\frac{1}{28}784^{m}$\\ \cline{3-5}

    & &\tabincell{c}{Scheme in \cite{TR} and memory sharing\\ $\frac{M}{N}=\frac{1}{4}:k=4,l=14m,$ \\ $\ \ \ \ \ \ \ \ \ \ \ n=14m-1,x=1$\\$\frac{M}{N}=\frac{1}{2}:k=2,l=28m,$ \\ $\ \ \ \ \ \ \ \ \ \ \ \ \ \ \ \ n=28m-1,x=1$ } & $1.1429$ &$O(16384^{2m})$\\
    \hline

   \multirow{4}*{$165m$} & \multirow{4}*{$\frac{31}{55}$} &\tabincell{c}{Scheme A in Corollary \ref{strongbs}\\  $H=11,b=2,r=1$} & $9$ &$55^m$  \\ \cline{3-5}

    & &\tabincell{c}{Grouping method in \cite{SJTLD} and memory sharing \\$\frac{M}{N}=\frac{6}{11}:n=15,k=11m,t=6m$ \\$\frac{M}{N}=\frac{7}{11}:n=15,k=11m,t=7m$}  & $\approx 11.7143$ & $O(\frac{1}{\sqrt{m}}1957^m)$\\ \cline{3-5}

    & &\tabincell{c}{Scheme in \cite{CJYT}\\ $k=55,t=31,n=3m-1$}  & $12$ & $\frac{2}{55}55^{3m}$\\ \cline{3-5}

   &  &\tabincell{c}{Scheme in \cite{TR} and memory sharing\\ $\frac{M}{N}=\frac{1}{3}:k=3,l=55m,$ \\ $\ \ \ \ \ \ \ \ \ \ \ \ \ \ \ \ n=55m-1,x=1$ \\$\frac{M}{N}=\frac{37}{55}:k=3,l=55m,$ \\ $\ \ \ \ \ \ \ \ \ \ \ \ \ \ \ \ \ \ \ n=54m-1,x=54$} & $0.976$ &$O(177147^{5m})$\\

    \hline

  \end{tabular}}
\end{table}

\item {\bf Scheme B in Theorem \ref{multibs}.}
    \begin{itemize}
    \item {\it Comparison to the scheme in \cite{CJYT}.}
      For any positive integers $m,g$, by letting  $K=m(\lceil\frac{g^2}{2}\rceil+g)$ and $\frac{M}{N}=\frac{\lceil\frac{g^2}{2}\rceil}{\lceil\frac{g^2}{2}\rceil+g}$,
    let $k=\lceil\frac{g^2}{2}\rceil+g$, $t=\lceil\frac{g^2}{2}\rceil$ and $n=m-1$, we have $\lfloor\frac{k-1}{k-t}\rfloor=\lceil\frac{g}{2}\rceil$. Then the coded caching gain and subpacketization of the scheme in \cite{CJYT} listed in the third row of Table \ref{knownPDA} are
$$g_{CJ}=(n+1)\left\lfloor\frac{k-1}{k-t}\right\rfloor=m\left\lceil\frac{g}{2}\right\rceil \ \ \text{and} \ \ F_{CJ}=\left\lfloor\frac{k-1}{k-t}\right\rfloor k^{n}=\left\lceil\frac{g}{2}\right\rceil\left(\left\lceil\frac{g^2}{2}\right\rceil+g\right)^{m-1},$$ respectively. While the coded caching gain and subpacketization of Scheme B are
$$g_{Th3}=mg \ \  \text{and} \ \ F_{Th3}=\left(\left\lceil\frac{g^2}{2}\right\rceil+g\right)^m,$$ respectively. Then $$\frac{F_{Th3}}{F_{CJ}}=\frac{\lceil\frac{g^2}{2}\rceil+g}{\lceil\frac{g}{2}\rceil}.$$
In this case,  the coded caching gain of Scheme B is almost twice as much as that of the scheme in \cite{CJYT}, while their subpacketizations are of almost the  same order of magnitude.

 \item  {\it Numerical comparison.}
  The numerical comparison of Scheme B with the schemes in \cite{CJYT,SJTLD,TR} is presented in Table \ref{tablecom1}. It can be seen that Scheme B has a much lower subpacketization than the grouping method in \cite{SJTLD} while their loads are almost the same; Scheme B has a lower load than the scheme in \cite{CJYT} while their subpacketizations are of the same order of magnitude when $g$ is odd; Scheme B has a significant advantage in the subpacketization compared to the schemes in \cite{CJYT,TR}  when $g$ is even.

 \begin{table}
  \centering
  \caption{Numerical comparison  of Scheme B in Theorem \ref{multibs} with the schemes in \cite{CJYT,SJTLD,TR}  }\label{tablecom1}
  \small{
  \begin{tabular}{|c|c|c|c|c|}
   \hline
     $K$ &  $\frac{M}{N}$ &Scheme & $R$ &  $F$ \\ \hline

     \multirow{4}*{$24m$} & \multirow{4}*{$\frac{3}{4}$} &\tabincell{c}{Scheme B in Theorem \ref{multibs}\\  $g=6$} & $1$ &$24^m$  \\ \cline{3-5}

     & &\tabincell{c}{Grouping method in \cite{SJTLD} \\$n=3,k=8m,t=6m$ }  & $\approx 1$ & $O(\frac{1}{\sqrt{m}}90^m)$\\ \cline{3-5}

     & &\tabincell{c}{Scheme in \cite{CJYT}\\ $k=4,t=3,n=6m-1$}  & $0.3333$ & $\frac{3}{4}4096^{m}$\\ \cline{3-5}

     &  &\tabincell{c}{Scheme in \cite{TR}\\ $k=3,l=8m,n=6m-1,x=3$ } & $0.375$ &$\frac{8}{3}729^m$\\

    \hline

    \multirow{4}*{$32m$} & \multirow{4}*{$\frac{25}{32}$} &\tabincell{c}{Scheme B in Theorem \ref{multibs}\\ $g=7$} & $1$ &$32^m$  \\ \cline{3-5}

    & &\tabincell{c}{Grouping method in \cite{SJTLD} and memory sharing \\$\frac{M}{N}=\frac{6}{8}:n=4,k=8m,t=6m$\\ $\frac{M}{N}=\frac{7}{8}:n=4,k=8m,t=7m$} & $\approx  1$ & $O(\frac{1}{\sqrt{m}}90^m)$\\ \cline{3-5}

    & &\tabincell{c}{Scheme in \cite{CJYT}\\ $k=32,t=25,n=m-1$}  & $1.75$ & $\frac{1}{8}32^{m}$\\ \cline{3-5}

    & &\tabincell{c}{Scheme in \cite{TR} \\ $k=4,l=8m,n=7m-1,x=7$}  & $0.2917$ & $6*16384^{m}$\\

    \hline

  \end{tabular}}
\end{table}
\end{itemize}

\item {\bf Scheme C in Theorem \ref{MNbs}.}
\begin{itemize}
 \item {\it Comparison to the MN scheme in \cite{MN}.}
 For any   positive integers $q,z,m$ with $z<q$ and $m\geq 2$, by letting $K=mq$ and $\frac{M}{N}=\frac{z}{q}$, the coded caching gain, subpacketization, and load of the MN scheme are $$g_{MN}=mz+1, \ \ F_{MN}={mq\choose mz}, \ \ \text{and} \ \ R_{MN}=\frac{mq(1-\frac{z}{q})}{mz+1},$$ respectively. In this case, the coded caching gain, subpacketization, and load of Scheme C are
    $$g_{Th4}=mz, \ \ F_{{Th4}}=z{q\choose z}^m, \ \ \text{and} \ \ R_{{Th4}}=\frac{q-z}{z},$$ respectively.
    Hence, there is only one additive loss in the coded caching gain for Scheme C compared to the MN scheme. When $m\rightarrow \infty$, the ratio of the loads  $$\frac{R_{Th4}}{R_{MN}}=\frac{mz+1}{mz}$$ tends to $1$, so Scheme C is asymptotically optimal.
In addition, from  Stirling's Formula $n!\approx\sqrt{2\pi n}\left(\frac{n}{e}\right)^n$ (with $n\rightarrow \infty$), when $m\rightarrow \infty$  we have
\begin{eqnarray*}
F_{MN}=&\frac{(mq)!}{(mz)!(mq-mz)!} \ \ \ \ \ \ \ \ \ \ \ \ \ \ \ \ \ \ \ \ \ \ \ \ \ \ \ \ \ \ \ \ \    \\
\approx&\frac{\sqrt{2\pi(mq)}\left(\frac{mq}{e}\right)^{mq}}{\sqrt{2\pi(mz)}\left(\frac{mz}{e}\right)^{mz}\sqrt{2\pi(mq-mz)}\left(\frac{mq-mz}{e}\right)^{mq-mz}}\\
=&\sqrt{\frac{q}{2\pi z(q-z)m}}\left(\frac{q^q}{z^z(q-z)^{q-z}}\right)^m.  \ \ \ \ \ \ \ \ \ \ \ \ \ \ \ \ \
\end{eqnarray*}
Then the ratio of the subpacketizations is $$\frac{F_{Th4}}{F_{MN}}\approx z\sqrt{\frac{2\pi z(q-z)m}{q}}\left(\frac{q^q}{{q\choose z}z^z(q-z)^{q-z}}\right)^{-m}.$$
From binomial expansion, we have
\begin{eqnarray*}
q^q=&((q-z)+z)^q \ \ \ \ \ \ \ \ \ \ \ \ \ \ \ \ \ \ \ \ \ \ \ \ \ \ \ \ \ \ \ \ \ \ \ \ \ \ \ \ \ \ \ \ \ \ \ \ \ \ \ \ \ \ \ \ \ \ \ \ \ \ \ \ \ \ \\
   =&(q-z)^q+{q\choose 1}(q-z)^{q-1}z+\ldots+{q\choose z}(q-z)^{q-z}z^z+\ldots+z^q \ \ \ \ \ \ \ \ \ \\
   \geq & {q\choose z-1}(q-z)^{q-z+1}z^{z-1}+{q\choose z}(q-z)^{q-z}z^z+{q\choose z+1}(q-z)^{q-z-1}z^{z+1}.\\
\end{eqnarray*}
Since
\begin{eqnarray*}
&{q\choose z-1}(q-z)^{q-z+1}z^{z-1}+{q\choose z+1}(q-z)^{q-z-1}z^{z+1} \\
=&{q\choose z}(q-z)^{q-z}z^z\left(\frac{q-z}{q-z+1}+\frac{z}{z+1}\right) \ \ \ \ \ \ \ \ \ \ \ \ \ \ \ \ \ \ \
\end{eqnarray*}
and $$\frac{q-z}{q-z+1}+\frac{z}{z+1}=\frac{2(q-z)z+q}{(q-z)z+q+1}\geq 1,$$
we have $q^q\geq 2{q\choose z}(q-z)^{q-z}z^z$, i.e.,
$$\frac{q^q}{{q\choose z}z^z(q-z)^{q-z}}\geq 2.$$
Hence, we have
\begin{align*}
 & \frac{F_{Th4}}{F_{MN}}\approx z\sqrt{\frac{2\pi z(q-z)m}{q}}\left(\frac{q^q}{{q\choose z}z^z(q-z)^{q-z}}\right)^{-m} \\
 & \leq z\sqrt{\frac{2\pi z(q-z)m}{q}} 2^{-m} \\
 & =O\left(\sqrt{m}2^{-m}\right) \\
 & =O\left(\sqrt{\frac{K}{q}}2^{-\frac{K}{q}}\right).
\end{align*}

\item  {\it Comparison to the partition PDA scheme in \cite{YCTC}.}
For any positive integers $m,q$ with $q\geq 2$, when $K=mq$ and $\frac{N}{M}=\frac{1}{q}$,  the load and subpacketization of the partition PDA scheme are
 $$R_{YC}=q-1 \ \ \text{and} \ \ F_{YC}=q^{m-1},$$
 respectively. While the load and subpacketization of Scheme C are
 $$R_{Th4}=q-1 \ \ \text{and} \ \ F_{Th4}=q^{m},$$
 respectively. By letting $K=mq$ and $\frac{N}{M}=\frac{q-1}{q}$, the load and subpacketization of the partition PDA scheme are
 $$R_{YC}=\frac{1}{q-1} \ \ \text{and} \ \ F_{YC}=(q-1)q^{m-1}$$
 respectively. In this case, the load and subpacketization of Scheme C are
 $$R_{Th4}=\frac{1}{q-1} \ \ \text{and} \ \ F_{Th4}=(q-1)q^{m},$$
 respectively. Hence, when $K=mq$ and $\frac{M}{N}=\frac{1}{q}$ or $\frac{q-1}{q}$, Scheme C has the same load as the partition PDA scheme while its subpacketization is $q$ times of that of the partition PDA scheme. However, the partition PDA scheme can be used only when $\frac{M}{N}=\frac{1}{q}$ or $\frac{q-1}{q}$, while Scheme C can be used for any memory ratio satisfying $\frac{KM}{N}\in \mathbb{Z}$. For example, when $K=7843$, the memory-load and memory-subpacketization tradeoffs  are given in Fig. \ref{comth4R} and Fig. \ref{comth4F}, respectively.
 It can be seen that Scheme C has a significant advantage in subpacketization compared to the MN scheme and has a significant advantage in load compared to the partition PDA scheme.

\begin{figure}
\centering
\includegraphics[width=4in]{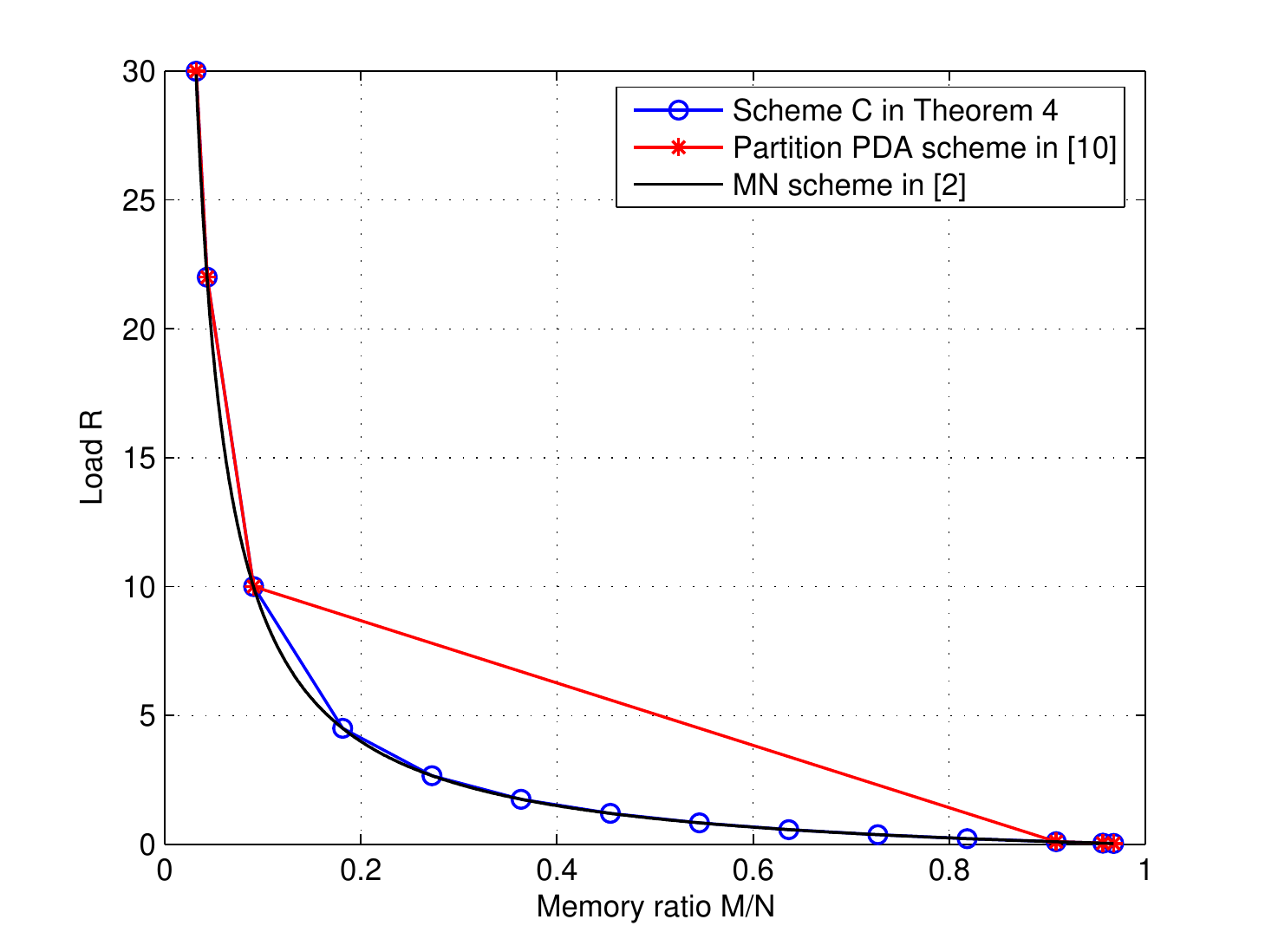}
\vskip 0.2cm
\caption{The load versus memory ratio for Scheme C in Theorem \ref{MNbs}, the partition PDA scheme in \cite{YCTC} and the MN scheme in \cite{MN} when $K=7843$. }\label{comth4R}
\end{figure}
\begin{figure}
\centering
\includegraphics[width=4in]{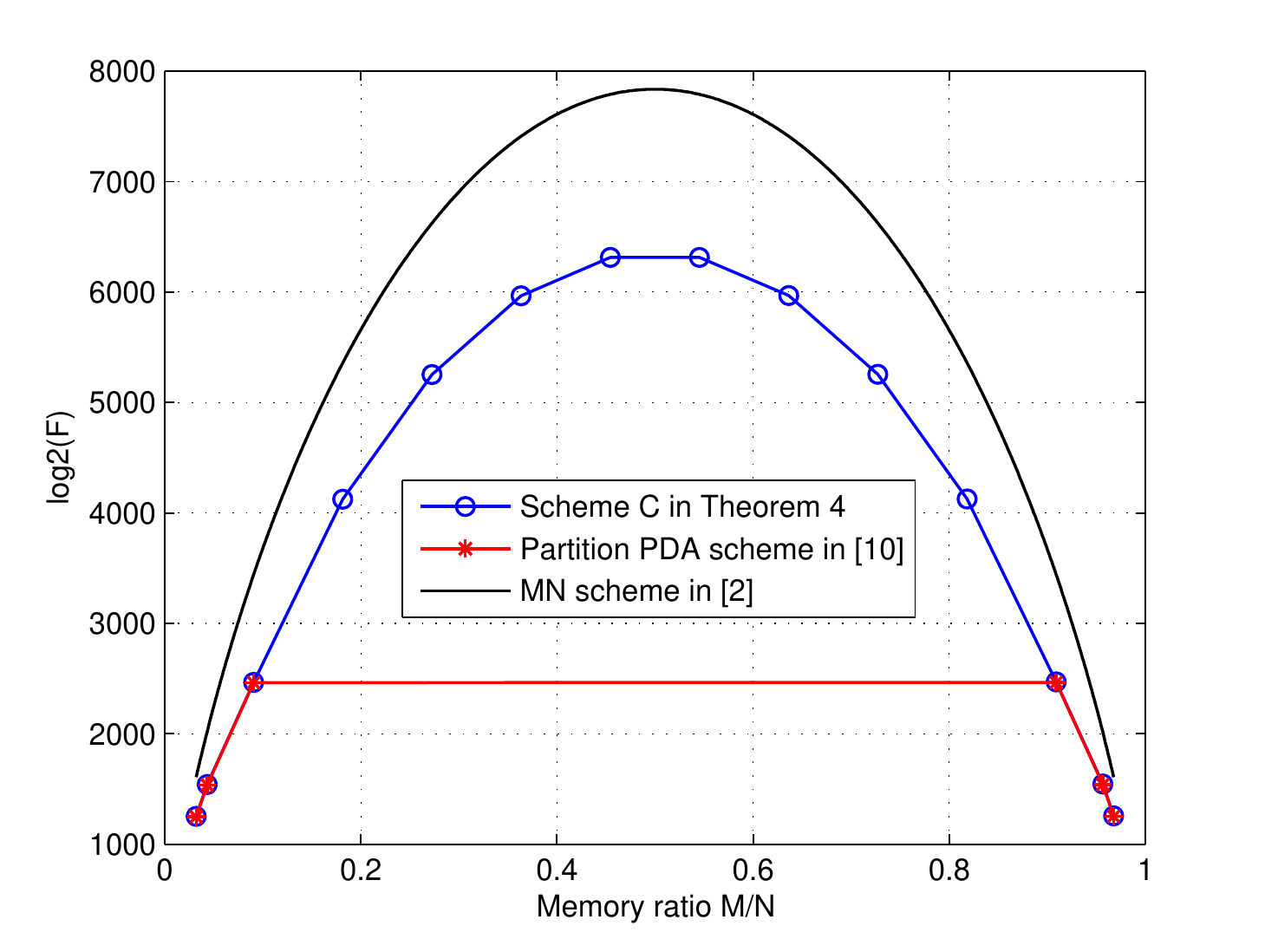}
\vskip 0.2cm
\caption{The base 2 logarithm of subpacketization versus memory ratio for Scheme C in Theorem \ref{MNbs}, the partition PDA scheme in \cite{YCTC} and the MN scheme in \cite{MN} when $K=7843$.   }\label{comth4F}
\end{figure}

\end{itemize}
\end{itemize}

\section{conclusion}
\label{conclusion}
In this paper, we studied the shared-link coded caching problem based on the PDA construction.  We first proposed two novel frameworks for constructing a PDA for $mK_1$ users via Cartesian product based on some existing PDA for $K_1$ users satisfying some constraints, where the resulting PDA achieves  a similar load as the original $K_1$-users  PDA with a significantly reduced subpacketization compared to the direction extension of the original $K_1$-users PDA to the   $mK_1$-users system.
As applications of the two frameworks, three new coded caching schemes were obtained, each of which has a significant advantage in the subpacketization and simultaneously has a lower or almost the same load, compared with   previously known coded caching schemes. Especially, for the third scheme which works for any number of users and any memory regime,   it has an exponential reduction on the subpacketization compared to  the MN scheme, while its coded caching gain is only one less than that of the MN scheme.

Further work includes the generalization of the proposed frameworks to heterogeneous caches by taking Cartesian product of different PDAs.

\appendices

\section{Proof of Lemma~\ref{lem:Theoerem 1 PDA}}
\label{prtheopm}
\begin{proof}
We will prove that the array $\mathbf{P}_m$ generated by Construction \ref{constr1} satisfies the definition of PDA (i.e., Definition \ref{def-PDA}).
\begin{itemize}
\item For any $(\delta,b)\in\mathcal{K}$ and any $i\in[1:\lambda]$, for column $(\delta,b)$ of $\mathbf{P}_m$, there are $\frac{Z_1}{\lambda}(\frac{F_1}{\lambda})^{m-1}$ stars in the rows indexed by $[(i-1)\frac{F_1}{\lambda}+1:i\frac{F_1}{\lambda}]^m$ from \eqref{constrPm}, since for each column of $\mathbf{P}_1$, there are $\frac{Z_1}{\lambda}$ stars in the rows indexed by $[(i-1)\frac{F_1}{\lambda}+1:i\frac{F_1}{\lambda}]$ from the first item of Condition
    \ref{proper1}. Hence, there are $Z=\lambda \frac{Z_1}{\lambda}(\frac{F_1}{\lambda})^{m-1}=Z_1(\frac{F_1}{\lambda})^{m-1}$ stars in column $(\delta,b)$ of $\mathbf{P}_m$ from \eqref{rowindex}. Condition C$1$ of Definition \ref{def-PDA} holds.

\item For any non-star entry in $\mathbf{P}_m$, say $\mathbf{P}_m({\bf f},(\delta,b))={\bf e}$, we will prove that for any $\delta'\in[1:m]$ with $\delta\neq\delta'$, there exist ${\bf f}'\in\mathcal{F}$ and $b'\in[1:K_1]$, such that $\mathbf{P}_m({\bf f}',(\delta',b'))={\bf e}$.
    From \eqref{constrPm} and \eqref{constre} we have
    $${\bf e}=(B_{<f_1>_{\frac{F_1}{\lambda}}}[\mu],\ldots,B_{<f_{\delta-1}>_{\frac{F_1}{\lambda}}}[\mu],\mathbf{P}_1(f_{\delta},b),B_{<f_{\delta+1}>_{\frac{F_1}{\lambda}}}[\mu],\ldots,B_{<f_{m}>_{\frac{F_1}{\lambda}}[\mu]}),$$
    where $\mathbf{P}_1(f_{\delta},b)=B_l[\mu]$.
    Let $s'=B_{<f_{\delta'}>_{\frac{F_1}{\lambda}}}[\mu]$, then $s'\in[1:S_1]$ and there exist $j\in[1:F_1]$ and $k\in[1:K_1]$, such that $\mathbf{P}_1(j,k)=s'$. Let $f'_{\delta'}=j$, $b'=k$. Assume that $j\in[(i-1)\frac{F_1}{\lambda}+1:i\frac{F_1}{\lambda}]$ with some $i\in[1:\lambda]$, let $f'_h=(i-1)\frac{F_1}{\lambda}+<f_h>_{\frac{F_1}{\lambda}}$ for any $h\in[1:m]\setminus \{\delta,\delta'\}$ and $f'_{\delta}=(i-1)\frac{F_1}{\lambda}+l$, then we have ${\bf f}'\in\mathcal{F}$ and $\mathbf{P}_m({\bf f}',(\delta',b'))={\bf e}$ from \eqref{constrPm} and \eqref{constre}.
    So the total number of different vectors in $\mathbf{P}_m$ is the number of different vectors in columns $(1,a)$ of $\mathbf{P}_m$ with all $a\in[1:K_1]$.

    For any $s\in[1:S_1]$, assume that $\mathbf{P}_1(j,k)=s$, $s=B_l[\mu]$ and $j\in[(i-1)\frac{F_1}{\lambda}+1:i\frac{F_1}{\lambda}]$ with some $i\in[1:\lambda]$. Let $f_1=j$ and $b=k$, for any $f_2,f_3,\ldots,f_{m}\in[(i-1)\frac{F_1}{\lambda}+1:i\frac{F_1}{\lambda}]$, we have $$\mathbf{P}_m({\bf f},(1,b))=(s,B_{<f_2>_{\frac{F_1}{\lambda}}}[\mu],\ldots,B_{<f_{m}>_{\frac{F_1}{\lambda}}}[\mu])$$
    from \eqref{constrPm} and \eqref{constre}. It implies that there are $(\frac{F_1}{\lambda})^{m-1}$ different vectors ${\bf e}$ appearing in $\mathbf{P}_m$ satisfying $e_1=s$. Hence, there are totally $S=S_1(\frac{F_1}{\lambda})^{m-1}$ different vectors in $\mathbf{P}_m$. Condition C$2$ of Definition \ref{def-PDA} holds.
\item For any two distinct entries $\mathbf{P}_m({\bf f},(\delta,b))$ and $\mathbf{P}_m({\bf f}',(\delta',b'))$, if $\mathbf{P}_m({\bf f},(\delta,b))=\mathbf{P}_m({\bf f}',(\delta',b'))={\bf e}\neq *$, we will prove that $\mathbf{P}_m({\bf f},(\delta',b'))=\mathbf{P}_m({\bf f}',(\delta,b))=*$. From \eqref{constrPm} and \eqref{constre} we have
    \begin{eqnarray}
    \label{eqc3}
    \begin{array}{ccc}
    {\bf e}&=&(B_{<f_1>_{\frac{F_1}{\lambda}}}[\mu],\ldots,B_{<f_{\delta-1}>_{\frac{F_1}{\lambda}}}[\mu],\mathbf{P}_1(f_{\delta},b),B_{<f_{\delta+1}>_{\frac{F_1}{\lambda}}}[\mu],\ldots,B_{<f_{m}>_{\frac{F_1}{\lambda}}}[\mu])\\
    &=&(B_{<f'_1>_{\frac{F_1}{\lambda}}}[\mu'],\ldots,B_{<f'_{\delta'-1}>_{\frac{F_1}{\lambda}}}[\mu'],\mathbf{P}_1(f'_{\delta'},b'),B_{<f'_{\delta'+1}>_{\frac{F_1}{\lambda}}}[\mu'],\ldots,B_{<f'_{m}>_{\frac{F_1}{\lambda}}}[\mu']),
    \end{array}
    \end{eqnarray}
    where $\mathbf{P}_1(f_\delta,b)=B_l[\mu]$ and $\mathbf{P}_1(f'_{\delta'},b')=B_{l'}[\mu']$.
    \begin{itemize}
    \item If $\delta=\delta'$, we have $<f_h>_{\frac{F_1}{\lambda}}=<f'_h>_{\frac{F_1}{\lambda}}$ for any $h\in[1:m]\setminus\{\delta\}$, $\mu=\mu'$ and $\mathbf{P}_1(f_\delta,b)=\mathbf{P}_1(f'_{\delta},b')\neq *$ from \eqref{eqc3}. Then $f_\delta=f'_\delta, b=b'$ or $\mathbf{P}_1(f_\delta,b')=\mathbf{P}_1(f'_{\delta},b)=*$ from Condition C$3$ of Definition \ref{def-PDA}. If $f_\delta=f'_\delta, b=b'$ hold, assume that $f_\delta\in[(i-1)\frac{F_1}{\lambda}+1:i\frac{F_1}{\lambda}]$ with some $i\in[1:\lambda]$, then from \eqref{rowindex} we have $f_h,f'_h\in[(i-1)\frac{F_1}{\lambda}+1:i\frac{F_1}{\lambda}]$ for any $h\in[1:m]$, which implies that $\lceil\frac{f_h}{F_1/\lambda}\rceil=\lceil\frac{f'_h}{F_1/\lambda}\rceil=i$. Hence, we have ${\bf f}={\bf f}'$, which contradicts the hypothesis that $\mathbf{P}_m({\bf f},(\delta,b))$ and $\mathbf{P}_m({\bf f}',(\delta,b'))$ are two distinct entries. So $\mathbf{P}_1(f_\delta,b')=\mathbf{P}_1(f'_{\delta},b)=*$ holds, which implies $\mathbf{P}_m({\bf f},(\delta,b'))=\mathbf{P}_m({\bf f}',(\delta,b))=*$ from \eqref{constrPm}.
    \item If $\delta\neq\delta'$, we have $\mathbf{P}_1(f_\delta,b)=B_{<f'_\delta>_{\frac{F_1}{\lambda}}}[\mu']$ and $\mathbf{P}_1(f'_{\delta'},b')=B_{<f_{\delta'}>_{\frac{F_1}{\lambda}}}[\mu]$ from \eqref{eqc3}, which implies that the $<f'_{\delta}>_{\frac{F_1}{\lambda}}$-th and $<f_{\delta'}>_{\frac{F_1}{\lambda}}$-th row of $\mathbf{P}_1$ are star rows for $\mathbf{P}_1(f_\delta,b)$ and $\mathbf{P}_1(f'_{\delta'},b')$ respectively, i.e., $\mathbf{P}_1(<f'_\delta>_{\frac{F_1}{\lambda}},b)=\mathbf{P}_1(<f_{\delta'}>_{\frac{F_1}{\lambda}},b')=*$, then we have $\mathbf{P}_1(f'_\delta,b)=\mathbf{P}_1(f_{\delta'},b')=*$ from the first item of Condition \ref{proper1}. Hence, we have $\mathbf{P}_m({\bf f}',(\delta,b))=\mathbf{P}_m({\bf f},(\delta',b'))=*$ from \eqref{constrPm}.
    \end{itemize}
    Condition C$3$ of Definition \ref{def-PDA} holds.
\end{itemize}
So $\mathbf{P}_m$ is an $(mK_1,\lambda (\frac{F_1}{\lambda})^m, Z_1(\frac{F_1}{\lambda})^{m-1},S_1(\frac{F_1}{\lambda})^{m-1})$ PDA.
\end{proof}

\section{Proof of Corollary \ref{cothm1}}
\label{prcothm1}
\begin{proof}
For the array $\mathbf{P}_m$ generated by Construction \ref{constr1}, from the proof of Theorem \ref{theopm}, we know that for any non-star entry, say $\mathbf{P}_m({\bf f},(\delta,b))={\bf e}\neq *$, for any $\delta'\in[1:m]$ with $\delta'\neq\delta$, there exist ${\bf f'}\in \mathcal{F}$ and $b'\in[1:K_1]$, such that $\mathbf{P}_m({\bf f'},(\delta',b'))={\bf e}$. So it is sufficient to prove that for any non-star entry in $\mathbf{P}_m$, say $\mathbf{P}_m({\bf f},(\delta,b))={\bf e}$, ${\bf e}$ appears $g$ times in columns $(\delta,a)$ of $\mathbf{P}_m$ with all $a\in[1:K_1]$.
Since $\mathbf{P}_m({\bf f},(\delta,b))={\bf e}$, we have $\mathbf{P}_1(f_{\delta},b)\neq *$ from \eqref{constrPm}. Let $\mathbf{P}_1(f_{\delta},b)=s$, then $s$ appears $g$ times in $\mathbf{P}_1$, since $\mathbf{P}_1$ is a $g$-PDA. Assume that $\mathbf{P}_1(j_{s,1},k_{s,1})=\mathbf{P}_1(j_{s,2},k_{s,2})=\ldots=\mathbf{P}_1(j_{s,g},k_{s,g})=s$. For any $h\in[1:g]$, assume that $j_{s,h}\in[(i_h-1)\frac{F_1}{\lambda}+1:i_h\frac{F_1}{\lambda}]$ with some $i_h\in[1:\lambda]$, then we have $\mathbf{P}_m(((i_h-1)F_1+<f_1>_{\frac{F_1}{\lambda}},\ldots,(i_h-1)F_1+<f_{\delta-1}>_{\frac{F_1}{\lambda}},j_{s,h},(i_h-1)F_1+<f_{\delta+1}>_{\frac{F_1}{\lambda}},\ldots,(i_h-1)F_1+<f_{m}>_{\frac{F_1}{\lambda}}),(\delta, k_{s,h}))={\bf e}$ from \eqref{constrPm} and \eqref{constre}. The proof is complete.
\end{proof}

\section{Proof of Proposition \ref{prostrong}}
\label{prprostrong}
\begin{proof}
The PDA in \cite{YTCC} is constructed as follows.
\begin{construction}(\cite{YTCC})
\label{constrct1}For any positive integers $H$, $a$, $b$, $r$ satisfying $\max\{a,b\}<H$, $r<\min\{a,b\}$ and $a+b\leq H+r$, let the row index set $\mathcal{F}={[1:H]\choose b}$ and the column index set $\mathcal{K}={[1:H]\choose a}$. There exists an ${H-a-b+2r\choose r}$-$\left({H\choose a},{H\choose b},{H\choose b}-{a\choose r}{H-a\choose b-r},{H\choose a+b-2r}{a+b-2r\choose a-r}\right)$ PDA $\mathbf{P}=(\mathbf{P}(B,A)|B\in \mathcal{F}, A\in \mathcal{K})$, defined as
\begin{eqnarray}
\label{eq-rule2}
\mathbf{P}(B,A)=\left\{
\begin{array}{cc}
((A\cup B)-(A\cap B),A-B)& \hbox{if} \ |A\cap B|=r,\\
*&\hbox{Otherwise}.
\end{array}
\right.
\end{eqnarray}
\end{construction}

When $a=b+r$, we will prove the PDA $\mathbf{P}$ generated by Construction \ref{constrct1} satisfies Condition \ref{proper1} with $\lambda=1$. Firstly, $\mathbf{P}$ satisfies the first item of Condition \ref{proper1} obviously. Secondly, for each non-star entry in $\mathbf{P}$, say $\mathbf{P}(B,A)=((A\cup B)-(A\cap B),A-B)$, it implies that $|A\cap B|=r$ and $|A-B|=a-r=b$. Hence, we have $A-B\in \mathcal{F}$. Moreover, $\mathbf{P}(A-B,A)=*$ since $|(A-B)\cap A|=|A-B|=b>r$. So the row $A-B$ is a star row for the non-star entry $((A\cup B)-(A\cap B),A-B)$. Let $\phi: ((A\cup B)-(A\cap B),A-B)\mapsto A-B$, then for each row $B'\in \mathcal{F}$, there are ${H-b\choose b-r}$ different non-star entries for which the assigned star row is the row $B'$, since the number of sets $(A\cup B)-(A\cap B)=(A-B)\cup(B-A)$ satisfying $A-B=B'$ is ${H-b\choose b-r}$. $\mathbf{P}$ satisfies the second item of Condition \ref{proper1}. So $\mathbf{P}$ satisfies Condition \ref{proper1} with $\lambda=1$. The proof is complete.
\end{proof}

\section{Proof of Proposition \ref{proPDAOA}}
\label{prproPDAOA}
\begin{proof}
The ${m\choose t}$-$\left({m\choose t}q^t,q^{m-1}, q^{m-1}-(q-1)^tq^{m-t-1}, (q-1)^tq^{m-1}\right)$ PDA in \cite{CWZW} is constructed as follows.
\begin{construction}(\cite{CWZW})
\label{conPDAOA}
For any positive integers $m,q,t$ with $t< m$ and $q\geq 2$, let the column index set $\mathcal{K}={[1:m]\choose t}\times [1:q]^t$ and the row index set
$$\mathcal{F}=\left\{\left(f_{1},f_{2},\ldots,f_{m-1}, <\sum_{i=1}^{m-1} f_i>_q\right)\ |\  f_1,f_2,\ldots,f_{m-1}\in [1:q]\right\},$$
There exists an ${m\choose t}$-$\left({m\choose t}q^t,q^{m-1}, q^{m-1}-(q-1)^tq^{m-t-1}, (q-1)^tq^{m-1}\right)$ PDA $\mathbf{P}=(\mathbf{P}({\bf f},{\bf k}))$ with ${\bf f}=(f_1,f_2,\ldots, f_{m})\in \mathcal{F}$ and  ${\bf k}=({\mathcal{T}},{\bf b})=(\{\delta_1,\delta_2,\ldots, \delta_{t}\},(b_1,b_2,\ldots,b_{t}))\in \mathcal{K}$ satisfying $1\leq\delta_1<\delta_2<\ldots<\delta_{t}\leq m$, defined as
\begin{eqnarray}
\label{eq-constr-PDA}
\mathbf{P}({\bf f},{\bf k})=\left\{
\begin{array}{ll}
({\bf e},n_{\bf e}) & \textrm{if}~d({\bf f}[\mathcal{T}]$, ${\bf b})=t, \\
* & \textrm{otherwise},
\end{array}
\right.
\end{eqnarray}
where ${\bf e}=(e_1,e_{2},\ldots,e_{m})\in[1:q]^m$ such that  \begin{eqnarray}
\label{eq-putting integer}
e_i=\left\{
\begin{array}{ll}
b_h & \textrm{if}\ i=\delta_h, h\in [1:t],\\[0.2cm]
f_i & \textrm{otherwise} \end{array}
\right.
\end{eqnarray}
and $n_{\bf e}$ is the  occurrence  order of ${\bf e}$ occurring in column ${\bf k}$. If $({\bf e},n_{\bf e})$ occurs in $\mathbf{P}$, we also say ${\bf e}$ occurs in $\mathbf{P}$.
\end{construction}

\begin{proposition}
\label{prooccurrence}
For the array $\mathbf{P}$ generated by Construction \ref{conPDAOA}, when $t\geq 2$, for each vector ${\bf e}\in[1:q]^m$, if ${\bf e}\in \mathcal{F}$, then ${\bf e}$ appears $(q-1)^{t-1}-(q-1)^{t-2}+\ldots+(-1)^t(q-1)$ times in each column where it appears; if ${\bf e}\notin \mathcal{F}$, then ${\bf e}$ appears $(q-1)^{t-1}-(q-1)^{t-2}+\ldots+(-1)^{t+1}$ times in each column where it appears.
\end{proposition}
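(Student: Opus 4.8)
The plan is to reduce the claim to a short counting lemma in $\mathbb{Z}/q\mathbb{Z}$. Fix a vector ${\bf e}=(e_1,\dots,e_m)\in[1:q]^m$ and a column ${\bf k}=({\mathcal T},{\bf b})$ with $\mathcal T=\{\delta_1,\dots,\delta_t\}$. Reading off \eqref{eq-constr-PDA} and \eqref{eq-putting integer}, the entry $\mathbf P({\bf f},{\bf k})$ has ${\bf e}$ as its first component precisely when ${\bf b}={\bf e}[\mathcal T]$ and ${\bf f}[[1:m]\setminus\mathcal T]={\bf e}[[1:m]\setminus\mathcal T]$ (so that the vector produced by \eqref{eq-putting integer} is exactly ${\bf e}$), while $f_{\delta_h}\neq b_h=e_{\delta_h}$ for every $h\in[1:t]$ (this is the non-star condition $d({\bf f}[\mathcal T],{\bf b})=t$), together with the row constraint ${\bf f}\in\mathcal F$. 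Hence ${\bf e}$ can occur only in columns with ${\bf b}={\bf e}[\mathcal T]$, and in any such column the number of occurrences of ${\bf e}$ equals the number of ${\bf f}\in\mathcal F$ that agree with ${\bf e}$ on $[1:m]\setminus\mathcal T$ and disagree with ${\bf e}$ on every coordinate of $\mathcal T$ (the occurrence index $n_{\bf e}$ merely enumerates these). This is the quantity to be computed.

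First I would turn membership in $\mathcal F$ into a single linear congruence. By the definition of $\mathcal F$ in Construction \ref{conPDAOA}, a vector ${\bf g}\in[1:q]^m$ lies in $\mathcal F$ iff $L({\bf g})\equiv 0\pmod q$, where $L({\bf g}):=g_1+\dots+g_{m-1}-g_m$. Since the relevant ${\bf f}$'s differ from ${\bf e}$ only in the coordinates indexed by $\mathcal T$, for $h\in[1:t]$ put $x_h:=\varepsilon_h\,(f_{\delta_h}-e_{\delta_h})\bmod q$, where $\varepsilon_h=1$ if $\delta_h\le m-1$ and $\varepsilon_h=-1$ if $\delta_h=m$; as $f_{\delta_h}$ ranges over $[1:q]\setminus\{e_{\delta_h}\}$ and negation modulo $q$ permutes the nonzero residues, each $x_h$ ranges bijectively over $\{1,2,\dots,q-1\}$. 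Moreover $L({\bf f})\equiv L({\bf e})+\sum_{h=1}^{t}x_h\pmod q$, so ${\bf f}\in\mathcal F$ is equivalent to $\sum_{h=1}^{t}x_h\equiv -L({\bf e})\pmod q$. Therefore the occurrence count of ${\bf e}$ in any column containing it equals
\[
M_t(c):=\#\Big\{(x_1,\dots,x_t)\in\{1,\dots,q-1\}^t:\ \textstyle\sum_{h=1}^{t}x_h\equiv c \pmod q\Big\},\qquad c:=-L({\bf e})\bmod q,
\]
and observe that ${\bf e}\in\mathcal F\iff L({\bf e})\equiv 0\iff c\equiv 0$.

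It remains to evaluate $M_t(c)$ and to show it depends on $c$ only through whether $c\equiv 0$. I would do this by a short recursion: $M_1(0)=0$, $M_1(c)=1$ for $c\not\equiv 0$, and conditioning on $x_t$ gives $M_t(0)=\sum_{x_t\neq 0}M_{t-1}(-x_t)$ and $M_t(c)=\sum_{x_t\neq 0}M_{t-1}(c-x_t)$; as $x_t$ runs over the nonzero residues, $-x_t$ runs over all nonzero residues while $c-x_t$ runs over all residues except $c$. Since $M_{t-1}(\cdot)$ is symmetric among the nonzero residues, one obtains $M_t(0)=(q-1)\,M_{t-1}(c_0)$ for any fixed nonzero $c_0$, and, using the total count $M_t(0)+(q-1)M_t(c_0)=(q-1)^t$, the clean recursion $M_t(c_0)=(q-1)^{t-1}-M_{t-1}(c_0)$. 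Unrolling from $M_1(c_0)=1$ yields $M_t(c_0)=(q-1)^{t-1}-(q-1)^{t-2}+\cdots+(-1)^{t+1}$, and hence $M_t(0)=(q-1)M_{t-1}(c_0)=(q-1)^{t-1}-(q-1)^{t-2}+\cdots+(-1)^{t}(q-1)$. (Alternatively, both closed forms come out at once from the roots-of-unity filter $M_t(c)=\tfrac1q\sum_{j=0}^{q-1}\omega^{-jc}\big(\sum_{x=1}^{q-1}\omega^{jx}\big)^t$ with $\omega$ a primitive $q$th root of unity, using $\sum_{x=1}^{q-1}\omega^{jx}=q-1$ for $j=0$ and $=-1$ otherwise.) Since $c\equiv 0$ exactly when ${\bf e}\in\mathcal F$, this gives precisely the two values stated in the proposition; the hypothesis $t\ge 2$ is what makes the displayed alternating sums well-indexed.

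The read-off from \eqref{eq-constr-PDA}--\eqref{eq-putting integer} and the substitution $x_h=\varepsilon_h(f_{\delta_h}-e_{\delta_h})$ are routine. The two points needing care are: (i) the sign $\varepsilon_h$ when $m\in\mathcal T$, since $f_m$ enters $L$ with coefficient $-1$ --- this is harmless because negation is a bijection of the nonzero residues, but it is easy to overlook; and (ii) the evaluation of $M_t(c)$, and in particular the fact that it is the same for every nonzero $c$. This independence together with the recursion (or roots-of-unity computation) is where essentially all the content of the proposition lies, and I expect it to be the main, if modest, obstacle.
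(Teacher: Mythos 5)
Your proof is correct, and it takes a genuinely different route from the paper's. The paper proceeds by induction on $t$: it first invokes Lemma 1 of \cite{CWZW} to reduce to a single column, verifies $t=2$ by explicit enumeration (after normalizing ${\bf e}$ to $(q,\dots,q)$ or $(q,\dots,q,1)$ without loss of generality), and then in the inductive step conditions on the value of $f_{n+1}$ and splits into cases according to whether the residual sum is zero or not. You instead reduce the whole statement to evaluating
\[
M_t(c)=\#\Bigl\{(x_1,\dots,x_t)\in\{1,\dots,q-1\}^t:\ \textstyle\sum_h x_h\equiv c \pmod q\Bigr\},
\]
with $c\equiv -L({\bf e})$, and solve that by the recursion $M_t(c_0)=(q-1)^{t-1}-M_{t-1}(c_0)$ together with $M_t(0)=(q-1)M_{t-1}(c_0)$ (or the roots-of-unity filter). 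This buys several things: the column-independence that the paper imports from an external lemma falls out for free, the "without loss of generality" normalizations are replaced by an honest change of variables valid for every ${\bf e}$ and every column, and the dichotomy in the statement is explained structurally as $c\equiv 0$ versus $c\not\equiv 0$. The two points you flag as needing care are handled correctly: the sign $\varepsilon_h$ when $m\in\mathcal T$ is absorbed because negation permutes the nonzero residues, and in that case the determined coordinate $f_m$ together with the requirement $f_m\neq e_m$ still yields exactly the count $M_t(c)$ (each choice of $x_1,\dots,x_{t-1}$ determines a unique $x_t$, counted only if nonzero). One small step worth stating explicitly in a final write-up is the inductive verification that $M_{t-1}(c)$ takes a common value on all nonzero $c$, which you use when collapsing $\sum_{x_t\neq 0}M_{t-1}(c-x_t)$; it follows immediately from the same recursion since $\{c-x_t: x_t\neq 0\}=\mathbb{Z}_q\setminus\{c\}$ hits $0$ once and the remaining nonzero residues $q-2$ times.
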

\begin{proof}
Here mathematical induction will be used. Firstly, Lemma 1 in \cite{CWZW} showed that for any non-star entry $({\bf e}, n_{\bf e})$ appearing in $\mathbf{P}$, the occurrence number of ${\bf e}$ in each column where it appears is the same.
\begin{itemize}
\item When $t=2$, for each vector ${\bf e}\in[1:q]^m$, if ${\bf e}\in \mathcal{F}$, without loss of generality, assume that ${\bf e}=(q,q,\ldots,q)$, let us consider the occurrence number of ${\bf e}$ in each column where it appears, such as column ${\bf k}=(\{1,2\},(q,q))$. From Construction \ref{conPDAOA} we have
    \begin{eqnarray*}
    \mathbf{P}((1,q-1,q,\ldots,q),{\bf k})&=&({\bf e},1), \\
    \mathbf{P}((2,q-2,q,\ldots,q),{\bf k})&=&({\bf e},2),\\
     &\vdots&\\
     \mathbf{P}((q-1,1,q,\ldots,q),{\bf k})&=&({\bf e},q-1).
    \end{eqnarray*}
    So ${\bf e}$ occurs $q-1$ times in column ${\bf k}$.
    If ${\bf e}\notin \mathcal{F}$, without loss of generality, assume that ${\bf e}=(q,\ldots,q,1)$, let us consider the occurrence number of ${\bf e}$ in each column where it appears, such as column ${\bf k}=(\{1,2\},(q,q))$. From Construction \ref{conPDAOA} we have
    \begin{eqnarray*}
    \mathbf{P}((2,q-1,q,\ldots,q,1),{\bf k})&=&({\bf e},1),\\
     \mathbf{P}((3,q-2,q,\ldots,q,1),{\bf k})&=&({\bf e},2),\\
      &\vdots& \\
      \mathbf{P}((q-1,2,q,\ldots,q,1),{\bf k})&=&({\bf e},q-2).
    \end{eqnarray*}
     So ${\bf e}$ occurs $q-2=(q-1)-1$ times in column ${\bf k}$. The proposition holds when $t=2$.

\item Assume that the proposition holds when $t=n$, i.e., for each vector ${\bf e}\in[1:q]^m$, if ${\bf e}\in \mathcal{F}$, assume that ${\bf e}=(q,q,\ldots,q)$, ${\bf e}$ appears $(q-1)^{n-1}-(q-1)^{n-2}+\ldots+(-1)^n(q-1)$ times in column ${\bf k}=(\{1,2,\ldots,n\},(q,q,\ldots,q))$, which implies that the number of vectors $(f_1,\ldots,f_{n},q,\ldots,q)$ satisfying $f_i\in[1:q-1]$ for any $i\in[1:n]$ and $<\sum_{i=1}^{n}f_i>_q=q$ is $(q-1)^{n-1}-(q-1)^{n-2}+\ldots+(-1)^n(q-1)$. If ${\bf e}\notin \mathcal{F}$, assume that ${\bf e}=(q,q,\ldots,q,\alpha)$ where $\alpha\in[1:q-1]$, ${\bf e}$ appears $(q-1)^{n-1}-(q-1)^{n-2}+\ldots+(-1)^{n+1}$ times in column ${\bf k}=(\{1,2,\ldots,n\},(q,q,\ldots,q))$, which implies that the number of vectors $(f_1,\ldots,f_{n},q,\ldots,q,\alpha)$ satisfying $f_i\in[1:q-1]$ for any $i\in[1:n]$ and $<\sum_{i=1}^{n}f_i>_q=\alpha$ is $(q-1)^{n-1}-(q-1)^{n-2}+\ldots+(-1)^{n+1}$.

 \item When $t=n+1$, for each vector ${\bf e}\in[1:q]^m$, if ${\bf e}\in \mathcal{F}$, assume that ${\bf e}=(q,q,\ldots,q)$, the occurrence number of ${\bf e}$ in column ${\bf k}=(\{1,2,\ldots,n+1\},(q,q,\ldots,q))$ is equal to the number of vectors $(f_1,\ldots,f_{n+1},q,\ldots,q)$ satisfying $f_i\in[1:q-1]$ for any $i\in[1:n+1]$ and $<\sum_{i=1}^{n+1}f_i>_q=q$.
     For any $\beta\in[1:q-1]$, when $f_{n+1}=\beta$, the number of vectors $(f_1,\ldots,f_{n},\beta,q,\ldots,q)$ with $f_i\in[1:q-1]$ for any $i\in[1:n]$ and $<\sum_{i=1}^{n+1}f_i>_q=q$ is the number of vectors $(f_1,\ldots,f_{n},\beta,q,\ldots,q)$ satisfying $f_i\in[1:q-1]$ for any $i\in[1:n]$ and $<\sum_{i=1}^{n}f_i>_q=q-\beta$, which is $(q-1)^{n-1}-(q-1)^{n-2}+\ldots+(-1)^{n+1}$ by hypothesis. Hence, the occurrence number of ${\bf e}$ in column ${\bf k}$ is $(q-1)((q-1)^{n-1}-(q-1)^{n-2}+\ldots+(-1)^{n+1})=(q-1)^{n}-(q-1)^{n-1}+\ldots+(-1)^{n+1}(q-1)$.

     If ${\bf e}\notin \mathcal{F}$, assume that ${\bf e}=(q,\ldots,q,1)$, the occurrence number of ${\bf e}$ in column ${\bf k}=(\{1,2,\ldots,n+1\},(q,q,\ldots,q))$ is equal to the number of vectors $(f_1,\ldots,f_{n+1},q,\ldots,q,1)$ satisfying $f_i\in[1:q-1]$ for any $i\in[1:n+1]$ and $<\sum_{i=1}^{n+1}f_i>_q=1$.
     When $f_{n+1}=1$, the number of vectors $(f_1,\ldots,f_{n},1,q,\ldots,q,1)$ satisfying $f_i\in[1:q-1]$ for any $i\in[1:n]$ and $<\sum_{i=1}^{n+1}f_i>_q=1$ is equal to the number of vectors $(f_1,\ldots,f_{n},1,q,\ldots,q,1)$ satisfying $f_i\in[1:q-1]$ for any $i\in[1:n]$ and $<\sum_{i=1}^{n}f_i>_q=q$, which is $(q-1)^{n-1}-(q-1)^{n-2}+\ldots+(-1)^n(q-1)$ by hypothesis. For any $\beta\in[2:q-1]$, when $f_{n+1}=\beta$, the number of vectors $(f_1,\ldots,f_{n},\beta,q,\ldots,q,1)$ satisfying $f_i\in[1:q-1]$ for any $i\in[1:n]$ and $<\sum_{i=1}^{n+1}f_i>_q=1$ is equal to the number of vectors $(f_1,\ldots,f_{n},\beta,q,\ldots,q,1)$ with $f_i\in[1:q-1]$ for any $i\in[1:n]$ and $<\sum_{i=1}^{n}f_i>_q=q-\beta+1$, which is $(q-1)^{n-1}-(q-1)^{n-2}+\ldots+(-1)^{n+1}$ by hypothesis. Hence the occurrence number of ${\bf e}$ in column ${\bf k}$ is $(q-1)^{n-1}-(q-1)^{n-2}+\ldots+(-1)^n(q-1)+(q-2)((q-1)^{n-1}-(q-1)^{n-2}+\ldots+(-1)^{n+1})=(q-1)^{n}-(q-1)^{n-1}+\ldots+(-1)^{n+2}$. The proposition holds when $t=n+1$.
\end{itemize}
\end{proof}

Next we will prove that the PDA $\mathbf{P}$ generated by Construction \ref{conPDAOA} satisfies Condition \ref{proper1} with $\lambda=1$ when $t\geq 2$. Firstly, $\mathbf{P}$ satisfies the first item of Condition \ref{proper1} obviously.
Secondly, for any non-star entry $({\bf e}, n_{\bf e})$ in $\mathbf{P}$, define a mapping
\begin{eqnarray}
\label{fai}
\phi(({\bf e}, n_{\bf e}))=\left\{
\begin{array}{ll}
{\bf e} & \textrm{if}~{\bf e}\in \mathcal{F}, \\
\left(e_1,e_2,\ldots,e_{m-1},<\sum_{i=1}^{m-1}e_i>_q\right) & \textrm{otherwise},
\end{array}
\right.
\end{eqnarray}
then $\phi(({\bf e}, n_{\bf e}))\in\mathcal{F}$.
Moreover, if $\mathbf{P}({\bf f},(\mathcal{T},{\bf b}))=({\bf e}, n_{\bf e})$, we have ${\bf e}[\mathcal{T}]={\bf b}$ from \eqref{eq-putting integer}. If ${\bf e}\in \mathcal{F}$, we have $d(\phi(({\bf e}, n_{\bf e}))[\mathcal{T}],{\bf b})=d({\bf e}[\mathcal{T}],{\bf b})=0<t$, then $\mathbf{P}(\phi(({\bf e}, n_{\bf e})),(\mathcal{T},{\bf b}))=*$ from \eqref{eq-constr-PDA}. If ${\bf e}\notin \mathcal{F}$, we have $d(\phi(({\bf e}, n_{\bf e}))[\mathcal{T}],{\bf b})=d((e_1,e_2,\ldots,e_{m-1},<\sum_{i=1}^{m-1}e_i>_q)[\mathcal{T}],{\bf b})\leq 1<t$, then  $\mathbf{P}(\phi(({\bf e}, n_{\bf e})),(\mathcal{T},{\bf b}))=*$ from \eqref{eq-constr-PDA}. Hence, the row $\phi(({\bf e}, n_{\bf e}))$ is a star row for $({\bf e}, n_{\bf e})$.
For any ${\bf f}=(f_1,f_2,\ldots,f_{m})\in\mathcal{F}$, $B_{\bf f}$ denotes the set of non-star entries $({\bf e}, n_{\bf e})$ in $\mathbf{P}$ satisfying $\phi(({\bf e}, n_{\bf e}))={\bf f}$.
Next we will prove that $|B_{\bf f}|=(q-1)^t$ for each ${\bf f}\in\mathcal{F}$. From \eqref{fai}, $B_{\bf f}$ is a collection of $({\bf e}, n_{\bf e})$ with ${\bf e}={\bf f}$ or ${\bf e}=(f_1,\ldots,f_{m-1},\beta)$ with $\beta\neq f_{m}$. If ${\bf e}={\bf f}$, then ${\bf e}\in \mathcal{F}$, so ${\bf e}$ appears $(q-1)^{t-1}-(q-1)^{t-2}+\ldots+(-1)^t(q-1)$ times in each column where it appears from Proposition \ref{prooccurrence}. If ${\bf e}=(f_1,\ldots,f_{m-1},\beta)$ with $\beta\neq f_{m}$, then ${\bf e}\notin \mathcal{F}$, so ${\bf e}$ appears $(q-1)^{t-1}-(q-1)^{t-2}+\ldots+(-1)^{t+1}$ times in each column where it appears from Proposition \ref{prooccurrence}. Hence, $|B_{\bf f}|=(q-1)^{t-1}-(q-1)^{t-2}+\ldots+(-1)^t(q-1)+(q-1)\left((q-1)^{t-1}-(q-1)^{t-2}+\ldots+(-1)^{t+1}\right)=(q-1)^t$. $\mathbf{P}$ satisfies the second item of Condition \ref{proper1}.
Hence, when $t\geq 2$, the array $\mathbf{P}$ generated by Construction \ref{conPDAOA} satisfies Condition \ref{proper1} with $\lambda=1$.
\end{proof}

\section{Proof of Theorem \ref{multibs}}
\label{prmultibs}
In order to prove Theorem \ref{multibs}, we first construct a PDA satisfying Condition \ref{proper1} with $\lambda=1$.
\begin{construction}
\label{contru2}
For any positive integer $g$, let $q=\lceil\frac{g^2}{2}\rceil+g$ and $z=\lceil\frac{g^2}{2}\rceil$. Then a $q\times q$ array $\mathbf{P}$ is defined as
\begin{equation}
\label{constru2e}
\mathbf{P}(j,k)=\begin{cases}
* &\text{if} \ \ j\in[k-z+1:k]_q,\\
<j-(h-1)(g+2)>_q& \text{if} \ \ j=<k+h>_q,1\leq h\leq\lceil\frac{g}{2}\rceil, \\
<k-(g-h)(g+2)>_q& \text{if} \ \ j=<k+h>_q, \lceil\frac{g}{2}\rceil<h\leq g. \\
\end{cases}
\end{equation}
\end{construction}
\begin{example}
\label{exm2}
If $g=2$, then $q=\lceil\frac{g^2}{2}\rceil+g=4$ and $z=\lceil\frac{g^2}{2}\rceil=2$. The $4\times4$ array defined in \eqref{constru2e} is as follows, which is exactly the PDA in \eqref{pda2},
\begin{eqnarray*}
\mathbf{P}=\left(\begin{array}{cccc}
*&*&3&1\\
2&*&*&4\\
1&3&*&*\\
*&2&4&*
\end{array}\right).
\end{eqnarray*}
Let us consider the first column. When $k=1$, if $j\in[k-z+1:k]_q=[1-2+1:1]_4=\{<0>_4,<1>_4\}=\{4,1\}$, we have $\mathbf{P}(j,1)=*$ from \eqref{constru2e}; if $j=2$, from $j=<k+h>_q$ we have $h=1=\lceil\frac{g}{2}\rceil$, so $\mathbf{P}(2,1)=<j-(h-1)(g+2)>_q=<2>_4=2$; if $j=3$, from $j=<k+h>_q$ we have $h=2>\lceil\frac{g}{2}\rceil$, so $\mathbf{P}(3,1)=<k-(g-h)(g+2)>_q=<1>_4=1$. The other columns can be obtained similarly.
\hfill $\square$
\end{example}
\begin{lemma}
\label{lemma1}
The array $\mathbf{P}$ generated by Construction \ref{contru2} is a $g$-$(\lceil\frac{g^2}{2}\rceil+g,\lceil\frac{g^2}{2}\rceil+g,\lceil\frac{g^2}{2}\rceil,\lceil\frac{g^2}{2}\rceil+g)$ PDA satisfying Condition \ref{proper1} with $\lambda=1$.
\hfill $\square$
\end{lemma}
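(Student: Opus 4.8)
The plan is to check the requirements in the order: well-definedness and condition C1 of Definition~\ref{def-PDA}; explicit ``normal forms'' for the cells containing a given integer; condition C3; then C2 together with $g$-regularity; and finally Condition~\ref{proper1} with $\lambda=1$. The one device used throughout is to look at the residue $(j-k)\bmod q$ in $\{0,1,\dots,q-1\}$: the cell $\mathbf{P}(j,k)$ carries a star exactly when this residue lies in $\{0\}\cup\{g+1,\dots,q-1\}$ and an integer exactly when it lies in $\{1,\dots,g\}$. Since $z+g=q$ these two sets partition $\mathbb{Z}_q$, so the three clauses of Construction~\ref{contru2} do assign a unique value to every cell (when non-star, the index $h$ is forced by $j,k$, and $h\le\lceil g/2\rceil$ versus $h>\lceil g/2\rceil$ picks the clause), each column has exactly $z$ stars, which is C1, and exactly $g$ integer entries.

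Next I record where a fixed integer $s$ can occur. Solving $\mathbf{P}(j,k)=s$ in each clause shows: for every $h\in[1:\lceil g/2\rceil]$ there is one ``first-type'' occurrence, at $k\equiv s+(h-1)(g+1)-1$ and $j\equiv s+(h-1)(g+2)\pmod q$; and for every $h\in[\lceil g/2\rceil+1:g]$, writing $h'=g-h\in[0:\lfloor g/2\rfloor-1]$, one ``second-type'' occurrence, at $k\equiv s+h'(g+2)$ and $j\equiv s+h'(g+1)+g\pmod q$. In each case $j-k\equiv h\pmod q$, as it must.

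Condition C3 is the heart of the proof. Take two distinct cells equal to $s$, at $(j_1,k_1)$ and $(j_2,k_2)$; I must show $(j_1-k_2)\bmod q\notin\{1,\dots,g\}$ and $(j_2-k_1)\bmod q\notin\{1,\dots,g\}$ (this simultaneously forces $k_1\ne k_2$ and $j_1\ne j_2$, so a genuine $2\times2$ subarray exists). Writing $j_1-k_2=(j_1-k_1)+(k_1-k_2)\equiv h_1+(k_1-k_2)$ reduces the task to showing that $(k_1-k_2)\bmod q$ misses two length-$g$ windows of residues; and by the normal forms $k_1-k_2$ equals a difference of two numbers of the form $(h-1)(g+1)-1$ or $h'(g+2)$, hence is, up to a bounded correction, a small multiple of $g+1$ or $g+2$ whose multiplier is at most $\lceil g/2\rceil-1$ or $\lfloor g/2\rfloor-1$ in absolute value. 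The argument then splits into three cases (both first-type, both second-type, mixed), each with a sub-case on the sign of that multiplier, and in every branch the claim collapses to a numeric inequality of the shape $(\lceil g/2\rceil-1)(g+2)<q$ or $(\lfloor g/2\rfloor-1)(g+2)\le q-g-1$, verified by treating $g$ even and $g$ odd separately; this is exactly where $q=\lceil g^2/2\rceil+g$ is used, and where the parity split together with the two different multiplier bounds makes the bookkeeping delicate. I expect this step to be the main obstacle.

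With C3 in hand, C2 and $g$-regularity are immediate: each integer $s$ contributes one occurrence per (type, parameter) pair, i.e.\ $g$ in all, and C3 forbids two of them from sharing a column, so the $g$ columns they determine are pairwise distinct and $s$ occupies exactly $g$ cells. For Condition~\ref{proper1} with $\lambda=1$, item~1 is vacuous because $F_1/\lambda=q$ and $\langle j\rangle_q=j$ on $[1:q]$. For item~2 take the bijection $\phi(s)=\langle s-1\rangle_q$ of $[1:q]$, so that $B_j=\{\langle j+1\rangle_q\}$ and $|B_j|=1=S_1/F_1$ automatically; it then remains only to check that row $\langle s-1\rangle_q$ is a star row for $s$, i.e.\ that $\big((s-1)-k\big)\bmod q\notin\{1,\dots,g\}$ for each column $k$ containing $s$, which after substituting the two normal forms for such $k$ is again a ``bounded multiple of $g+1$ or $g+2$ avoids $\{1,\dots,g\}$ modulo $q$'' estimate handled by the same even/odd split. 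This completes the verification.
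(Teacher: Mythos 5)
Your proposal follows essentially the same route as the paper's own proof: the same parametrization of integer cells by the offset $h$ with $j=<k+h>_q$, the same three-way case split (both first-type, both second-type, mixed) with sign subcases reducing C3 to quadratic-in-$g$ inequalities against $q=\lceil g^2/2\rceil+g$ handled by an even/odd split, the same $g$ explicit occurrences of each $s$ for regularity, and the same assignment $\phi(s)=<s-1>_q$ with $B_j=\{<j+1>_q\}$ for Condition~\ref{proper1}. The C3 and star-row inequalities you set up but defer do close exactly as you predict (the relevant residues land in $\{0\}\cup[g+1:q-1]$ in every branch), so the only thing separating your write-up from the paper's is the unexecuted but correctly specified arithmetic.
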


\begin{proof}
Firstly, we will prove that the array $\mathbf{P}$ generated by Construction \ref{contru2} is a $g$-$(\lceil\frac{g^2}{2}\rceil+g,\lceil\frac{g^2}{2}\rceil+g,\lceil\frac{g^2}{2}\rceil,\lceil\frac{g^2}{2}\rceil+g)$ PDA.
\begin{itemize}
\item Since $\mathbf{P}(j,k)=*$ if $j\in[k-z+1:k]_q$ from \eqref{constru2e}, there are $z=\lceil\frac{g^2}{2}\rceil$ stars in each column. Condition C$1$ of Definition \ref{def-PDA} holds.
\item For any $s\in[1:\lceil\frac{g^2}{2}\rceil+g]$, we have $\mathbf{P}(s,<s-1>_q)=s$ from \eqref{constru2e}. Condition C$2$ of Definition \ref{def-PDA} holds.
\item For any two distinct entries $\mathbf{P}(j_1,k_1)$ and $\mathbf{P}(j_2,k_2)$, if $\mathbf{P}(j_1,k_1)=\mathbf{P}(j_2,k_2)=s\in[1:\lceil\frac{g^2}{2}\rceil+g]$, we will prove $\mathbf{P}(j_1,k_2)=\mathbf{P}(j_2,k_1)=*$. Let $j_1=<k_1+h_1>_q$ and $j_2=<k_2+h_2>_q$. Since     \begin{equation}
    \label{eqstar}
    \mathbf{P}(j,k)=* \ \ \text{if} \ \ j\in\left[k-\left\lceil\frac{g^2}{2}\right\rceil+1:k\right]_q=\left[k+g+1:k+\left\lceil\frac{g^2}{2}\right\rceil+g\right]_q
    \end{equation}
    from \eqref{constru2e}, we have $h_1,h_2\in[1:g]$.
    \begin{itemize}
    \item If $h_1,h_2\in[1:\lceil\frac{g}{2}\rceil]$, then $s=<j_1-(h_1-1)(g+2)>_q=<j_2-(h_2-1)(g+2)>_q$ from \eqref{constru2e}. Consequently, we have  $<j_1-(h_1-1)(g+2)>_q=<k_2+h_2-(h_2-1)(g+2)>_q$, then
        \begin{equation}
        \label{j1k2}
        j_1=<k_2+h_1(g+2)-h_2(g+1)>_q.
        \end{equation}
        In addition, we have $h_1\neq h_2$. Otherwise if $h_1=h_2$, we have $j_1=j_2$ and $k_1=k_2$, which contradicts the hypothesis that $\mathbf{P}(j_1,k_1)$ and $\mathbf{P}(j_2,k_2)$ are two distinct entries.
        \begin{itemize}
        \item[$\diamond$] If $h_1>h_2$, then
        \begin{align*}
        h_1(g+2)-h_2(g+1)&\geq (h_2+1)(g+2)-h_2(g+1) \\& = g+2+h_2  \\&> g+1.
        \end{align*}
        On the other hand,
       \begin{align*}
       h_1(g+2)-h_2(g+1)&\leq \left\lceil\frac{g}{2}\right\rceil(g+2)-(g+1) \\ & \leq \frac{g+1}{2}(g+2)-(g+1) \\&=\frac{g^2}{2}+\frac{g}{2}\\& < \left\lceil\frac{g^2}{2}\right\rceil+g.
       \end{align*}
       So we have $\mathbf{P}(j_1,k_2)=*$ from \eqref{eqstar} and \eqref{j1k2}.
    \item[$\diamond$] If $h_1<h_2$, then
        \begin{align*}
       h_2(g+1)-h_1(g+2)& \leq \left\lceil\frac{g}{2}\right\rceil(g+1)-(g+2) \\& \leq \frac{(g+1)^2}{2}-(g+2)\\&=\frac{g^2}{2}-\frac{3}{2} \\& <\left\lceil\frac{g^2}{2}\right\rceil.
        \end{align*}
        On the other hand,
        \begin{align*}
        h_2(g+1)-h_1(g+2)&\geq (h_1+1)(g+1)-h_1(g+2) \\& =g+1-h_1 \\&>0.
        \end{align*}
        So we have $\mathbf{P}(j_1,k_2)=*$ from \eqref{eqstar} and \eqref{j1k2}.
        \end{itemize}
    \item If $h_1,h_2\in[\lceil\frac{g}{2}\rceil+1:g]$, then $s=<k_1-(g-h_1)(g+2)>_q=<k_2-(g-h_2)(g+2)>_q$ from \eqref{constru2e}. Consequently, we have  $<j_1-h_1-(g-h_1)(g+2)>_q=<k_2-(g-h_2)(g+2)>_q$, then
        \begin{equation}
        \label{j1k21}
        j_1=<k_2+h_2(g+2)-h_1(g+1)>_q.
        \end{equation}
        In addition, we also have $h_1\neq h_2$.
        \begin{itemize}
        \item[$\diamond$] If $h_1>h_2$, then
             \begin{align*}
                      h_1(g+1)-h_2(g+2)&\leq g(g+1)-\left(\left\lceil\frac{g}{2}\right\rceil+1\right)(g+2)\\& \leq g(g+1)-\left(\frac{g}{2}+1\right)(g+2)\\&=\frac{g^2}{2}-g-2 \\&< \left\lceil\frac{g^2}{2}\right\rceil.
             \end{align*}
        On the other hand,
        \begin{align*}
        h_1(g+1)-h_2(g+2)&\geq (h_2+1)(g+1)-h_2(g+2) \\&= g+1-h_2 \\&>0.
        \end{align*}
         So we have $\mathbf{P}(j_1,k_2)=*$ from \eqref{eqstar} and \eqref{j1k21}.
        \item[$\diamond$] If $h_1<h_2$, then
        \begin{align*}
         h_2(g+2)-h_1(g+1)&\leq g(g+2)-\left(\left\lceil\frac{g}{2}\right\rceil+1\right)(g+1) \\ & \leq g(g+2)-\left(\frac{g}{2}+1\right)(g+1) \\& =\frac{g^2}{2}+\frac{g}{2}-1 \\& <\left\lceil\frac{g^2}{2}\right\rceil+g.
        \end{align*}
        On the other hand,
        \begin{align*}
       h_2(g+2)-h_1(g+1)& \geq (h_1+1)(g+2)-h_1(g+1)\\&=h_1+g+2\\&>g+1.
        \end{align*}
        So we have $\mathbf{P}(j_1,k_2)=*$ from \eqref{eqstar} and \eqref{j1k21}.
        \end{itemize}

    \item If $h_1\in[1:\lceil\frac{g}{2}\rceil]$ and $h_2\in[\lceil\frac{g}{2}\rceil+1:g]$, then $s=<j_1-(h_1-1)(g+2)>_q=<k_2-(g-h_2)(g+2)>_q$ from \eqref{constru2e}. Consequently,
        \begin{equation}
        \label{j1k23}
        j_1=<k_2+(h_1+h_2-g-1)(g+2)>_q.
        \end{equation}
        \begin{itemize}
        \item[$\diamond$] If $h_1+h_2> g+1$, we have $h_1+h_2-g-1\geq1$, then
        \begin{align*}
         g+2&\leq(h_1+h_2-g-1)(g+2)\leq \left(\left\lceil\frac{g}{2}\right\rceil+g-g-1\right)(g+2) \\& \leq \left(\frac{g+1}{2}-1\right)(g+2)=\frac{g^2}{2}+\frac{g}{2}-1 \\& <\left\lceil\frac{g^2}{2}\right\rceil+g.
        \end{align*}
        So we have $\mathbf{P}(j_1,k_2)=*$ from \eqref{eqstar} and \eqref{j1k23}.
        \item[$\diamond$] If $h_1+h_2\leq g+1$, then
        \begin{align*}
        0&\leq(g+1-h_1-h_2)(g+2) \\& \leq \left(g+1-1-\left\lceil\frac{g}{2}\right\rceil-1\right)(g+2) \\& \leq \left(\frac{g}{2}-1\right)(g+2)=\frac{g^2}{2}-2 \\&<\left\lceil\frac{g^2}{2}\right\rceil.
        \end{align*}
        So we have $\mathbf{P}(j_1,k_2)=*$ from \eqref{eqstar} and \eqref{j1k23}.
        \end{itemize}

    \end{itemize}
    Similarly, we can prove that $\mathbf{P}(j_2,k_1)=*$. Condition C3 of Definition \ref{def-PDA} holds.
\end{itemize}

Next we will prove that for any $s\in[1:\lceil\frac{g^2}{2}\rceil+g]$, $s$ appears exactly $g$ times in $\mathbf{P}$. For any $h\in[1:g]$, if $h\leq \lceil\frac{g}{2}\rceil$, we have $$\mathbf{P}(<(h-1)(g+2)+s>_q, <(h-1)(g+2)+s-h>_q)=s$$ from \eqref{constru2e}; if $h> \lceil\frac{g}{2}\rceil$, we have $$\mathbf{P}(<(g-h)(g+2)+s+h>_q, <(g-h)(g+2)+s>_q)=s$$ from \eqref{constru2e}. So $s$ appears exactly $g$ times in $\mathbf{P}$.
Hence, $\mathbf{P}$ is a $g$-$(\lceil\frac{g^2}{2}\rceil+g,\lceil\frac{g^2}{2}\rceil+g,\lceil\frac{g^2}{2}\rceil,\lceil\frac{g^2}{2}\rceil+g)$ PDA.

Secondly, we will prove that $\mathbf{P}$ satisfies Condition \ref{proper1} with $\lambda=1$. $\mathbf{P}$ satisfies the first item of Condition \ref{proper1} obviously. For each $s\in[1:\lceil\frac{g^2}{2}\rceil+g]$, let $\phi(s)=<s-1>_q$, we will prove that the $\phi(s)$-th row is a star row for $s$, i.e., for any $j,k\in[1:q]$, if $\mathbf{P}(j,k)=s$, then $\mathbf{P}(\phi(s),k)=*$. Let $j=<k+h>_q$, then $h\in[1:g]$.
\begin{itemize}
\item If $h\in[1:\lceil\frac{g}{2}\rceil]$, then $s=<j-(h-1)(g+2)>_q=<k+h-(h-1)(g+2)>_q$ from \eqref{constru2e}. Consequently, $$\phi(s)=<s-1>_q=<k-(g+1)(h-1)>_q.$$ Since
\begin{align*}
0&\leq (g+1)(h-1) \\& \leq (g+1)\left(\left\lceil\frac{g}{2}\right\rceil-1\right) \\& \leq (g+1)\left(\frac{g+1}{2}-1\right) \\& =\frac{g^2}{2}-\frac{1}{2} \\& <\left\lceil\frac{g^2}{2}\right\rceil,
\end{align*}
 we have $\mathbf{P}(\phi(s),k)=*$ from \eqref{eqstar}.
\item If $h\in[\lceil\frac{g}{2}\rceil+1:g]$, then $s=<k-(g-h)(g+2)>_q$ from \eqref{constru2e}. Consequently, $$\phi(s)=<s-1>_q=<k-(g-h)(g+2)-1>_q.$$ Since
\begin{align*}
1&\leq(g-h)(g+2)+1 \\& \leq \left(g-\left\lceil\frac{g}{2}\right\rceil-1\right)(g+2)+1 \\& \leq\left(g-\frac{g}{2}-1\right)(g+2)+1 \\& =\frac{g^2}{2}-1 \\& <\left\lceil\frac{g^2}{2}\right\rceil,
\end{align*}
 we have $\mathbf{P}(\phi(s),k)=*$ from \eqref{eqstar}.
\end{itemize}
So the $\phi(s)$-th row is a star row for $s$.

For any $j\in[1:q]$, the set of integers $s$ satisfying $\phi(s)=j$ is denoted by $B_j$, then $B_j=\{<j+1>_q\}$. So $|B_j|=1$ for each $j\in[1:q]$. $\mathbf{P}$ satisfies the second item of Condition \ref{proper1}. Hence, $\mathbf{P}$ satisfies Condition \ref{proper1} with $\lambda=1$.
\end{proof}

 From Lemma \ref{lemma1}, for any positive integer $g$, there exists a $g$-$(\lceil\frac{g^2}{2}\rceil+g,\lceil\frac{g^2}{2}\rceil+g,\lceil\frac{g^2}{2}\rceil,\lceil\frac{g^2}{2}\rceil+g)$ PDA satisfying Condition \ref{proper1} with $\lambda=1$, then Theorem \ref{multibs} is proved from Corollary \ref{cothm1}.

\section{Proof of Proposition \ref{proP1k}}
\label{prproP1k}
\begin{proof}
Firstly, we will prove that the array $\mathbf{P}_1$ defined in \eqref{eqPDAtoDPDA} is a $(g-1)$-$(K_1,(g-1)F_1,(g-1)Z_1,gS_1)$ PDA.
\begin{itemize}
\item If $\mathbf{P}(j,k)=*$, we have $\mathbf{P}_1(j,k)=\mathbf{P}_1(j+F_1,k)=\ldots=\mathbf{P}_1(j+(g-2)F_1,k)=*$ from \eqref{eqPDAtoDPDA}. Since there are $Z_1$ stars in each column of $\mathbf{P}$, there are $Z=(g-1)Z_1$ stars in each column of $\mathbf{P}_1$. The condition C$1$ of Definition \ref{def-PDA} holds.
\item For any $s'\in[1:gS_1]$, let $<s'>_g=v$ and $\frac{s'-v}{g}+1=s$, then $s\in[1:S_1]$ and $s$ appears $g$ times in $\mathbf{P}$, assume that $\mathbf{P}(j_{s,1},k_{s,1})=\ldots=\mathbf{P}(j_{s,g},k_{s,g})=s$ with $j_{s,1}<\ldots<j_{s,g}$, then we have $T_s=\{(j_{s,1},k_{s,1}),\ldots,(j_{s,g},k_{s,g})\}$ from \eqref{Ts}. Consequently, we have
    \begin{alignat*}{3}
    &\mathbf{P}_1(j_{s,v},k_{s,v})=s',\\
    &\mathbf{P}_1(j_{s,v-1}+F_1,k_{s,v-1})=s',\\
    &\ \ \ \ \ \vdots \\
    &\mathbf{P}_1(j_{s,1}+(v-1)F_1,k_{s,1})=s',\\
    &\mathbf{P}_1(j_{s,g}+vF_1,k_{s,g})=s',\\
    &\mathbf{P}_1(j_{s,g-1}+(v+1)F_1,k_{s,g-1})=s',\\
    &\ \ \ \ \ \vdots \\
    &\mathbf{P}_1(j_{s,v+2}+(g-2)F_1,k_{s,v+2})=s'
    \end{alignat*}
    from \eqref{eqPDAtoDPDA}. That is, each integer in $[1:gS_1]$ occurs at least $g-1$ times in $\mathbf{P}_1$. The condition C2 of Definition \ref{def-PDA} holds. Moreover, since the average occurrence number of each integer in $\mathbf{P}_1$ is $\frac{K_1(g-1)(F_1-Z_1)}{gS_1}=g-1$, each integer in $[1:gS_1]$ occurs exactly $g-1$ times in $\mathbf{P}_1$.
    \item For any two distinct entries $\mathbf{P}_1(j_1,k_1)$ and $\mathbf{P}_1(j_2,k_2)$, if $\mathbf{P}_1(j_1,k_1)=\mathbf{P}_1(j_2,k_2)=s'\in[1:gS_1]$, we will show that $\mathbf{P}_1(j_1,k_2)=\mathbf{P}_1(j_2,k_1)=*$. Let $<s'>_g=v$ and $\frac{s'-v}{g}+1=s$, then we have $\mathbf{P}(<j_1>_{F_1},k_1)=\mathbf{P}(<j_2>_{F_1},k_2)=s$ from \eqref{eqPDAtoDPDA}, which implies $<j_1>_{F_1}=<j_2>_{F_1}, k_1=k_2$ or  $\mathbf{P}(<j_1>_{F_1},k_2)=\mathbf{P}(<j_2>_{F_1},k_1)=*$. If $<j_1>_{F_1}=<j_2>_{F_1}, k_1=k_2$ hold, let $T_s[\eta]=(<j_1>_{F_1},k_1)=(<j_2>_{F_1},k_2)$, $i_1=\lceil\frac{j_1}{F_1}\rceil$ and $i_2=\lceil\frac{j_2}{F_1}\rceil$, then we have $v=\Phi_{i_1-1}((1,2,\ldots,g))[\eta]=\Phi_{i_2-1}((1,2,\ldots,g))[\eta]$ from \eqref{eqPDAtoDPDA}, which implies $i_1=i_2$, i.e., $\lceil\frac{j_1}{F_1}\rceil=\lceil\frac{j_2}{F_1}\rceil$. Since $<j_1>_{F_1}=<j_2>_{F_1}$, we have $j_1=j_2$, which contradicts the hypothesis that $\mathbf{P}_1(j_1,k_1)$ and $\mathbf{P}_1(j_2,k_2)$ are two distinct entries. Hence, $\mathbf{P}(<j_1>_{F_1},k_2)=\mathbf{P}(<j_2>_{F_1},k_1)=*$ holds, which leads to $\mathbf{P}_1(j_1,k_2)=\mathbf{P}_1(j_2,k_1)=*$ from \eqref{eqPDAtoDPDA}. The condition C3 of Definition \ref{def-PDA} holds.
\end{itemize}
Hence, $\mathbf{P}_1$ is a $(g-1)$-$(K_1,(g-1)F_1,(g-1)Z_1,gS_1)$ PDA.

Next we will show that $\mathbf{P}_1$ satisfies Condition \ref{proper1} with $\lambda=g-1$. If $\mathbf{P}_1(j,k)=*$, we have $\mathbf{P}(<j>_{F_1},k)=*$ then  $\mathbf{P}_1(<j>_{F_1},k)=*$ from \eqref{eqPDAtoDPDA}. Conversely, if $\mathbf{P}_1(<j>_{F_1},k)=*$, we have $\mathbf{P}(<j>_{F_1},k)=*$ then $\mathbf{P}_1(j,k)=*$ from \eqref{eqPDAtoDPDA}. $\mathbf{P}_1$ satisfies the first item of Condition \ref{proper1}.

For any $s'\in[1:gS_1]$, let $v=<s'>_g$ and $s=\frac{s'-v}{g}+1$, then $v\in [1:g]$ and $s\in[1:S_1]$. Let $\phi(s')=j_{s,<v+1>_g}$, we will show that the $\phi(s')$-th row of $\mathbf{P}_1$ is a star row for $s'$.
For any $\mathbf{P}_1(j,k)=s'$ where $j\in[1:F_1]$ and $k\in[1:K_1]$, we have $\mathbf{P}(<j>_{F_1},k)=s$ from \eqref{eqPDAtoDPDA}. Let $(<j>_{F_1},k)=T_s[\eta]$, i.e., $k=k_{s,\eta}$, let $i=\lceil\frac{j}{F_1}\rceil$, we have $$v=\Phi_{i-1}((1,2,\ldots,g))[\eta]=\begin{cases} \eta+i-1, \ &\text{if} \ \eta\leq g-i+1,\\  \eta-g+i-1, \ &\text{otherwise},\end{cases}$$ which implies $<v+1>_g\neq \eta$ since $i\in[1:g-1]$. Consequently, we have $\mathbf{P}_1(\phi(s'),k)=\mathbf{P}_1(j_{s,<v+1>_g},k_{s,\eta})=*$ from the condition C3 of Definition \ref{def-PDA}. Hence, the $\phi(s')$-th row of $\mathbf{P}_1$ is a star row for $s'$.

For any $j\in[1:F_1]$, let $B_j=\{s'|\Phi(s')=j,s'\in[1:gS_1]\}$, we will show that $|B_j|=\frac{gS_1}{F_1}$.
For any $i\in[1:g-1]$, for any integer $s'$ in the $(j+(i-1)F_1)$-th row of $\mathbf{P}_1$, say $\mathbf{P}_1(j+(i-1)F_1,k)=s'$, let $v=<s'>_g$ and $s=\frac{s'-v}{g}+1$, then $\mathbf{P}(j,k)=s$ from \eqref{eqPDAtoDPDA}. Let $T_{s}[\eta]=(j,k)$,
$v'=<\eta-1>_g$, $\eta'$ be the integer satisfying $\Phi_{i-1}((1,2,\ldots,g))[\eta']=v'$ and $(j',k')=T_{s}[\eta']$, then $\mathbf{P}_1(j'+(i-1)F_1,k')=(s-1)g+v'$ from \eqref {eqPDAtoDPDA}. Moreover, we have $\phi((s-1)g+v')=j_{s, <v'+1>_g}=j_{s, \eta}=j$. It implies that for any integer $s'$ in the $(j+(i-1)F_1)$-th row of $\mathbf{P}_1$, there exists an integer $(s-1)g+v'$ such that $(s-1)g+v'\in B_j$.

For any $j\in[1:F_1]$ and any $i\in[1:g-1]$, for any two different integers in the $(j+(i-1)F_1)$-th row of $\mathbf{P}_1$, say $\mathbf{P}_1(j+(i-1)F_1,k_1)=s'_1$ and $\mathbf{P}_1(j+(i-1)F_1,k_2)=s'_2$, let $v_1=<s'_1>_g$, $s_1=\frac{s'_1-v_1}{g}+1$, $v_2=<s'_2>_g$ and $s_2=\frac{s'_2-v_2}{g}+1$, then we have
\begin{equation}
\label{s1s2}
\mathbf{P}(j,k_1)=s_1 \ \text{and} \ \mathbf{P}(j,k_2)=s_2
\end{equation}
from \eqref{eqPDAtoDPDA}. Let $T_{s_1}[\eta_1]=(j,k_1)$, $T_{s_2}[\eta_2]=(j,k_2)$, $v_1'=<\eta_1-1>_g$ and $v_2'=<\eta_2-1>_g$, then both $(s_1-1)g+v_1'$ and $(s_2-1)g+v_2'$ belong to $B_{j}$. Since $k_1\neq k_2$, we have $s_1\neq s_2$ from \eqref{s1s2} and the condition C3 of Definition \ref{def-PDA}. Without loss of generality, assume that $s_1<s_2$, then we have $(s_1-1)g+v_1'\neq (s_2-1)g+v_2'$. Otherwise if $(s_1-1)g+v_1'=(s_2-1)g+v_2'$, we have $v_1'=(s_2-s_1)g+v'_2\geq g+v'_2>g$, which contradicts $v'_1\in[1:g]$. It implies that different integers in the $(j+(i-1)F_1)$-th row correspond to different integers belonging to $B_j$.

On the other hand, since $\mathbf{P}$ satisfies Condition \ref{proper2}, i.e., each row has the same number of stars, assume that there are $t_1$ stars in each row of $\mathbf{P}$, then there are $K_1-t_1$ integers in each row of $\mathbf{P}$. Consequently, there are $K_1-t_1$ integers in each row of $\mathbf{P}_1$ from \eqref {eqPDAtoDPDA}. So we have $|B_j|\geq K_1-t_1$ for any $j\in[1:F_1]$. Since $\sum_{j=1}^{F_1}|B_j|=gS_1=\frac{(g-1)F_1(K_1-t_1)}{g-1}=F_1(K_1-t_1)$, we have $\frac{\sum_{j=1}^{F_1}|B_j|}{F_1}=K_1-t_1$, then $|B_j|=K_1-t_1=\frac{gS_1}{F_1}$. $\mathbf{P}_1$ satisfies the second item of Condition \ref{proper1}. Hence, $\mathbf{P}_1$ satisfies Condition \ref{proper1} with $\lambda=g-1$.

\end{proof}

\bibliographystyle{IEEEtran}
\bibliography{reference}

\end{document}